\definecolor{vertfonce}{rgb}{0.20, 0.46, 0.25}
\definecolor{rougefonce}{rgb}{0.64, 0.09, 0.20}
\numberwithin{equation}{section}
\newcommand{\N}{\mathbb{N}}
\newcommand{\Z}{\mathbb{Z}}
\newcommand{\R}{\mathbb{R}}
\newcommand{\C}{\mathbb{C}}
\newcommand{\dd}{\mathrm{d}}
\newcommand{\G}{\mathbf G}
\newcommand{\T}{\mathbf T}
\newcommand{\one}{\mathbf{1}}
\newcommand{\X}{\mathbf X}
\newcommand{\M}{\mathbf M}
\newcommand{\Y}{\mathbf Y}
\newcommand{\Lh}{\mathscr L_{h}}
\newtheorem{theorem}{Theorem}
\newtheorem{lemma}[theorem]{Lemma}
\newtheorem{definition}[theorem]{Definition}
\title{Magnetic tunneling between disc-shaped obstacles}
\author[S. Fournais]{S\o ren Fournais}
\address[S. Fournais]{Department of Mathematics, University of Copenhagen, Universitetsparken 5, DK-2100 Copenhagen \O, Denmark}
\email{fournais@math.ku.dk}
\author[L. Morin]{Léo Morin}
\address[L. Morin]{Department of Mathematics, University of Copenhagen, Universitetsparken 5, DK-2100 Copenhagen \O, Denmark}
\email{lpdm@math.ku.dk }
\begin{document}

\begin{abstract}
In this paper we derive formulae for the semiclassical tunneling in the presence of a constant magnetic field in $2$ dimensions. The `wells' in the problem are identical discs with Neumann boundary conditions, so we study the magnetic Neumann Laplacian in the complement of a set of discs. We provide a reduction method to an interaction matrix, which works for a general configuration of obstacles. When there are two discs, we deduce an asymptotic formula for the spectral gap. When the discs are placed along a regular lattice, we derive an effective operator which gives rise to the famous Harper's equation.
Main challenges in this problem compared to recent results on magnetic tunneling are the fact that one-well ground states have non-trivial angular momentum which depends on the semiclassical parameter, and the existence of eigenvalue crossings.
\end{abstract}
\maketitle
\section{Main results}

\subsection{Introduction}

This article is devoted to the spectral analysis of the magnetic Laplacian
\[\Lh = \Big(-ih \partial_{x_1} + \frac{B x_2}{2} \Big)^2 + \Big(-ih \partial_{x_2} - \frac{B x_1}{2} \Big)^2,  \]
acting on domains of $\R^2$ of the form
\[ \Omega_V = \R^2 \setminus \Big( \bigcup_{\alpha \in V} D(\alpha,R) \Big), \]
where $R>0$ is a fixed radius, and $V$ is a discrete set of points. We assume that the obstacles $D(\alpha,R)$ are disjoint, and we denote by $\Lh^V$ the magnetic Neumann realization of $\Lh$ on $\Omega_V$.

When the magnetic field is very large -- which is equivalent to the semiclassical limit $h \to 0^+$ -- we know that there are bound states concentrating along the boundaries of the obstacles, which decay exponentially fast in the distance to the obstacles. In fact, the low-lying spectrum of $\Lh^V$ is a superposition of the spectra of each single-obstacle problem\footnote{When $V=\lbrace \alpha \rbrace$, we write $\Lh^\alpha$ instead of $\Lh^{\lbrace \alpha \rbrace}$ for simplicity.} $\Lh^\alpha$, up to exponentially small corrections. The aim of this article is to estimate these small terms, which are due to the interactions between the obstacles. This is an example of the tunneling effect in the presence of a magnetic field.
Tunneling with magnetic field has given rise to recent investigations in various contexts. It is especially interesting because the magnetic field affects tunneling in a different way than the standard tunneling between potential wells, which is well understood and explained in \cite{H88,HS} for instance. Magnetic tunneling is only fully understood in few cases, and only in dimension two. In several models, tunneling occurs along a curve on which an effective one dimensional operator can be derived (see \cite{BHR22} along the boundary of a domain, \cite{FHK22} along a discontinuity curve of the magnetic field, \cite{AA23} along a vanishing curve of the magnetic field). Other works \cite{FSW22,FMR,HK22,HKS23,M24} rely on radiality assumptions on the single well problem. This article is inspired by \cite{FMR,M24} where a sharp tunneling formula -- i.e. an estimate of the spectral gap -- between two radial magnetic wells is obtained. The non radial situation is still not understood.

Frank \cite{F} considered a similar problem of magnetic tunneling between obstacles. He deals with non-radial obstacles which are distributed along a periodic lattice, and obtains an upper bound on tunneling which is exponentially small but not sharp. One of the difficulties in this setting is the infinite number of interactions. His work was another motivation for us, and we indeed prove below sharp tunneling estimates in the case of disks, using the methods of \cite{FMR,M24} which are inspired by \cite{HS}. A related problem is the tight-binding reduction studied in \cite{SW22}. Some of our techniques to deal with infinitely many obstacles are inspired by \cite{F,SW22}. Note that, as in \cite{SW22}, our analysis allows for arbitrary configurations of obstacles, which do not have to satisfy any kind of periodicity.

The case of periodic obstacles is also similar to the setting of \cite{HS88}, where the tunneling is induced by a periodic potential in the presence of a homogeneous magnetic field. However in that work, the magnetic field has to be small enough, so that the tunneling effect is dominated by the potential. In \cite{HS88} it is also underlined how this continuous periodic model is effectively described by the Harper equation (a special case of the Almost-Mathieu operator). We recall this link in Section \ref{sec.harper} below.

Finally, we also mention the very recent work \cite{HK24}, where tunneling between two radial electric wells with Aharonov-Bohm potential is studied. In this setting, the one-well operator has the same structure as ours. The tunneling formula is similar to our Theorems \ref{thm.two.obstacles} and \ref{thm.two.obstacles.2}. However, the authors do not consider more than two wells.

\medskip
Estimating the interaction between the obstacles $\{ D(\alpha,R)\}_{\alpha \in V}$ requires a good understanding of the single-obstacle operator $\Lh^\alpha$. First of all, these operators are unitarily equivalent to each other. Indeed, if $\tau_\alpha^B$ denotes the magnetic translation acting on $\Psi \in L^2(\R^2)$ as\footnote{Here and in the rest of the article $\alpha \wedge x = \alpha_1 x_2 - \alpha_2 x_1$ is the standard wedge product on $\R^2$.}
\[ \tau_\alpha^B \Psi (x) = e^{\frac{iB}{2h} \alpha \wedge x} \Psi(x-\alpha) ,\]
then we have
\begin{equation}\label{eq.Lh.conjugate}
 (\tau_\alpha^B)^* \Lh^\alpha \tau_\alpha^B = \Lh^0.
 \end{equation}
We also mention the main properties of $\tau_\alpha^B$,
\[ \tau_\alpha^B \tau_\beta^B = e^{\frac{iB}{2h} \alpha \wedge \beta} \tau_{\alpha+\beta}^B = e^{\frac{iB}{h} \alpha \wedge \beta} \tau_{\beta}^B \tau_\alpha^B, \quad {\rm{and}} \quad (\tau_\alpha^B)^* = \tau_{-\alpha}^B.\]
From \eqref{eq.Lh.conjugate} we deduce that $\Lh^\alpha$ has the same spectrum as $\Lh^0$ and that the corresponding eigenfunctions are related by magnetic translations. The spectrum of $\Lh^0$ has non-trivial dependence on $B$ -- or equivalently on $h$. The first eigenvalues are described in \cite{FH}. They depend on the following oscillating function of $h$, 
\begin{equation}\label{def.eh}
e(h) = \inf_{m \in \Z} | \xi_h - m |, \qquad \xi_h = \frac{BR^2}{2 h} + \sqrt{\frac{\Theta_0 BR^2}{ h}} + \mathcal C_2,
\end{equation}
where $\Theta_0 \in (0,1)$ and $\mathcal C_2 \in \R$ are universal constants. To emphasize this dependence, we introduce the following notion.
\begin{definition}
Let $e_0 \in (-\frac 12, \frac 12]$. We say that $h \to 0$ along an $e_0$-sequence if $h$ follows a sequence $(h_n)_{n \in \N}$ with $\lim_{n \rightarrow 0} h_n = 0$, such that
\[ \xi_{h_n} - n \underset{n \to \infty}{\longrightarrow} e_0. \]
\end{definition}
In particular, if $h \to 0$ along an $e_0$-sequence then $e(h) \to |e_0|$. With this notation we have the following description of the first eigenvalues of $\Lh^0$.

\begin{theorem}[ \cite{FH} Spectrum of the single-obstacle operator]\label{thm.single}
There exist universal constants $\Theta_0 \in (0,1)$, $\mathcal C_1 >0$ and $\mathcal C_0$, $\mathcal C_2 \in \mathbb R$ such that the following holds. For all $B,R>0$ and $e_0 \in (-\frac 12, \frac 12]$ we have:
\begin{enumerate}
\item The first eigenvalue $\mu_1$ of $\Lh^0$ satisfies
\[ \mu_1 = \Theta_0 B h + \frac{ \mathcal C_1 \sqrt{B}}{R} h^{\frac 32} + \frac{3 \mathcal C_1 \sqrt{\Theta_0}}{R^2} h^2 \big( e_0^2 + \mathcal C_0 \big) + o(h^2), \]
when $h \to 0$ along an $e_0$-sequence.
\item The second eigenvalue $\mu_2$ of $\Lh^0$ satisfies
\[ \mu_2 = \Theta_0 B h + \frac{ \mathcal C_1 \sqrt{B}}{R} h^{\frac 32} + \frac{3 \mathcal C_1 \sqrt{\Theta_0}}{R^2} h^2 \big( (1-|e_0|)^2 + \mathcal C_0 \big) + o(h^2),\]
when $h \to 0$ along an $e_0$-sequence.
\item The third eigenvalue $\mu_3$ of $\Lh^0$ satisfies
\[ \mu_3 = \Theta_0 B h + \frac{ \mathcal C_1 \sqrt{B}}{R} h^{\frac 32} + \frac{3 \mathcal C_1 \sqrt{\Theta_0}}{R^2} h^2 \big( (1+|e_0|)^2 + \mathcal C_0 \big) + o(h^2),\]
when $h \to 0$ along an $e_0$-sequence.
\end{enumerate}
\end{theorem}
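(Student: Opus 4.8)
The plan is to exploit the rotational invariance of $\Omega_0 := \R^2\setminus D(0,R)$. In the gauge $A = -\tfrac{Br}{2}\,e_\theta$ the operator $\Lh^0$ commutes with rotations, so in polar coordinates $(r,\theta)$ it splits into angular--momentum fibers,
\[
 \Lh^0 \;\simeq\; \bigoplus_{m\in\Z}\mathfrak h^{[m]}_h, \qquad \mathfrak h^{[m]}_h := -\frac{h^2}{r}\,\partial_r(r\,\partial_r) + \frac{1}{r^2}\Big(hm-\frac{Br^2}{2}\Big)^2,
\]
each acting on $L^2\big((R,\infty),\, r\,\dd r\big)$ with a Neumann condition at $r=R$. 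Then $\mu_1\le\mu_2\le\mu_3$ are the three smallest values in $\bigcup_{m\in\Z}\{\lambda_1(\mathfrak h^{[m]}_h)\}$, and the statement reduces to a uniform analysis of this one--parameter family of one--dimensional operators.

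Next I would pass to the boundary layer. A ground state of $\mathfrak h^{[m]}_h$ concentrates at $r=R$ on the magnetic length scale $\ell := \sqrt{h/B}$ as soon as $hm$ is close to $BR^2/2$, so I would substitute $r = R+\ell\tau$, expand $r^{-2}(hm-Br^2/2)^2$ and the curvature (first--order in $\ell/R$) terms, and factor out $Bh$, which yields on $L^2(\R_+)$ with Neumann condition at $0$
\[
 \tfrac{1}{Bh}\,\mathfrak h^{[m]}_h = H(\zeta) + \tfrac{\ell}{R}\,\mathcal R_1 + \mathcal O\big((\ell/R)^2\big), \qquad H(\zeta) := -\partial_\tau^2 + (\tau-\zeta)^2, \quad \zeta := \frac{hm - BR^2/2}{BR\,\ell}.
\]
The de Gennes operator $H(\zeta)$ has a smooth lowest eigenvalue $\mu^{\mathrm{dG}}(\zeta)$, attaining its minimum $\Theta_0$ at the unique point $\zeta_0 = \sqrt{\Theta_0}$, with $(\mu^{\mathrm{dG}})''(\zeta_0)>0$. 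Optimising over $\zeta$ (hence over real $m$) and treating $\mathcal R_1$ by first--order perturbation theory --- a de Gennes quasimode for the upper bound, Agmon localisation for the lower one --- produces the leading term $\Theta_0 Bh$, the curvature term $\mathcal C_1\sqrt B\,h^{3/2}/R$, and an $\mathcal O(1)$ shift of the optimal momentum; solving $\zeta = \zeta_0 + \mathcal O(\ell/R)$ for $m$ then reproduces exactly the quantity $\xi_h = \tfrac{BR^2}{2h} + \sqrt{\Theta_0 BR^2/h} + \mathcal C_2$ of \eqref{def.eh}, the constant $\mathcal C_2$ encoding that shift.

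The $h^2$ term is a quantisation effect, since $m$ must be an integer while the minimiser is the real number $\xi_h$. Writing $m = \xi_h - e$, one gets $\zeta = \zeta_0 - \tfrac{e}{R}\sqrt{h/B} + o(\sqrt h)$, so a second--order Taylor expansion of $\mu^{\mathrm{dG}}$ at $\zeta_0$ gives
\[
 \lambda_1(\mathfrak h^{[m]}_h) = \Theta_0 Bh + \frac{\mathcal C_1\sqrt B}{R}\,h^{3/2} + \frac{3\mathcal C_1\sqrt{\Theta_0}}{R^2}\,h^2\big(e^2 + \mathcal C_0\big) + o(h^2),
\]
where the coefficient $3\mathcal C_1\sqrt{\Theta_0}$ comes from the identity relating $(\mu^{\mathrm{dG}})''(\zeta_0)$ to the de Gennes constants, and $\mathcal C_0$ collects the $e$--independent $\mathcal O(h^2)$ contributions (sub-leading curvature and the second--order effect of $\mathcal R_1$). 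Along an $e_0$--sequence the three integers closest to $\xi_h$ have offsets whose squares $e^2$ tend respectively to $e_0^2$, $(1-|e_0|)^2$ and $(1+|e_0|)^2$, no other integer produces a smaller value, and ordering the outcomes yields (1)--(3).

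The main obstacle I anticipate is making the $h^2$ expansion rigorous and \emph{uniform} along the sequence: this calls for a genuine effective--operator (Born--Oppenheimer--type) reduction on $\R_+$ with remainder $o(h^2)$, hence sharp control of the de Gennes ground state and of its $\zeta$--derivatives, together with Agmon estimates justifying the boundary--layer substitution. A further subtlety, important for the rest of the paper, is that the expansions of $\mu_1$ and $\mu_2$ merge as $|e_0|\to\tfrac12$, so the corresponding spectral projections are not uniformly separated --- this is the eigenvalue crossing mentioned in the introduction. Rigorously, this whole analysis is that of \cite{FH}, which I would invoke.
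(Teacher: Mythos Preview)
Your proposal is appropriate: the paper does not prove this theorem at all but simply cites it from \cite{FH}, and you correctly conclude by invoking that reference. The sketch you give---Fourier decomposition into angular-momentum fibers $\mathfrak h^{[m]}_h$, boundary-layer rescaling to the de Gennes model $H(\zeta)$, optimisation in $\zeta$ producing the real minimiser $\xi_h$, and then the $h^2$ quantisation correction from the integer constraint $m\in\Z$---is indeed the structure of the argument in \cite{FH}, and matches the paper's own recollection of the fibered form of $\Lh^0$ in Section~\ref{sec:SingleObstacle}.
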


Note that $|e_0|$, $1-|e_0|$ and $1+|e_0|$ are respectively the distance of $\xi_h$ to the closest, second closest, and third closest integer.

\subsection{Tunneling between two obstacles}

Starting from Theorem \ref{thm.single}, we can study the interaction between two discs. We denote by $L>2R$ the distance between two centers $\alpha_\ell = \big( - \frac L2,0 \big)$ and $\alpha_r = \big( \frac L2,0 \big)$, and we consider $\Lh^V$ with $V= \lbrace \alpha_\ell, \alpha_r \rbrace$. If $|e_0| \neq \frac 12$, then the first two eigenvalues of $\Lh^0$ are separated by a spectral gap of order $h^2$ and thus the situation is similar to \cite{FMR,M24}: $\Lh^V$ has two eigenvalues of order $\mu_1$ which are exponentially close to each other.

\begin{theorem}[Two obstacles with $|e_0|\neq \frac 12$] \label{thm.two.obstacles}
Let $B,R,L>0$ and $e_0 \in (-\frac 12, \frac 12)$. Assume $V = \lbrace \alpha_\ell, \alpha_r \rbrace$ and $L>6R$. Define
\begin{equation}\label{eq:Sh}
S_h(L) = \frac{BL}{4}\sqrt{L^2 - 4R^2} - \big( BR^2 + 2\sqrt{h\Theta_0 BR^2} \big) \ln \Big( \frac{L + \sqrt{L^2 - 4R^2}}{2R}\Big) 
\end{equation}
and 
$$
c_0 = \frac{3 \mathcal C_1\sqrt{\Theta_0}}{R^2}.
$$
There exist $C=C(B,R,L) >0$ and $K = K(B,R,L)>0$ such that the following holds. If $h$ is small enough along an $e_0$-sequence, the spectrum of $\Lh^V$ in $(- \infty, \mu_1 + \frac{c_0(1-2|e_0|)}{2} h^2 )$ consists of exactly two eigenvalues $\lambda_1 < \lambda_2$ such that
\[ \lambda_2 - \lambda_1= 2C K^{-2e_0} h e^{-\frac{S_h(L)}{h}}(1+ o(1)), \]
as $h\to 0$ along an $e_0$-sequence.
\end{theorem}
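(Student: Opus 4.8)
The plan is to carry out the Helffer--Sjöstrand tunneling reduction, in the magnetic form developed in \cite{HS,FMR,M24}, reducing $\lambda_2-\lambda_1$ to (twice the modulus of) the off-diagonal coefficient of a $2\times2$ interaction matrix. Since $|e_0|<\tfrac12$, Theorem \ref{thm.single} gives, along an $e_0$-sequence, a simple first eigenvalue $\mu_1$ of $\Lh^0$ with $\mu_2-\mu_1 = c_0(1-2|e_0|)h^2+o(h^2)>0$ (using $(1-|e_0|)^2-e_0^2=1-2|e_0|$) and with normalised ground state of the form $\psi_0(x)=f_h(|x|)e^{im_h\theta}$, where $m_h\in\Z$ is the integer closest to $\xi_h$ and $f_h$ is a boundary layer concentrated on $\{|x|=R\}$, decaying in $|x|-R$. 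The quasimodes for $\Lh^V$ are $\psi_\ell=\chi_\ell\,\tau^B_{\alpha_\ell}\psi_0$ and $\psi_r=\chi_r\,\tau^B_{\alpha_r}\psi_0$, with $\chi_\ell,\chi_r$ smooth cut-offs equal to $1$ near the respective discs and with transition regions in the classically forbidden zone separating them; the hypothesis $L>6R$ guarantees enough room and that the tunnelling picture between the two circles is the simple radial one.

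I would first establish the rough spectral picture: $\Lh^V$ has exactly two eigenvalues in $(-\infty,\mu_1+\tfrac{c_0(1-2|e_0|)}{2}h^2)$. The upper bound comes from the min--max principle on $\mathrm{span}(\psi_\ell,\psi_r)$, since $\psi_\ell$ and $\psi_r$ are quasimodes of $\Lh^V$ for $\mu_1$ up to exponentially small errors. For the lower bound one uses an IMS localisation formula with a partition of unity subordinate to neighbourhoods of $D(\alpha_\ell,R)$, of $D(\alpha_r,R)$ and of their complement: on the obstacle-free part of the plane the magnetic Laplacian is bounded below by $Bh$ (the first Landau level), which exceeds the threshold by an amount of order $h$; and on each disc neighbourhood $\Lh^V$ agrees with $\Lh^{\alpha}$, whose third eigenvalue $\mu_3=\mu_1+c_0(1+2|e_0|)h^2+o(h^2)$ is above the threshold. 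The same localisation provides Agmon-type weighted estimates: $\psi_0$, and the true eigenfunctions of $\Lh^V$ below the threshold, decay exponentially away from the discs at the rate prescribed by the effective potential of the radial de Gennes model.

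Next I would form the interaction matrix. Let $E$ be the two-dimensional spectral subspace of $\Lh^V$ below the threshold and $\Pi$ its projection; then $\Pi\psi_\ell,\Pi\psi_r$ span $E$ and differ from $\psi_\ell,\psi_r$ by exponentially small errors, and the Gram matrix of $(\psi_\ell,\psi_r)$ differs from the identity by $O(e^{-S_h(L)/h})$. In a basis orthonormalising $\Pi\psi_\ell,\Pi\psi_r$, the matrix of $\Lh^V|_E$ has the shape $\begin{pmatrix} a & w \\ \overline w & a\end{pmatrix}$ up to higher-order terms, the two diagonal entries being equal because the antiunitary map $u\mapsto\overline{u(-x_1,x_2)}$ commutes with $\Lh^V$ and exchanges the two discs. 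Hence $\lambda_2-\lambda_1=2|w|\,(1+o(1))$, where $w=\langle(\Lh^V-\mu_1)\psi_\ell,\psi_r\rangle$ modulo negligible terms. Since $(\Lh^V-\mu_1)\psi_\ell$ is, up to exponentially smaller errors, supported in the transition region of $\chi_\ell$, a Green-type integration-by-parts identity rewrites $w$ as an integral, over a fixed curve $\Gamma$ separating the discs, of the magnetic current of $\psi_\ell$ against $\psi_r$.

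The hard part is the sharp evaluation of this integral. It requires a genuine WKB description of $\psi_0$ in $\{|x|>R\}$: one constructs $\psi_0^{\mathrm{WKB}}\sim a_h(|x|)e^{-\Phi_h(|x|)/h}e^{im_h\theta}$ solving $\Lh^0\psi_0^{\mathrm{WKB}}=\mu_1\psi_0^{\mathrm{WKB}}$ up to $O(h^\infty)e^{-\Phi_h/h}$, where $\Phi_h'(r)=\big|\tfrac{m_h h}{r}-\tfrac{Br}{2}\big|$ and $a_h$ is built from the de Gennes ground state, and then compares $\psi_0$ to $\psi_0^{\mathrm{WKB}}$ in a weighted norm. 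Inserting the translated, phase-carrying WKB quasimodes into the curve integral, the integrand on $\Gamma$ becomes $e^{-(S_\ell+\overline{S_r})/h}$ times a slowly varying amplitude, with
\[
S_\ell+\overline{S_r}=\Phi_h(|x-\alpha_\ell|)+\Phi_h(|x-\alpha_r|)+i\Big(\tfrac{B}{2}(\alpha_r-\alpha_\ell)\wedge x+m_h h\big(\theta(x-\alpha_r)-\theta(x-\alpha_\ell)\big)\Big),
\]
which combines the real Agmon decay with the magnetic and angular phases. A complex stationary-phase (saddle-point) evaluation then yields, as leading exponent, the real part of the critical value of $S_\ell+\overline{S_r}$, which is computed in closed form and identified with $S_h(L)$ of \eqref{eq:Sh}; the prefactor $Ch$ comes from $a_h$ together with the $h^{1/2}$ Gaussian width of the tangential integration, with $C=C(B,R,L)>0$. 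Finally, the subtle factor $K^{-2e_0}$ arises because the effective angular momentum of $\psi_0$ is $\xi_h h=m_h h+e_0h+o(h)$ rather than the integer multiple $m_h h$: the exponent $S_h(L)/h$ depends on $m_h$ linearly with coefficient $-2\ln K$, where $K=\tfrac{L+\sqrt{L^2-4R^2}}{2R}$ is the geometric ratio already visible in \eqref{eq:Sh}, so the mismatch $e_0h$ shifts the total exponent by $-2e_0\ln K$, producing the factor $K^{-2e_0}$. The technical core is to make every remainder genuinely $o(1)$ relative to the leading term — propagating the WKB error through the Green identity and the saddle-point expansion — which is done following \cite{FMR,M24}.
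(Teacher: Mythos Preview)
Your outline is correct and follows the same architecture as the paper: quasimodes from the single-obstacle ground state, Agmon estimates and IMS localisation to isolate a two-dimensional spectral subspace, reduction to a $2\times2$ interaction matrix, a Green-type integration by parts rewriting the off-diagonal entry as a line integral across $\{x_1=0\}$, and a complex saddle-point evaluation producing $S_h(L)$ and the factor $K^{-2e_0}$ (your identification $K=\tfrac{L+\sqrt{L^2-4R^2}}{2R}$ is exactly the paper's $K_L$). The one substantive technical difference is your choice of input for the one-well ground state: you propose a WKB construction $\psi_0^{\mathrm{WKB}}\sim a_h e^{-\Phi_h/h}e^{im_h\theta}$ with a weighted comparison to $\psi_0$, whereas the paper exploits the radial ODE to write $\psi_0$ \emph{exactly} via a Kummer-type integral, which is manifestly analytic in the complex strip needed for the contour shift $x_2\mapsto x_2+x_2^*$ and whose asymptotics are read off directly by Laplace's method. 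This buys the paper a cleaner error analysis (no need to propagate a WKB remainder through the complex deformation), at the cost of being specific to the disc geometry; your WKB route would work but requires more care at that step. One further point: the condition $L>6R$ is not merely ``enough room'' for cut-offs, but is exactly what makes the accumulated error terms (from removing cut-offs, from $\|g_\alpha-\Psi_\alpha\|^2$, etc.) genuinely $o\big(h e^{-S_h(L)/h}\big)$; in the paper this is isolated as a separate inequality $2d(L-R)>S_0(L)$, which fails for $L$ close to $2R$.
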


{\bf Remark.}
The factor $6$ in the condition $L > 6R$ is a technical artifact of our proof. We do not expect this to be optimal. However, the magnetic tunneling has an oscillating component on top of an exponential decay, and both contribute to the final $S_h$. In the proof, one needs to control that an error term where the oscillations are eliminated remains smaller than the main term. This requires the obstacles to be sufficiently far from each other. The estimate in question is considered in Appendix~\ref{sec.errors}.

Similar conditions occur in the following Theorems~\ref{thm.two.obstacles.2} and \ref{thm.lattice} (with a constant $25$ instead of $6$). The reason remains the same.
\medskip

When $|e_0|=\frac 12$, the situation is more complicated because the first eigenvalue $\mu_1$ may become double. In fact, it is known that $\mu_1$ is double for some arbitrarily small values of $h$. However, in this case 
there is a gap of order $h^2$ between the first two eigenvalues $\{\mu_1,\mu_2\}$ and the third eigenvalue. Using this gap, we can prove that there are four eigenvalues which are exponentially close to each other.

\begin{theorem}[Two obstacles with $e_0 = \frac 12$]\label{thm.two.obstacles.2} Let $B,R,L>0$ with $L> 6R$ and assume $V = \lbrace \alpha_\ell, \alpha_r \rbrace$. Let $h>0$ be small enough and such that $\mu_1(h) = \mu_2(h)$. Then the spectrum of $\Lh^V$ in $(- \infty, \mu_1 + c_0 h^2 )$ consists of four eigenvalues $\lambda_1 \leq \lambda_2 \leq \lambda_3 \leq \lambda_4$ such that
\begin{align*}
\lambda_2 - \lambda_1 &= C ( K^{\frac 12} + K^{- \frac 32}) h e^{- \frac{S_h(L)}{h}}( 1+o(1)),\\
\lambda_3 - \lambda_2 &= o \Big( h e^{- \frac{S_h(L)}{h}} \Big),\\
\lambda_4 - \lambda_3 &= C ( K^{\frac 12} + K^{- \frac 32}) h e^{- \frac{S_h(L)}{h}} ( 1+o(1)),
\end{align*}
with the same constants $C>0$ and $K>0$ as in Theorem~\ref{thm.two.obstacles}.
\end{theorem}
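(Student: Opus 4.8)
The plan is to follow the reduction used for Theorem~\ref{thm.two.obstacles}, replacing the two-dimensional model space by a four-dimensional one. Fix $h$ with $\mu_1(h)=\mu_2(h)$. Writing $\Lh^0$ in polar coordinates, its eigenspace at the energy $\mu_1$ is spanned by two eigenfunctions $\varphi^{(1)},\varphi^{(2)}$ of consecutive angular momenta $j_\ast$ and $j_\ast+1$. For $\sigma\in\{\ell,r\}$ and $k\in\{1,2\}$, put $\psi_\sigma^{(k)}=\chi_\sigma\,\tau_{\alpha_\sigma}^B\varphi^{(k)}$, where $\chi_\sigma$ is a cut-off, radial about $\alpha_\sigma$, localising near $D(\alpha_\sigma,R)$; by the Agmon decay of the one-well ground states these are quasimodes for $\Lh^V$ at energy $\mu_1$ with a controlled exponentially small error, and their Gram matrix is $I_4$ up to an exponentially small correction. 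By Theorem~\ref{thm.single}, at $e_0=\tfrac12$ one has $\mu_3-\mu_1=2c_0h^2+o(h^2)$, so the threshold $\mu_1+c_0h^2$ separates $\{\mu_1,\mu_2\}$ from $\mu_3$. A standard Galerkin argument — projection onto $\mathcal E:=\mathrm{span}\{\psi_\sigma^{(k)}\}$, together with Agmon estimates showing that the eigenfunctions of $\Lh^V$ below the threshold are exponentially close to $\mathcal E$ — then yields that $\Lh^V$ has exactly four eigenvalues below $\mu_1+c_0h^2$, and that they coincide, up to an error $o(he^{-S_h(L)/h})$, with the eigenvalues of the $4\times4$ matrix $M$ obtained by compressing $\Lh^V-\mu_1$ to $\mathcal E$ and symmetrising with the Gram matrix.

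Next I would exploit the point reflection $\tilde P\colon\Psi(x)\mapsto\Psi(-x)$, which is unitary, satisfies $\tilde P^2=\mathrm{Id}$, maps $\Omega_V$ onto itself while exchanging the two discs, and commutes with $\Lh$ (since $x\mapsto-x$ preserves the magnetic two-form). Choosing the right-well basis as $\psi_r^{(k)}:=\tilde P\psi_\ell^{(k)}$, the operator $\tilde P$ acts on $\mathcal E$ by the swap $\bigl(\begin{smallmatrix}0&I_2\\I_2&0\end{smallmatrix}\bigr)$, so that $M=\bigl(\begin{smallmatrix}A&W\\W&A\end{smallmatrix}\bigr)$ with Hermitian $2\times2$ blocks $A$ and $W$; here $A$ is the common one-well correction, which for a suitable choice of cut-offs satisfies $\|A\|=o(he^{-S_h(L)/h})$, and $W$ is the interaction block. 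Conjugating $M$ by $\tfrac1{\sqrt2}\bigl(\begin{smallmatrix}I_2&I_2\\I_2&-I_2\end{smallmatrix}\bigr)$ gives $\mathrm{spec}(M)=\mathrm{spec}(A+W)\cup\mathrm{spec}(A-W)$; since $\|A\|$ is negligible, the four eigenvalues of $M$ equal $-\sigma_1,-\sigma_2,\sigma_2,\sigma_1$ up to $o(he^{-S_h(L)/h})$, where $\sigma_1\ge\sigma_2\ge0$ are the singular values of $W$. Hence, up to the same error, $\lambda_1=\mu_1-\sigma_1$, $\lambda_2=\mu_1-\sigma_2$, $\lambda_3=\mu_1+\sigma_2$, $\lambda_4=\mu_1+\sigma_1$, so that $\lambda_2-\lambda_1=\lambda_4-\lambda_3=\sigma_1(1+o(1))$ and $\lambda_3-\lambda_2=o(he^{-S_h(L)/h})$; it only remains to compute $\sigma_1$.

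As in the proof of Theorem~\ref{thm.two.obstacles} (which follows \cite{FMR,M24}), each entry $W_{km}$ reduces, to leading order, to an integral over the bisecting line $\{x_1=0\}$ of a bilinear current-type expression in $\psi_\ell^{(k)}$, $\psi_r^{(m)}$ and their first derivatives: its exponential rate is governed by the distance between the discs and yields the factor $e^{-S_h(L)/h}$, while the subexponential prefactor is produced by the Laplace method at the point of closest approach and involves the radial profiles and angular phases of $\varphi^{(k)},\varphi^{(m)}$ there. Because the two states carry consecutive angular momenta while their radial profiles agree to leading order (the effective radii $\sqrt{2hj_\ast/B}$ and $\sqrt{2h(j_\ast+1)/B}$ differ only by a factor $1+O(h)$), the leading part of $W$ is a rank-one matrix; propagating the two angular momenta through the WKB analysis exactly as in Theorem~\ref{thm.two.obstacles} identifies its nonzero singular value as $\sigma_1=C(K^{1/2}+K^{-3/2})he^{-S_h(L)/h}(1+o(1))$, with the same constants $C$ and $K$. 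Substituting this into the formulas of the preceding paragraph gives the statement.

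I expect the computation of $W$ to be the main obstacle. The remainder terms in the boundary integral do not benefit from the oscillation cancellation that produces $S_h(L)$ and are a priori only of size $O(he^{-\widetilde S_h(L)/h})$ with $\widetilde S_h(L)<S_h(L)$; controlling them requires both the geometric separation $L>6R$ (this is the estimate carried out in Appendix~\ref{sec.errors}) and WKB expansions of the one-well eigenfunctions along $\{x_1=0\}$ — radial profile, angular phase and $h$-dependent normalisation — precise enough to establish the rank-one structure of the leading part of $W$ and thereby to bound $\sigma_2$ by $o(he^{-S_h(L)/h})$. A further, minor, point is that the whole construction has to be carried out along the discrete set of values of $h$ for which $\mu_1(h)=\mu_2(h)$, with all estimates uniform there.
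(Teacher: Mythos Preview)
Your approach is correct and parallels the paper's proof closely: both reduce to a $4\times4$ interaction matrix via the machinery of Section~\ref{sec.reduction2} (the paper's Theorem~\ref{thm.Gram2}), and both rely on the explicit computation of the interaction entries carried out in Appendix~\ref{appendix:IC} (Lemmas~\ref{lem.wh1} and~\ref{lem.wh2}). The paper simply writes down the resulting matrix and diagonalises it directly, finding eigenvalues $\mu_1,\ \mu_1,\ \mu_1\pm|\lambda_h|(K_L+K_L^{-1})$ when $\mu_-=\mu_+$.

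Your use of the point reflection $\tilde P$ is a genuine addition: the paper never invokes this symmetry. It does give the clean block form $M=\bigl(\begin{smallmatrix}A&W\\W&A\end{smallmatrix}\bigr)$ with $W$ Hermitian (your phase choice $\psi_r^{(k)}:=\tilde P\psi_\ell^{(k)}$ differs from the natural $\tau_{\alpha_r}^B\chi_\eta\Phi^\pm$ by $(-1)^{m_\pm}$, and it is precisely this sign twist that makes $W$ Hermitian rather than the non-Hermitian block one reads off from the paper's matrix). One logical wrinkle: from $\lambda_j=\mu_1\mp\sigma_i$ you get $\lambda_2-\lambda_1=\sigma_1-\sigma_2$, not $\sigma_1(1+o(1))$; the latter requires $\sigma_2=o(\sigma_1)$, i.e.\ the rank-one structure, which you only argue afterwards. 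Your intuition for why $W$ is rank one (identical radial profiles for the two momenta) is exactly right---in the paper this is the fact that the function $u_h$ of Lemma~\ref{lem.decay} is the same for $\Phi^+$ and $\Phi^-$, whence the leading block is a scalar multiple of $\bigl(\begin{smallmatrix}K^{-1}&1\\1&K\end{smallmatrix}\bigr)=\bigl(\begin{smallmatrix}K^{-1/2}\\K^{1/2}\end{smallmatrix}\bigr)\bigl(\begin{smallmatrix}K^{-1/2}&K^{1/2}\end{smallmatrix}\bigr)$---but turning this into a proof still requires the full Laplace-method computation of Appendix~\ref{appendix:IC}, so your route does not bypass the main analytic work.
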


\textbf{Remark.} Note that the assumption $\mu_1=\mu_2$ implies that $|e_0|=\frac 12$ by Theorem \ref{thm.single}. In this case, there is no reason for the eigenvalues of $\Lh^V$ not to be double, since this is the case for the single obstacle. It is remarkable that for the double well operator, there is still a gap between $\lambda_2$ and $\lambda_1$. It is a natural question whether $\lambda_2 = \lambda_3$ or not. It could be that some hidden symmetry of $\Lh^0$ is somehow inherited by $\Lh^V$.

\subsection{Periodic obstacles}
Another interesting situation, we can deal with, is the case of obstacles on a square lattice, as in \cite{F}. When $V= L \Z^2$, $\Lh^V$ has infinitely many eigenvalues of order $\mu_1$. In Theorem \ref{thm.lattice} below, we prove that this part of the spectrum is given by an effective operator $\X$ on $\ell^2(L \Z^2)$, with coefficients
\[
\X_{\alpha \beta} = 
\begin{cases}
e^{\frac{iB}{2h}\alpha \wedge \beta}, &{\rm{if}} \, |\alpha - \beta |=L,\\
0, &{\rm{otherwise}}.
\end{cases}
\]
In the case $e_0 = \frac 12$, the effective operator $\Y$ acts on $\ell^2(L\Z^2) \otimes \C^2$ with coefficients
\[ 
\Y_{\alpha \beta}^{\sigma \sigma'}=
\begin{cases}
- \sigma' e^{\frac{iB}{2h}\alpha \wedge \beta} e^{- i \frac{\sigma - \sigma'}{2} \arg(\beta-\alpha)} K^{\frac{\sigma + \sigma'}{2}} ,&{\rm{if}} \, |\alpha - \beta|=L, \\
0, &{\rm{otherwise,}}
\end{cases}
\]
for $\alpha$, $\beta \in L\Z^2$ and $\sigma$, $\sigma' \in \lbrace \pm \rbrace$.
More precisely, we obtain the following result.

\begin{theorem}[Obstacles on a square lattice]\label{thm.lattice}
Let $B$, $R$, $L>0$ and $e_0 \in (- \frac 12, \frac 12]$. Assume $V = L \Z^2$, and $L> 25 R$. For $t>0$ define the spectral subspace
\[ \mathscr E(t) = \one_{(-\infty, \mu_1 + th^2]} ( \Lh^V ). \]
There exist $C=C(B,R,L) >0$ and $K = K(B,R,L)>0$ such that the following holds.
Assume that $h \rightarrow 0$ along an $e_0$-sequence.
\begin{enumerate}
\item If $e_0 \neq \frac 12$, then for $h$ small enough, the restriction of $\Lh^V$ to $\mathscr{E}\big( c_0 \frac{1-2e_0}{2} \big)$ is unitarily equivalent to a bounded operator $\M$ on $\ell^2(L\Z^2)$ such that
\[ \M = \mu_1 \mathbf I + \lambda \X + o(h e^{-\frac{S_h(L)}{h}}), \]
where $|\lambda| = 2C K^{-2e_0} h e^{- \frac{S_h(L)}{h}}$.
\item If $e_0= \frac 12$, then we assume furthermore that $\mu_1(h) = \mu_2(h)$ along the $e_0$-sequence. Then for $h$ small enough, the restriction of $\Lh^V$ to $\mathscr{E}(c_0)$ is unitarily equivalent to a bounded operator $\M$ on $\ell^2(L \Z^2) \otimes \mathbf C^2$ such that
\[ \M = \mu_1 \mathbf I + \lambda \Y+ o(h e^{-\frac{S_h(L)}{h}}),\]
where $\lambda = (-1)^{\lfloor \xi_h \rfloor}  2 C K^{- \frac 12} h e^{- \frac{S_h(L)}{h}}$.
\end{enumerate}
\end{theorem}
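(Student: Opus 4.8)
The plan is to reduce the restriction of $\Lh^V$ to the low-energy spectral subspace to an explicit \emph{interaction matrix}, and then to compute its coefficients for the square lattice. First I would fix a real ground-state profile for the single-obstacle operator: let $\psi_0$ be a normalized ground state of $\Lh^0$, which by separation of variables in polar coordinates has the form $\psi_0 = e^{im\theta} f_m(r)$ for an integer angular momentum $m = m(h)$ essentially equal to the nearest integer to $\xi_h$ (with both neighbouring integers admissible when $e_0 = \tfrac12$). For each $\alpha \in V$ set $\psi_\alpha = \chi_\alpha\, \tau_\alpha^B \psi_0$, where $\chi_\alpha$ is a cutoff equal to $1$ on a ball of radius $\tfrac L2$ around $\alpha$; by \eqref{eq.Lh.conjugate} each $\psi_\alpha$ is a quasimode for $\Lh^V$ at energy $\mu_1$. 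The family $(\psi_\alpha)_{\alpha\in V}$ spans a model space $\mathscr F \subset L^2(\Omega_V)$, and the core of the reduction — the general obstacle-configuration result of the paper — is to show, using the single-obstacle spectral gap of Theorem~\ref{thm.single} together with Agmon estimates on the true eigenfunctions, that $\one_{(-\infty,\mu_1+th^2]}(\Lh^V)$ is $o(1)$-close to the orthogonal projection onto $\mathscr F$, and that $\Lh^V$ on the range of this projection is unitarily equivalent — via $G^{-1/2}$, with $G$ the Gram matrix of the $\psi_\alpha$ — to the matrix $\M_{\alpha\beta} = \langle \Lh^V \psi_\alpha, \psi_\beta \rangle$, up to an error $o(h e^{-S_h(L)/h})$.

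Second, I would compute $\M_{\alpha\beta}$. Writing $\M_{\alpha\beta} = \mu_1 \delta_{\alpha\beta} + w_{\alpha\beta} + (\text{smaller})$ and using the quasimode equation, the off-diagonal coefficient $w_{\alpha\beta}$ reduces, after an integration by parts, to a flux-type integral of $\psi_\alpha$, $\overline{\psi_\beta}$ and their magnetic derivatives across a curve $\Gamma_{\alpha\beta}$ separating $D(\alpha,R)$ from $D(\beta,R)$. For this I need sharp WKB/Agmon asymptotics of $\psi_0$ in the classically forbidden region: $\psi_0(x)$ behaves like an explicit amplitude (built from $f_m$ along the Agmon geodesic, with explicit $m$-dependence) times $e^{-d_{\rm Ag}(x)/h}$, where $d_{\rm Ag}$ is the Agmon distance to $\partial D(0,R)$ for the relevant effective potential. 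Plugging this in: the phase $e^{\frac{iB}{2h}\alpha\wedge\beta}$ emerges from the magnetic-translation cocycle $\tau_\alpha^B(\tau_\beta^B)^* = e^{-\frac{iB}{2h}\alpha\wedge\beta}\tau_{\alpha-\beta}^B$; the exponential rate $S_h(L)$ is twice the Agmon distance between the two discs, which gives precisely \eqref{eq:Sh}; and, crucially, the $h$-dependent angular momentum produces the factor $K^{\,\cdot}$ (from the $m$-dependent amplitude of $\psi_0$ far from the disc, $K = K(B,R,L)$ being the ratio by which that amplitude changes under $m \mapsto m+1$) together with the angular phase $e^{-i\frac{\sigma-\sigma'}{2}\arg(\beta-\alpha)}$ (from evaluating $e^{im\theta}$ in the direction $\arg(\beta-\alpha)$). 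Specializing to $V = L\Z^2$, only nearest neighbours $|\alpha-\beta| = L$ give $|w_{\alpha\beta}| \asymp h e^{-S_h(L)/h}$, which produces $\X$ (resp.\ $\Y$).

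Third come the lattice-specific points. Since $\mathscr E(t)$ is infinite-dimensional, I would check that the whole reduction is uniform in $\alpha$ — the Gram matrix satisfies $G = \mathbf I + O(e^{-S_h(L)/h})$ and all error bounds hold with constants independent of the site, which follows from magnetic-translation covariance of the construction — so that $\M$ is a genuine bounded operator on $\ell^2(L\Z^2)$. One must also show that the contributions of all non-nearest pairs sum to $o(h e^{-S_h(L)/h})$: the next shell sits at distance $L\sqrt2$, and because $S_h$ carries an oscillating logarithmic correction on top of the quadratic main term, one needs $S_h(L\sqrt 2) - S_h(L)$ to dominate the size of a non-oscillating error term; this is exactly what forces $L > 25R$ (cf.\ the Remark after Theorem~\ref{thm.two.obstacles} and the estimates of Appendix~\ref{sec.errors}). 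In the case $e_0 = \tfrac12$ with $\mu_1(h) = \mu_2(h)$, the model space at each site is two-dimensional, spanned by the ground states with angular momenta $\lfloor\xi_h\rfloor$ and $\lfloor\xi_h\rfloor+1$; running the same flux computation with these two profiles and sorting the resulting scalar coefficients $w_{\alpha\beta}^{\sigma\sigma'}$ according to $\sigma,\sigma'\in\{\pm\}$ yields $\M = \mu_1\mathbf I + \lambda\Y + o(h e^{-S_h(L)/h})$ with $\lambda = (-1)^{\lfloor\xi_h\rfloor}\,2CK^{-1/2}h e^{-S_h(L)/h}$, the overall sign and the structure of $\Y$ being dictated by the signs of the two WKB amplitudes.

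The main obstacle is the sharp evaluation of the interaction coefficient with the correct $K^{-2e_0}$ (resp.\ $K^{(\sigma+\sigma')/2}$) prefactor, since this requires propagating the $h$-dependent angular momentum $m(h)$ through the Agmon/WKB description of $\psi_0$ in the forbidden region and then through the flux integral — considerably more delicate than in the fixed-angular-momentum settings treated in \cite{FMR,M24} — and near $e_0 = \tfrac12$ it must be combined with a degenerate perturbation-theoretic treatment of the near-crossing of $\mu_1$ and $\mu_2$. Keeping all estimates uniform over the infinitely many obstacles, so that the accumulated errors still beat the nearest-neighbour term, is the second source of difficulty and the reason for the hypothesis $L > 25R$.
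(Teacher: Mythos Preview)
Your proposal is correct and follows essentially the same approach as the paper: reduce via the Gram matrix of the translated single-obstacle quasimodes (the paper's Theorems~\ref{thm.Gram} and~\ref{thm.Gram2}), compute the interaction coefficients by the integration-by-parts flux formula and the explicit decay of $\Phi^\pm$ (Lemmas~\ref{lem.wh1}--\ref{lem.wh2}, with the Kummer-function representation of Lemma~\ref{lem.decay} playing the role of your ``WKB asymptotics''), and control the infinite-lattice errors uniformly (Lemma~\ref{lem.errors}). One small correction: the constant $25$ does not come from comparing $S_h(L\sqrt2)$ to $S_h(L)$ directly, but from choosing the cutoff parameter $\varepsilon=0.4<\sqrt2-1$ (so that diagonals are excluded from $\mathbf W$) and then requiring $L>4\varepsilon^{-2}R=25R$ so that the Agmon-distance error terms in $\mathcal E_{\varepsilon,L}$ beat $e^{-S_h(L)/h}$.
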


In both cases the operator $\M$ is obtained by constructing an eigenbasis for $\Lh^V$ starting from the ground states of each single-obstacle operator $\Lh^\alpha$ (for $\alpha \in L \Z^2$).
 
 \subsection{Link with the Almost-Mathieu operator (or Harper equation).}\label{sec.harper}

The effective operator $\X$ from Theorem \ref{thm.lattice} is in fact unitarily equivalent to a direct integral of Almost-Mathieu operators. Indeed, we can first conjugate $\X$ by the unitary diagonal matrix ${\rm{diag}} \big( e^{i \frac{B}{2h} \alpha_1 \alpha_2} \big)_{\alpha \in V}$. Thus $\X$ is unitarily equivalent to $\X'$ with
\[ \X'_{\alpha \beta} = e^{i \frac{B}{2h}(\beta_1 + \alpha_1)(\beta_2 - \alpha_2)}, \qquad |\alpha-\beta|=L \]
This is a convolution in the variable $\alpha_2$, and it can be diagonalized using the following Fourier transform,
\[ (\mathscr{F} f )(\alpha_1,\alpha_2) = \frac{1}{\sqrt{2\pi}}\int_{0}^{2\pi} e^{-i \theta \alpha_2 /L } f_{\alpha_1}(\theta) \dd \theta, \quad f \in \ell^2(L \Z) \otimes L^2(0,2\pi), \]
with inverse
\[ (\mathscr{F}^{-1} g)(\alpha_1,\theta) = \frac 1 {\sqrt{2\pi}} \sum_{\alpha_2 \in L \Z} e^{i \theta \alpha_2 /L} g_{(\alpha_1,\alpha_2)}, \quad g \in \ell^2(L \Z^2). \]
Then we have for all $f \in \ell^2(L\Z) \otimes L^2(0,2\pi)$,
\[ (\mathscr{F}^{-1} \X' \mathscr{F} f)(\alpha_1, \theta) = f(\alpha_1 - L, \theta) + f(\alpha_1+L,\theta) + 2 \cos \big( \theta - \frac{BL}{h} \alpha_1 \big) f(\alpha_1,\theta). \]
Therefore, 
\[ \mathscr{F}^{-1} \X' \mathscr{F} = \int_{(0,2\pi)}^\oplus \mathscr{H}_{\theta, \frac{BL^2}{2 \pi h}} \dd \theta, \]
where $\mathscr{H}_{\theta,\phi}$ is the Almost-Mathieu (or Harper) operator on $\ell^2(\Z)$,
\[ (\mathscr{H}_{\theta,\phi} u)(n) = u(n-1) + u(n+1) +2 \cos(\theta + 2 \pi n \phi) u(n). \]
This operator has been object of many works (see for instance \cite{AJ06,HS89,J21,Puig}, and the recent surveys \cite{J-survey,butterfly}). It is well-known to have very singular dependence on $\phi$. In particular, the following properties can be transfered to $\X$.
\begin{itemize}
\item If $\phi$ is rational, the spectrum of $\mathscr{H}_{\theta,\phi}$ is purely absolutely continuous and independent of $\theta$.
\item If $\phi$ is irrational, the spectrum of $\mathscr{H}_{\theta,\phi}$ is purely singular continuous \cite{J21}.
\item If $\phi$ is irrational, the spectrum of $\mathscr{H}_{\theta,\phi}$ is a Cantor set with Lebesgue measure $0$ \cite{AJ06}.
\end{itemize}
The picture of the spectrum of $\mathscr{H}_{\theta,\phi}$ as function of $\phi$ leads to the famous Hofstadter's butterfly. Theorem \ref{thm.lattice} therefore states that, around $\mu_1$, the spectrum of $\Lh^V$ is given to main order by a Hofstadter's butterfly (as function of the flux $\phi=\frac{BL^2}{2\pi h}$).

\medskip
\textit{Remark.} 
Near the crossing points $\mu_1 = \mu_2$, Theorem \ref{thm.lattice} gives the effective operator $\Y$, which is more complicated than an Almost-Mathieu operator, since it involves an effective spin. Using the same transformations as for $\X$, we find that $\Y$ is conjugated to the operator
\[ \int_{(0,2\pi)}^\oplus \mathscr{H}'_{\theta, \frac{BL^2}{2\pi h}} \dd \theta, \]
where $\mathscr{H}_{\theta,\phi}' : \ell^2(\Z) \otimes \C^2 \to \ell^2(\Z) \otimes \C^2$ is defined by
\[ (\mathscr{H}_{\theta,\phi}' u)(n) = \mathcal T u(n+1) + \mathcal T^* u(n-1) + 2 \mathcal A_n u(n), \]
with
\[ \mathcal T = 
\begin{pmatrix}
-K & 1 \\ -1 & K^{-1}
\end{pmatrix},
\quad
\mathcal A_n = 
\begin{pmatrix}
- K \cos (2 \pi n \phi  - \theta) & \sin (2\pi n \phi - \theta) \\ \sin (2\pi n \phi - \theta) & K^{-1} \cos (2\pi n \phi - \theta)
\end{pmatrix}
.
 \]
 This operator is maybe linked to the unitary Almost-Mathieu operators studied in \cite{CFO} for instance. 

\subsection{Organization of the paper}
The analysis necessary for the proofs of all main theorems: Theorems~\ref{thm.two.obstacles}, \ref{thm.two.obstacles.2}, and \ref{thm.lattice}, is to a large extent overlapping: The first two theorems concern the case of $2$ obstacles and in the last the set of obstacles is an infinite lattice. Therefore, in Sections~\ref{sec:SingleObstacle}-\ref{sec.reduction2} we carry through this analysis in a unified way for a general set of obstacles. The first Section~\ref{sec:SingleObstacle} contains the relevant information about the one-obstacle problem, which is an extremely important input to the rest of the paper. A main observation is that using the rotational symmetry of the one-obstacle problem, one can understand ground states in terms of a special function---the Kummer function---in the radial variable. A similar insight was used in \cite{FSW22} and spurred much progress on special cases of tunneling with magnetic fields (see in particular \cite{HK22,HKS23,M24,FMR}), however a main novelty in the present work is the presence of non-zero, $h$-dependent, angular momentum in the ground state and also the accompanying crossing of eigenvalues.
Section~\ref{sec.reduction1} and Section~\ref{sec.reduction2} both contain the reduction to an interaction matrix, in the cases without eigenvalue crossing ($e_0 \neq \frac 12$) and near a crossing ($e_0 = \frac 12$) respectively. All of this analysis is for a general set $V$ of obstacles and therefore include both the case of only two obstacles and the case of an infinite lattice of them. Note also, that the periodicity of the obstacles in the case of infinitely many is not used in this part of the paper.

The Section~\ref{sec:tunneling_2obstacles} now gives the proofs of Theorems~\ref{thm.two.obstacles} and \ref{thm.two.obstacles.2}, i.e. the tunneling formulas for two obstacles in the cases away from crossings or exactly at the crossings respectively.
Section~\ref{sec:tunneling_lattice} contains the tunneling analysis for an infinite set of obstacles placed on a regular lattice.

The most computational parts of the analysis are delegated to the appendices, in order to lighten the rest of the presentation. However, these parts are also fundamental for the final estimates. In Appendix \ref{sec.phase}, we gather the main properties of the phases involved in several integrals, needed to use the Laplace method. In Appendix \ref{appendix:IC}, we calculate the interaction coefficients arising in the effective operators. Finally, we bound some error terms in Appendix \ref{sec.errors}.

\section*{Acknowledgements}

The authors thank Bernard Helffer and Rupert Frank for interesting discussions and encouraging the study of periodic obstacles. L.M. also thanks Bernard Helffer for sharing references including \cite{HS88}.

This work is funded by the European Union (via the ERC Advanced Grant MathBEC - 101095820). Views and opinions expressed are however those of the author only and do not necessarily reflect those of the European Union or the European Research Council. Neither the European Union nor the granting authority can be held responsible for them.

\section{The single obstacle operator \texorpdfstring{$\Lh^0$}{}}\label{sec:SingleObstacle}

\subsection{Fourier decomposition}

We gave a description of the first eigenvalues of $\Lh^0$ in Theorem \ref{thm.single}. Let us recall some further properties of this operator. Since we consider magnetic-Neumann boundary conditions, the domain of $\Lh^0$ is
\[ {\rm{Dom}} \big( \Lh^0 \big) = \lbrace \Psi \in H^1(\Omega^0)  : \quad \Lh \Psi \in L^2(\Omega^0), \quad \partial_r \Psi = 0 \quad {\rm{on }} \; \partial D(0,R) \rbrace, \]
where we recall that $\Omega^0 = \R^2 \setminus D(0,R)$. Here $\partial_r$ denotes the radial derivative. In radial coordinates $(r,\theta)$ the differential operator $\Lh$ is given by\footnote{With slight abuse of notation, we still denote by $\Lh$ the operator in radial coordinates.}
\begin{equation}
\Lh = - \frac{h^2}{r} \partial_r r \partial_r + \frac{1}{r^2}\Big(-ih \partial_\theta - \frac{B r^2}{2} \Big)^2.
\end{equation}
In particular, $\Lh^0$ commutes with $\partial_\theta$ and we can decompose according to Fourier modes. Therefore, it is unitarily equivalent to the direct sum of the family of operators
\begin{equation}
\mathscr{L}_{h,m} =- \frac{h^2}{r} \partial_r r \partial_r + \frac{1}{r^2}\Big(h m - \frac{B r^2}{2} \Big)^2, \qquad m \in \Z.
\end{equation}
For all radial functions $f$ we have $\mathscr{L}_{h,m} f = e^{im \theta}  \Lh^0 e^{-im\theta} f$. Thus, the spectrum of $\Lh^0$ is the union of the spectra of each $\mathscr{L}_{h,m}$. A more precise result follows from the discussion in \cite{FH} on the first eigenvalues.

\begin{lemma}\label{lem.m-m+}
Let $e_0 \in (-\frac 12, \frac 12]$. 
\begin{enumerate}
\item If $h$ is small enough along an $e_0$-sequence with $e_0 \neq \frac 12$, there is a unique integer $m_-$ closest to $\xi_h$, and it satisfies
\[ m_- = \xi_h - e_0 + o(1). \] 
With this $m$, the ground state energy of $\Lh^0$ is
\[ \mu_1 = \lambda_1( \mathscr{L}_{h,m_-}). \]
\item If $h$ is small enough along a $\frac 12$-sequence, the two closest integers $m_-$ and $m_+$ to $\xi_h$ satisfy
\[ \begin{cases}
m_- &= \xi_h - \frac 12 + o(1) \\
m_+ &= \xi_h + \frac 12 + o(1).
\end{cases} \]
Which one is the closest depends on $h$. The two smallest eigenvalues of $\Lh^0$ are given by
\[ \lbrace \mu_1 , \mu_2 \rbrace = \lbrace\mu_-, \mu_+ \rbrace, \]
where $\mu_{\pm} =  \lambda_1( \mathscr{L}_{h,m_\pm})$.
\end{enumerate}
\end{lemma}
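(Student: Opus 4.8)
The plan is to derive the lemma from the fibre decomposition $\Lh^0\cong\bigoplus_{m\in\Z}\mathscr{L}_{h,m}$ together with the fine analysis of \cite{FH} underlying Theorem~\ref{thm.single}. First I would record the two facts that this analysis provides. (i) For $h$ small, the three lowest eigenvalues of $\Lh^0$ are realised as \emph{first} eigenvalues of fibres, i.e.\ they are the three smallest numbers in $\{\lambda_1(\mathscr{L}_{h,m}):m\in\Z\}$, the rest of the spectrum of $\Lh^0$ lying strictly higher. (ii) There is a fixed $M>0$ such that, uniformly for integers $m$ with $|m-\xi_h|\le M$,
\[
\lambda_1(\mathscr{L}_{h,m}) = \Theta_0 Bh + \frac{\mathcal C_1\sqrt B}{R}h^{3/2} + c_0 h^2\big((\xi_h-m)^2+\mathcal C_0\big) + o(h^2),\qquad c_0=\frac{3\mathcal C_1\sqrt{\Theta_0}}{R^2},
\]
while $\lambda_1(\mathscr{L}_{h,m})$ exceeds the value of this expression at every window-point once $|m-\xi_h|>M$. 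Since the $m$-independent terms in (ii) drop out of any comparison, up to the uniform $o(h^2)$ error the map $m\mapsto\lambda_1(\mathscr{L}_{h,m})$ is increasing in $|\xi_h-m|$; combining (i) and (ii), $\mu_1$, $\mu_2$, $\mu_3$ are therefore attained at the fibres indexed by the closest, second closest and third closest integers to $\xi_h$ (this is precisely the content of Theorem~\ref{thm.single} together with the remark following it).

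Granting this, it only remains to identify those closest integers along an $e_0$-sequence $(h_n)$, for which $\xi_{h_n}-n\to e_0$ by definition, i.e.\ $\xi_{h_n}=n+e_0+o(1)$. Suppose first $e_0\in(-\tfrac12,\tfrac12)$. For $n$ large, $|\xi_{h_n}-n|=|e_0|+o(1)<\tfrac12$ whereas $|\xi_{h_n}-(n\pm1)|\ge 1-|e_0|+o(1)>\tfrac12$, so $n$ is the \emph{unique} closest integer; setting $m_-:=n$ gives $m_-=\xi_{h_n}-e_0+o(1)$. Moreover the distance of $\xi_{h_n}$ to its second closest integer exceeds that to $m_-$ by $1-2|e_0|+o(1)>0$, hence the two relevant $c_0h^2(\xi_h-m)^2$ terms in (ii) differ by a positive multiple of $h^2$, which dominates the uniform $o(h^2)$ remainders; therefore $\mu_1=\lambda_1(\mathscr{L}_{h,m_-})$, as claimed.

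Now suppose $e_0=\tfrac12$, so $\xi_{h_n}=n+\tfrac12+o(1)$. Then the two integers nearest $\xi_{h_n}$ are $m_-:=n$ and $m_+:=n+1$, with $m_\mp=\xi_{h_n}\mp\tfrac12+o(1)$; which one is strictly closer is governed by the sign of $\xi_{h_n}-n-\tfrac12$, which need not be eventually constant, so no uniform choice is possible. The third nearest integer lies at distance $\tfrac32+o(1)$, so by (ii) the two smallest values of $\lambda_1(\mathscr{L}_{h,m})$ over $m\in\Z$ are attained precisely at $m_-$ and $m_+$, and are separated from the next one by a positive multiple of $h^2$; together with (i) this gives $\{\mu_1,\mu_2\}=\{\mu_-,\mu_+\}$ with $\mu_\pm=\lambda_1(\mathscr{L}_{h,m_\pm})$.

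The only genuinely non-elementary ingredients are facts (i) and (ii), which I would simply import from the computations in \cite{FH}; once they are in hand, the rest of the lemma is elementary arithmetic about the integers nearest $\xi_h\approx n+e_0$. The step I expect to require the most care is the \emph{uniformity} of the $o(h^2)$ remainder in (ii) over the finite window of admissible angular momenta: it is precisely this uniformity that lets the explicit term $c_0h^2(\xi_h-m)^2$ decide the ordering of the fibres and thereby pin down $m_-$ (and, when $e_0=\tfrac12$, the unordered pair $\{m_-,m_+\}$).
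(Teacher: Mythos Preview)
Your proposal is correct and matches the paper's approach: the paper does not give an explicit proof of this lemma, merely stating that it ``follows from the discussion in \cite{FH} on the first eigenvalues.'' You have spelled out the elementary arithmetic (identifying the nearest integers to $\xi_h=n+e_0+o(1)$) and made explicit the two inputs (i) and (ii) from \cite{FH} that the paper takes for granted, which is more detail than the paper itself provides.
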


In any case, we will always denote by $\Phi^{-}$ and $\Phi^+$ the eigenfunctions of $\Lh^0$ with respective momenta $m_-$ and $m_+ = m_- +1$. In other words,
\begin{equation}\label{def.phipm}
 e^{-im_{\pm} \theta} \Phi^\pm \quad {\text{is the ground state of}} \quad \mathscr{L}_{h,m_\pm}.
\end{equation}
We denote by $\mu_\pm$ the associated eigenvalue.

\subsection{Decay of the eigenfunctions}

A fundamental aspect of our analysis is that we know precisely how fast the eigenfunctions $\Phi^\pm$ decay away from $r=R$. The decay rate is given by the function
 \begin{equation}\label{eq:d_function}
 d(r) = \frac{B}{4}(r^2 - R^2) - \frac{BR^2}{2} \ln( r/R), 
 \end{equation}
 which we call \emph{Agmon distance} by analogy with the case of potential wells. Indeed, one has 
 \begin{equation}
 |\Phi^\pm(r) | \leq C_\eta e^{- \frac{(1-\eta) d(r)}{h}}, 
 \end{equation}
 for all $\eta>0$. We use these bounds all through the paper. They are consequences of the following lemma, recalling that $m_\pm \sim \frac{BR^2}{2h}$ as $h \to 0$.

\begin{lemma}\label{lem.decay}
The eigenfunctions $\Phi^\pm$ of $\Lh^0$, defined in \eqref{def.phipm}, are of the form
\begin{equation}\label{eq.Phipm2}
\Phi^\pm(r,\theta) = \Big( \frac{BR^2}{h} \Big)^{\frac{\Theta_0}{4}} \Big( \frac {re^{i\theta}} R \Big)^{m_{\pm}} e^{- \frac{B}{4h}(r^2 - R^2)} u_h(r), 
\end{equation}
where $u_h$ is a real-analytic function on ${\rm{Re}}(r^2) >0$, which is locally uniformly bounded with respect to $h$. Moreover,
\begin{itemize}
\item For all $r$ such that ${\rm{Re}}(r^2) >R^2$, $u_h(r)$ is asymptotically given by
\begin{equation}\label{eq.uhr}
u_h(r) =  \frac{K_0  \Gamma(\delta_0)}{R} \Big( \frac{2R^2}{r^2-R^2} \Big)^{\delta_0}  \Big( 1 + o \Big( \frac{h^{\frac 12}}{r^2-R^2} \Big) \Big), 
\end{equation}
as $h \to 0$, where $\delta_0 = \frac{1 - \Theta_0}{2}$ and $K_0>0$ is a universal constant.
\item For all $t>0$ there is a $v(t) >0$ such that
\begin{equation} \label{eq.uhR}
 u_h \Big(R+ t \sqrt{\frac hB} \Big) = \frac{K_0 v(t)}{R} \Big( \frac{h}{BR^2} \Big)^{\frac{\Theta_0 - 1}{4}} \big( 1 +o(1) \big)
 \end{equation}
 as $h \to 0$. In fact, $v$ is a solution to
 \[ -v''(t) + 2(t- \sqrt{\Theta_0}) v'(t) + v(t) = \Theta_0 v(t).\] 
 \item For all $t>0$,
 \begin{equation}\label{eq.uh'}
  \partial_r u_h\Big(R+t \sqrt{\frac h B} \Big) = \frac{K_0 v'(t)}{ R^2} \Big( \frac{h}{BR^2} \Big)^{\frac{\Theta_0 - 3}{4}} \big( 1+ o(1) \big)
  \end{equation}
 as $h \to 0$.
\end{itemize}
\end{lemma}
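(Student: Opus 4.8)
The plan is to reduce everything to the one–dimensional radial operator $\mathscr{L}_{h,m_\pm}$ and analyze the corresponding ODE via the theory of the Kummer (confluent hypergeometric) function, as in \cite{FH,FSW22}. First I would recall that, by \eqref{def.phipm}, $e^{-im_\pm\theta}\Phi^\pm$ is the ground state of $\mathscr{L}_{h,m}$ with $m=m_\pm$, and that the eigenvalue equation $\mathscr{L}_{h,m}\phi=\mu\phi$ on $(R,\infty)$ with a Neumann condition at $r=R$ is, after the substitution $s=\tfrac{B}{2h}r^2$, exactly Kummer's equation. The natural decomposition is to pull out the explicit factor $(re^{i\theta}/R)^{m_\pm}e^{-\frac{B}{4h}(r^2-R^2)}$, which is the ``WKB/Gaussian'' part carrying precisely the Agmon decay $e^{-d(r)/h}$ once one uses $m_\pm=\frac{BR^2}{2h}+O(h^{-1/2})$: indeed $m_\pm\ln(r/R)-\frac{B}{4h}(r^2-R^2)\sim -\frac{1}{h}d(r)$ plus a lower-order contribution $\sqrt{\Theta_0 BR^2/h}\,\ln(r/R)$ coming from the $h^{-1/2}$ correction in $m_\pm$, which is why the prefactor $(2R^2/(r^2-R^2))^{\delta_0}$ appears in \eqref{eq.uhr}. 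The remaining factor is $u_h(r)$, and one identifies it (up to the normalization constant $(BR^2/h)^{\Theta_0/4}$) with a suitable constant multiple of $U(a,b;s)$, the Kummer function of the second kind, where the parameters $a,b$ are fixed by the eigenvalue $\mu_\pm$ and by $m_\pm$; analyticity of $U$ on $\mathrm{Re}(s)>0$ gives real-analyticity of $u_h$ on $\mathrm{Re}(r^2)>0$, and local uniform boundedness in $h$ follows from uniform control of the parameters.

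The three asymptotic statements then correspond to three regimes of the Kummer function. For \eqref{eq.uhr}, i.e.\ $r$ bounded away from $R$ with $\mathrm{Re}(r^2)>R^2$, the argument $s=\frac{B}{2h}r^2$ tends to $+\infty$, so I would use the large-argument asymptotics $U(a,b;s)\sim s^{-a}$ together with the precise form of $a$; the combination of the Gamma factor $\Gamma(\delta_0)$, the power $(2R^2/(r^2-R^2))^{\delta_0}$ and the universal constant $K_0$ is extracted by matching with the behavior near $r=R$, and the error term $o(h^{1/2}/(r^2-R^2))$ is the next order in the asymptotic expansion of $U$, valid uniformly as long as $r^2-R^2\gg h^{1/2}$. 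For \eqref{eq.uhR} and \eqref{eq.uh'}, i.e.\ $r=R+t\sqrt{h/B}$ in the boundary layer, the right scaling is the ``turning point'' regime: writing $r=R(1+\sqrt{h/(BR^2)}\,\tau)$ one performs the standard parabolic-cylinder/Langer rescaling of Kummer's equation, which in the limit $h\to0$ converges to the constant-coefficient ODE $-v''+2(t-\sqrt{\Theta_0})v'+v=\Theta_0 v$ stated in the lemma (this is exactly the model operator whose lowest eigenvalue is $\Theta_0$ and whose ground state governs the boundary concentration in \cite{FH}). The function $v$ is then defined as the limit of the rescaled $u_h$, the prefactor $(h/(BR^2))^{(\Theta_0-1)/4}$ is dictated by matching the $s^{-a}$ decay to the boundary-layer normalization, and \eqref{eq.uh'} follows by differentiating the convergence, picking up one extra factor of $\sqrt{B/h}$ from the chain rule, which produces the exponent $(\Theta_0-3)/4$.

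The main technical obstacle, as usual in these WKB-with-turning-point arguments, is making the convergence of the rescaled problem to the model ODE \emph{quantitative and uniform} — i.e.\ controlling the error in the substitution of Kummer's equation by its limiting constant-coefficient equation on $t$-intervals and transferring this to $C^1$ convergence with the stated $o(1)$ — and in particular pinning down the single universal constant $K_0$ consistently across the two regimes \eqref{eq.uhr} and \eqref{eq.uhR}. This is a matching problem: $K_0$ and $v$ must be chosen once and for all so that the large-$s$ tail of the boundary-layer solution agrees with the bounded-$r$ asymptotics, and verifying this compatibility (together with positivity $K_0>0$, $v(t)>0$, which comes from $u_h$ being a ground state and hence sign-definite) is where the care is needed. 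The $h$-dependence of $m_\pm$ — which is the novelty emphasized in the introduction — enters precisely here, through the $\sqrt{\Theta_0 BR^2/h}$ term in $\xi_h$ that must be tracked through all the exponents; everything else is a careful but routine unwinding of known asymptotics of $U(a,b;s)$, and I would relegate those computations to the structure already set up in \cite{FH} and Appendix~\ref{sec.phase}.
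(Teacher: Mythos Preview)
Your overall strategy---identify the radial part with a Kummer function and extract asymptotics in the two regimes---is the right one and matches the paper in spirit, but the paper executes it differently and your sketch has one genuine gap. The paper does not invoke the abstract large-argument asymptotics of $U(a,b;s)$ or an ODE-convergence argument; instead it writes down the explicit integral representation
\[
\Phi^{\pm}(r,\theta) = C_{\pm}\, e^{im_{\pm}\theta} r^{m_{\pm}} \int_0^\infty e^{- \frac{B r^2}{4h}(1+2s)} (1+s)^{m_{\pm} - \delta_\pm} s^{\delta_\pm -1}\,ds,
\]
reads off $u_h$ from this, and applies the Laplace method with two different rescalings of $s$: $s\mapsto \tfrac{h}{BR^2}s'$ for \eqref{eq.uhr} and $s\mapsto \sqrt{\tfrac{h}{BR^2}}\,s'$ for \eqref{eq.uhR}. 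The function $v(t)$ then emerges as an explicit integral $v(t)=\int_0^\infty e^{-(s^2/4+ts-\sqrt{\Theta_0}s)}s^{\delta_0-1}\,ds$, and the ODE it satisfies is a byproduct rather than the starting point. This buys something concrete: the Kummer parameter $b=m_\pm+1$ is of order $h^{-1}$, so the textbook asymptotics $U(a,b;s)\sim s^{-a}$ you invoke are not directly applicable---you would need \emph{double} asymptotics uniform in both argument and parameter, which is exactly what the Laplace method on the integral representation delivers for free. Your remark that this is ``routine unwinding of known asymptotics of $U$'' understates the difficulty.

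The actual gap is the determination of $K_0$. You frame it as a matching problem between the inner and outer expansions, but matching only fixes the \emph{ratio} of the constants in \eqref{eq.uhr} and \eqref{eq.uhR}; it cannot fix the overall scale. In the paper, $K_0$ is computed from the $L^2$ normalization $\|\Phi^\pm\|=1$: one evaluates $\int_R^\infty |\Phi^\pm|^2\,2\pi r\,dr$ by another Laplace-method calculation in the boundary layer $r=R+t\sqrt{h/B}$, which produces $K_0^{-2}=2\pi\int_0^\infty e^{-(t^2-2t\sqrt{\Theta_0})}v(t)^2\,dt$ and simultaneously pins down the normalization constant $C_\pm$. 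Your proposal never mentions normalization, so as written there is no mechanism to produce a specific universal $K_0>0$.
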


\begin{proof}
We have an explicit formula for $\Phi^\pm$ in terms of the Kummer function,
\[ \Phi^{\pm}(r,\theta) = C_{\pm} e^{im_{\pm} \theta} r^{m_{\pm}} \int_0^\infty e^{- \frac{B r^2}{4h}(1+2s)} (1+s)^{m_{\pm} - \delta_\pm} s^{\delta_\pm -1} ds,\]
where $\delta_\pm = \frac{1}{2} - \frac{\mu_{\pm}}{2 B h}$, and $C_{\pm} \neq 0$ is an $h$-dependent normalization constant. This formula can be derived as in \cite{FMR}, but one can also check directly that it is indeed a solution to the eigenvalue equation. In particular, $\Phi^\pm$ has the form \eqref{eq.Phipm2}, with
\begin{equation}\label{eq.defuh}
 u_h(r) = C_\pm R^{m_\pm} \Big( \frac{h}{BR^2}\Big)^{\frac{\Theta_0}{4}} e^{- \frac{BR^2}{4h}} \int_0^\infty e^{- \frac{Bs r^2}{2h} + m_\pm \ln( 1+s)} \frac{s^{\delta_\pm - 1}}{(1+s)^{\delta_\pm}} \dd s. 
 \end{equation}
The remaining of the proof consists in estimating $u_h$ and $C_\pm$ using the Laplace method. Note that the phase involves the two different frequencies $h$ and $\sqrt{h}$ via $m_{\pm}$.

1. To estimate $u_h(r)$, we start by the substitution $s = \frac{h}{BR^2} s'$ suggested by the Laplace method, which gives
\[ u_h(r) = C_\pm R^{m_\pm} \Big( \frac{h}{BR^2}\Big)^{\frac{\Theta_0}{4}  + \delta_\pm} e^{- \frac{BR^2}{4h}} \int_0^\infty e^{- \frac{s r^2}{2R^2} + m_\pm \ln \big( 1+ \frac{h}{BR^2} s \big)} s^{\delta_\pm - 1} \big(1+ \mathcal O ( h s) \big) \dd s. \]
 The involved phase satisfies
\begin{align*}
\frac{sr^2}{2R^2} - m_{\pm}(h) \ln \big( 1+ \frac{h}{BR^2} s \big) = \frac{ (r^2 - R^2) s}{2R^2} + \mathscr{O}(\sqrt{h}s).
\end{align*}
Therefore, the Laplace method gives
\begin{equation}\label{eq.uhr0}
 u_h(r) =  C_\pm R^{m_\pm} \Big( \frac{h}{BR^2}\Big)^{ \frac 12 - \frac{\Theta_0}{4}} e^{- \frac{BR^2}{4h}}  \Big( \frac{2 R^2}{ r^2-R^2} \Big)^{\delta_0} \Gamma ( \delta_0) \Big( 1 + o \Big( \frac{h^{\frac 12}}{r^2-R^2} \Big) \Big),
\end{equation}
where we also replaced $\delta_\pm$ by its asymptotic value $\delta_0 = \frac{1 - \Theta_0}{2}$.

2. Near $r=R$ we need finer estimates, and we consider for any $t>0$,
\[ u_h \Big(R + t \sqrt{\frac h B } \Big) =C_\pm R^{m_\pm} \Big( \frac{h}{BR^2}\Big)^{\frac{\Theta_0}{4}} e^{- \frac{BR^2}{4h}} \int_0^\infty e^{- \frac{Bs}{2 h} \big(R+t\sqrt{\frac hB} \big)^2 + m_{\pm} \ln(1+s)}  \frac{s^{\delta_{\pm}-1}}{(1+s)^{\delta_{\pm}}} \dd s.\]
This time, the Laplace method suggests the substitution $s = \sqrt{\frac{h}{BR^2}} s'$, and we obtain
\[ u_h \Big(R + t \sqrt{\frac h B } \Big)  = C_\pm R^{m_\pm} \Big( \frac{h}{BR^2}\Big)^{\frac{\Theta_0}{4} + \frac{\delta_\pm}{2}} e^{- \frac{BR^2}{4h}}  \int_0^\infty e^{- f(s,t,h)} s^{\delta_\pm - 1} \big(1+ \mathcal O ( \sqrt h s) \big) \dd s, \]
with phase
\begin{align*}
f(s,t,h) &=  \sqrt{\frac{BR^2}{h}} \frac s 2 \Big(1+ t \sqrt{\frac{h}{BR^2}} \Big)^2 - m_{\pm} \ln \Big(1+ s \sqrt{\frac{h}{BR^2}} \Big) \\
&= \sqrt{\frac{BR^2}{h}} \frac s 2 + ts + \frac{t^2 s}{2} \sqrt{\frac{h}{BR^2}} \\ &\qquad - \Big( \frac{BR^2}{2h} + \sqrt{\frac{\Theta_0 BR^2}{h}} + \mathscr{O}(1) \Big) \Big(s \sqrt{ \frac{h}{BR^2}} - \frac{h s^2}{2BR^2} + \mathscr{O}(h^{3/2}s^3) \Big) \\
&= ts + \frac{s^2}{4} -  \sqrt{\Theta_0} s + \mathscr{O}(\sqrt{h}s(1+s^2)).
\end{align*}
Thus, the Laplace method gives
\begin{equation}\label{eq.uhloc}
 u_h \Big(R + t \sqrt{\frac h B } \Big)  = C_\pm R^{m_\pm} \Big( \frac{h}{BR^2}\Big)^{\frac 14} e^{- \frac{BR^2}{4h}}  \int_0^\infty e^{- \big( \frac{s^2}{4} + ts - \sqrt{\Theta_0} s \big)} s^{\delta_0 - 1} \dd s  \big(1+ o(1) \big) .
 \end{equation}

3. We now estimate $C_\pm$, using that $\| \Phi^\pm \| =1$. In polar coordinates we deduce
\[1 =\Big( \frac{BR^2}{h} \Big)^{\frac{\Theta_0}{2}} \int_R^\infty \Big( \frac rR \Big)^{2 m_{\pm}} e^{- \frac{B}{2h}(r^2 - R^2)} |u_h(r)|^2 2 \pi r \dd r . \]
We use the substitution $r = R + t \sqrt{\frac h B}$, and the estimate \eqref{eq.uhloc} to obtain
\[  1 = C_\pm^2 R^{2m_\pm +1 }\Big( \frac{h}{BR^2}\Big)^{1-\frac{\Theta_0}{2}} \int_{0}^\infty e^{- \frac{B}{2h}\big( R + t\sqrt{\frac hB} \big)^2 + 2 m_\pm \ln \big( 1 + t\sqrt{\frac{h}{BR^2}} \big) } v(t)^2 2\pi R \big(1+o(1) \big) \dd t, \]
where 
\[ v(t) = \int_0^\infty e^{- \big( \frac{s^2}{4} + ts - \sqrt{\Theta_0} s \big)} s^{\delta_0 - 1} \dd s. \]
The involved phase satisfies
\begin{align*}
\frac{B}{2h}\Big(R + t\sqrt{\frac hB} \Big)^2 -2m_{\pm} \ln \Big(1+ t\sqrt{\frac{h}{BR^2}}\Big) = \frac{BR^2}{2h} + t^2 - 2 t \sqrt{\Theta_0} + \mathscr{O}(\sqrt{h}(1+t^3)).
\end{align*}
Thus, the Laplace method gives
\[ 1 =  C_\pm^2 R^{2 m_\pm+2 } \Big( \frac{h}{BR^2}\Big)^{1-\frac{\Theta_0}{2}} e^{- \frac{BR^2}{2h}} 2 \pi  \int_0^\infty e^{- \big( t^2 - 2t \sqrt{\Theta_0} \big)} v(t)^2 \dd t \big( 1+ o(1) \big).\]
Therefore we find
\begin{equation}\label{eq.Cpm}
C_\pm = K_0 R^{-m_\pm-1}  \Big( \frac{h}{BR^2}\Big)^{\frac{\Theta_0}{4}-\frac 12} e^{\frac{BR^2}{4h}} \big( 1+o(1) \big),
\end{equation}
where $K_0^{-2}= 2\pi \int_0^\infty e^{- \big( t^2 - 2t \sqrt{\Theta_0} \big)} v(t)^2 \dd t$. Now, we can insert \eqref{eq.Cpm} in \eqref{eq.uhr0} and \eqref{eq.uhloc} to find \eqref{eq.uhr} and \eqref{eq.uhR}.

4. Coming back to \eqref{eq.defuh} and differentiating we have
\[ \partial_r u_h(r) =  - C_\pm R^{m_\pm} \frac{ B r }{h} \Big( \frac{h}{BR^2}\Big)^{\frac{\Theta_0}{4}} e^{- \frac{BR^2}{4h}} \int_0^\infty e^{- \frac{Bs r^2}{2h} + m_\pm \ln( 1+s)} \frac{s^{\delta_\pm}}{(1+s)^{\delta_\pm}} \dd s.   \]
Therefore, similarly to \eqref{eq.uhloc} we find
\[ \partial_r u_h \Big( R + t\sqrt{\frac hB} \Big) = - C_\pm R^{m_\pm-1} \Big( \frac{h}{BR^2}\Big)^{-\frac 14} e^{- \frac{BR^2}{4h}}  \int_0^\infty e^{- \big( \frac{s^2}{4} + ts - \sqrt{\Theta_0} s \big)} s^{\delta_0} \dd s  \big(1+ o(1) \big) .\]
Note that the integral is equal to $-v'(t)$. It remains to use \eqref{eq.Cpm} to find \eqref{eq.uh'}.
\end{proof}

\section{Reduction \texorpdfstring{when $e_0 \neq \frac 12$}{(Case 1)}}\label{sec.reduction1}

In all of this section, we assume that $e_0 \in \big( - \frac 12, \frac 12 \big)$, and that $h$ is as small as we want along an $e_0$-sequence. In this case, Theorem \ref{thm.single} gives a spectral gap between the first two eigenvalues of the single-obstacle operator $\Lh^0$,
\begin{equation}
\mu_2 - \mu_1 \geq c_0 (1-2 |e_0|) h^2 + o(h^2).
\end{equation}
By Lemma \ref{lem.m-m+}, $\mu_1$ is the ground state energy of $\mathscr{L}_{h,m_-}$ with
\[m_- = \xi_h - e_0 + o(1) = \frac{BR^2}{2h} + \sqrt{\frac{\Theta_0 BR^2}{h}} + \mathcal C_2 - e_0 + o(1).\]
The ground state eigenfunction is $\Phi^-$.

We fix any discrete set of centers $V \subset \R^2$, with minimal distance
\[ L = \inf_{\substack{\alpha,\beta \in V \\ \alpha \neq \beta}} | \alpha - \beta |, \]
and we assume that the disks with centers $\alpha \in V$ and radius $R>0$ do not touch, i.e. $L>2R$. Then $\Lh^V$ denotes the magnetic-Neumann realization of $\Lh$ on \[ \Omega_V = \R^2 \setminus \left( \bigcup_{\alpha \in V} D(\alpha,R) \right).\]
We denote by $\Pi := \one_{(-\infty, \mu_1 + c_0 \frac{1-2|e_0|}{2} h^2)}(\Lh^V)$ the spectral projection of $\Lh^V$ corresponding to the interval $(-\infty, \mu_1 + c_0 \frac{1-2|e_0|}{2} h^2)$, and $\mathscr{E}_h = {\rm{Ran}}( \Pi)$.

\subsection{Agmon estimates}

It is well known that the eigenfunctions of $\Lh^V$ are exponentially localized near the boundary of the disks $D(\alpha,R)$. This follows from the so-called Agmon estimates. The following is an adaptation of the Agmon estimates in \cite[Theorem 8.2.4]{FH} to states $\Psi$ which are not necessarily eigenfunctions.

\begin{lemma}\label{lem.Agmon}
There exist $\kappa, C >0$ such that, if $h$ is small enough and if $\Psi \in L^2(\Omega_V)$ satisfies $\Psi = \one_{(-\infty, \frac{1+ \Theta_0}{2} B h]}(\Lh^V)\Psi$, then
\[ \int_{\Omega_V} e^{\frac{\kappa}{\sqrt{h}} \mathfrak{d}(x, \partial \Omega_V)} \Big( |\Psi |^2 + h^{-1} |(-ih \nabla - A) \Psi |^2 \Big) \dd x \leq C \| \Psi \|^2. \]
Here $\mathfrak{d}(x,\partial \Omega_V)$ is the Euclidean distance to the obstacles, and $A=(- \frac B2 x_2, \frac B2 x_1)$.
\end{lemma}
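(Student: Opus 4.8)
This is the standard Agmon localisation estimate for the magnetic Neumann Laplacian, in the form of \cite[Theorem~8.2.4]{FH}; the plan is to run the classical scheme and to isolate the one place where the hypothesis that $\Psi$ only lies in a spectral subspace (rather than being an eigenfunction) requires care. Two facts drive the argument. First, \emph{confinement}: away from the obstacles $\Lh^V$ is just the free magnetic Laplacian $(-ih\nabla-A)^2$ on $\R^2$, whose bottom is $Bh$; hence $Q_h^V(\phi)\ge Bh\|\phi\|^2$ for every $\phi$ in the form domain of $\Lh^V$ (with quadratic form $Q_h^V$) whose support stays at positive distance from $\partial\Omega_V$, since then the zero extension of $\phi$ lies in $H^1(\R^2)$. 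Because $\Theta_0<1$, the cut-off energy $E_0:=\tfrac{1+\Theta_0}{2}Bh$ lies strictly below $Bh$, with a gap $\tfrac{1-\Theta_0}{2}Bh$; this is what forces states of energy $\le E_0$ to concentrate in a $\sqrt h$-neighbourhood of $\partial\Omega_V$ --- the scale on which the true Agmon weight $d$ of \eqref{eq:d_function} is quadratic near $r=R$. Second, $\Lh^V\ge0$, so the spectral window relevant to $\Psi$ is contained in $[0,E_0]$ and $\|(\Lh^V-E_0)\Psi\|\le E_0\|\Psi\|=O(h)\|\Psi\|$.

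Fix a bounded Lipschitz weight $\Phi$ with $|\nabla\Phi|\le\tfrac{\kappa}{2}$ a.e.\ --- concretely $\Phi=\tfrac{\kappa}{2}\min\bigl(\mathfrak d(\cdot,\partial\Omega_V),T\bigr)$, with $T\to\infty$ at the end by monotone convergence --- and set $w=e^{\Phi/\sqrt h}$, so that $w^2=e^{\kappa\min(\mathfrak d,T)/\sqrt h}$. Since on each $\partial D(\alpha,R)$ the vector potential $A$ is tangential and $\partial_\nu\Psi=0$, integration by parts produces no boundary contribution and the usual IMS computation gives
\[ Q_h^V(w\Psi)=\operatorname{Re}\langle\Lh^V\Psi,\,w^2\Psi\rangle+h\,\big\||\nabla\Phi|\,w\Psi\big\|^2 . \]
Introduce a quadratic partition $\chi_1^2+\chi_2^2=1$ with $\chi_1$ supported in $\{\mathfrak d\le 2C_0\sqrt h\}$, $\chi_2$ in $\{\mathfrak d\ge C_0\sqrt h\}$ and $|\nabla\chi_j|\le C/(C_0\sqrt h)$. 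Applying IMS to $w\Psi$, discarding $Q_h^V(\chi_1w\Psi)\ge0$ and using $Q_h^V(\chi_2w\Psi)\ge Bh\|\chi_2w\Psi\|^2$ (its support avoids $\partial\Omega_V$) yields
\[ Bh\,\|\chi_2w\Psi\|^2\le \operatorname{Re}\langle\Lh^V\Psi,\,w^2\Psi\rangle+\Bigl(\tfrac{\kappa^2}{4}+\tfrac{C}{C_0^2}\Bigr)h\,\|w\Psi\|^2 . \]

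It remains to control $\operatorname{Re}\langle\Lh^V\Psi,w^2\Psi\rangle$. Write $\Lh^V\Psi=E_0\Psi+(\Lh^V-E_0)\Psi$; the first term gives exactly $E_0\|w\Psi\|^2$. For the second --- which for an eigenfunction would simply be $\le0$ --- split $w^2\Psi=\chi_1^2w^2\Psi+\chi_2^2w^2\Psi$: on $\operatorname{supp}\chi_1$ one has $w\le e^{\kappa C_0}$, so $\|\chi_1^2w^2\Psi\|\le C\|\Psi\|$ and, using $\|(\Lh^V-E_0)\Psi\|\le E_0\|\Psi\|$, this part is $O(h)\|\Psi\|^2$; the far part $\operatorname{Re}\langle(\Lh^V-E_0)\Psi,(\chi_2w)(\chi_2w\Psi)\rangle$ is absorbed into $Bh\|\chi_2w\Psi\|^2$ on the left by the standard Agmon bookkeeping of \cite[Ch.~8]{FH} (Cauchy--Schwarz together with the commutator estimate $\|[\chi_2w,\Lh^V]\Psi\|\lesssim\sqrt h\,Q_h^V(w\Psi)^{1/2}+Ch\|w\Psi\|$, whose right-hand side is itself dominated by the terms already present, up to a constant made small by taking $C_0$ large; here one mollifies $\Phi$ if needed). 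Feeding this back and choosing $\kappa$ small and $C_0$ large so that $E_0+\tfrac{\kappa^2}{4}h+\tfrac{C}{C_0^2}h<\tfrac{3+\Theta_0}{4}Bh<Bh$, the term $\|\chi_2w\Psi\|^2$ passes to the left with coefficient $\ge\tfrac{1-\Theta_0}{4}Bh$; combined with $\|w\Psi\|^2=\|\chi_1w\Psi\|^2+\|\chi_2w\Psi\|^2\le C\|\Psi\|^2+\|\chi_2w\Psi\|^2$ (again $w\le e^{\kappa C_0}$ on $\operatorname{supp}\chi_1$) this gives $\|w\Psi\|^2\le C\|\Psi\|^2$ with $C$ independent of $T$. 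Letting $T\to\infty$ yields $\int e^{\kappa\mathfrak d/\sqrt h}|\Psi|^2\le C\|\Psi\|^2$. Finally the displayed form identity gives $Q_h^V(w\Psi)=O(h)\|\Psi\|^2$, and since $w(-ih\nabla-A)\Psi=(-ih\nabla-A)(w\Psi)+ih(\nabla w)\Psi$ with $|\nabla w|\le\tfrac{\kappa}{2\sqrt h}w$, also $\|w(-ih\nabla-A)\Psi\|^2=O(h)\|\Psi\|^2$, i.e.\ the $h^{-1}$-weighted gradient term has the claimed bound.

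The main obstacle is precisely the term $\operatorname{Re}\langle(\Lh^V-E_0)\Psi,w^2\Psi\rangle$: for a state that is only spectrally localised it need not be $\le0$, and $w^2$ is exponentially large exactly where decay is wanted, so it cannot be discarded by a crude bound; one must exploit that $\Psi$ occupies an $O(h)$-wide energy window and absorb the remainder into the confinement term $Bh\|\chi_2w\Psi\|^2$, keeping all constants uniform in the truncation $T$. (When $\Lh^V$ has discrete spectrum below $E_0$ --- e.g.\ for finite $V$ --- one may instead expand $\Psi$ in an eigenbasis and reduce to the eigenfunction case, at the cost of a constant depending on $\dim\operatorname{Ran}\one_{(-\infty,E_0]}(\Lh^V)$; for the general case, including the periodic lattice, one can alternatively run a Helffer--Sjöstrand argument, writing $\Psi=f(\Lh^V)\Psi$ with $f\equiv1$ on $(-\infty,E_0]$ and $\operatorname{supp}f\subset(-\infty,Bh)$ and estimating $e^{\Phi/\sqrt h}f(\Lh^V)e^{-\Phi/\sqrt h}$ through the distorted operator $e^{\Phi/\sqrt h}\Lh^Ve^{-\Phi/\sqrt h}=\Lh^V+O(\kappa)$.)
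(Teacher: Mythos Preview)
Your overall scheme is right, and you correctly isolate the one genuinely delicate point: for a state $\Psi$ that is only spectrally localised in $(-\infty,E_0]$, the term $\Re\langle(\Lh^V-E_0)\Psi,\,w^2\Psi\rangle$ need not be $\le 0$. However, your proposed fix for the ``far part'' $\Re\langle(\Lh^V-E_0)\Psi,\,\chi_2^2w^2\Psi\rangle$ via commutators does not close. If you unwind it, the identity
\[
\Re\langle(\Lh^V-E_0)\Psi,\,\chi_2^2w^2\Psi\rangle
= Q_h^V(\chi_2 w\Psi)-E_0\|\chi_2 w\Psi\|^2 - h^2\big\||\nabla(\chi_2 w)|\,\Psi\big\|^2
\]
(which is exactly the localisation formula you already used with $f=\chi_2 w$) feeds $Q_h^V(\chi_2 w\Psi)$ back into the right-hand side of your displayed inequality, where it cancels against the very lower bound $Bh\|\chi_2 w\Psi\|^2\le Q_h^V(\chi_2 w\Psi)$ you used on the left. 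What remains is a tautology with no control on $\|\chi_2 w\Psi\|^2$. The commutator estimate you quote is correct, but it only repackages the same cancellation; nothing is actually absorbed.

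The paper closes this gap by a short bootstrap that you do not have. Instead of working with a fixed $\Psi$, define
\[
\mathcal A \;=\; \sup\Big\{ \tfrac{\|\chi_\varepsilon\, e^{\kappa\mathfrak d_M/\sqrt h}\,\Phi\|}{\|\Phi\|} \;:\; \Phi=\one_{(-\infty,E_0]}(\Lh^V)\Phi \Big\}\;\le\; e^{\kappa M/\sqrt h}<\infty,
\]
with $\chi_\varepsilon$ your $\chi_2$. The key observation is that $\Lh^V$ preserves the spectral subspace, so $\Lh^V\Psi$ is itself an admissible $\Phi$ with $\|\Lh^V\Psi\|\le E_0\|\Psi\|$. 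Hence
\[
\Re\langle \chi_\varepsilon^2 e^{2\kappa\mathfrak d_M/\sqrt h}\Psi,\,\Lh^V\Psi\rangle
\le \|\chi_\varepsilon e^{\kappa\mathfrak d_M/\sqrt h}\Psi\|\cdot\|\chi_\varepsilon e^{\kappa\mathfrak d_M/\sqrt h}\Lh^V\Psi\|
\le E_0\,\mathcal A\,\|\chi_\varepsilon e^{\kappa\mathfrak d_M/\sqrt h}\Psi\|.
\]
Taking the supremum over normalised $\Psi$ turns the IMS inequality into the closed relation
\[
Bh\,\mathcal A^2 \;\le\; \big(E_0 + 2\kappa^2 h\big)\,\mathcal A^2 \;+\; C(\kappa)\,h,
\]
which for $\kappa^2<\tfrac{1-\Theta_0}{4}B$ gives $\mathcal A\le C$ uniformly in $h$ and in the truncation parameter $M$. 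This single line replaces your entire treatment of the far part and avoids the circularity. Your Helffer--Sj\"ostrand alternative at the end would also work, but is heavier than needed; the supremum trick is the natural device here.
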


\begin{proof}
Define for $M > 0$, the regularized (at infinity) distance
$$
\mathfrak{d}_M =\min\{ \mathfrak{d},M\}.
$$
Let $\chi \in C^{\infty}({\mathbb R})$ be non-decreasing with $\chi(t) = 0$ for $t\leq 1$, $\chi(t)=1$ for $t\geq 2$, and let
$\chi_{\varepsilon}(x) = \chi(\mathfrak{d}/\varepsilon)$, for $\varepsilon >0$.
Define
$$
{\mathcal A} := \sup_{\{\Phi  = \one_{(-\infty, \frac{1+ \Theta_0}{2} B h]}(\Lh^V)\Phi \}}
\frac{ \| \chi_{\varepsilon} e^{\frac{\kappa}{\sqrt{h}} \mathfrak{d}_M} \Phi \|}{\|\Phi\|} \leq e^{\frac{\kappa}{\sqrt{h}} M}.
$$
We will prove that for $\varepsilon = \sqrt{h}/2$ and for sufficiently small values of $h$, ${\mathcal A}$ is bounded independently of $h$ and $M$.

Let $\Psi = \one_{(-\infty, \frac{1+ \Theta_0}{2} B h]}(\Lh^V)\Psi$ be normalized.
Consider the expression
\begin{align}\label{eq:agmon_Lowerquadratic}
\langle \chi_{\varepsilon} e^{\frac{\kappa}{\sqrt{h}} \mathfrak{d}_M} \Psi, \Lh^V \chi_{\varepsilon} e^{\frac{\kappa}{\sqrt{h}} \mathfrak{d}_M} \Psi \rangle \geq B h \| \chi_{\varepsilon} e^{\frac{\kappa}{\sqrt{h}} \mathfrak{d}_M} \Psi \|^2,
\end{align}
by the support properties of $\chi_{\varepsilon}$.
By the so-called IMS-localization formula (or equivalently: integration by parts), we find
\begin{align*}
\langle \chi_{\varepsilon} e^{\frac{\kappa}{\sqrt{h}} \mathfrak{d}_M} \Psi, &\Lh^V \chi_{\varepsilon} e^{\frac{\kappa}{\sqrt{h}} \mathfrak{d}_M} \Psi \rangle \\
&=
\Re \langle \chi_{\varepsilon}^2 e^{2\frac{\kappa}{\sqrt{h}} \mathfrak{d}_M} \Psi, \Lh^V \Psi \rangle
+ h^2 \langle \Psi, |\nabla (\chi_{\varepsilon} e^{\frac{\kappa}{\sqrt{h}} \mathfrak{d}_M} )|^2 \Psi \rangle
\end{align*}
We use that $\| \chi_\varepsilon e^{\frac{\kappa}{\sqrt h} \mathfrak{d}_M} \mathscr{L}_h^V \Psi \| \leq \mathcal{A} \| \mathscr{L}_h^V \Psi \|$ to deduce
\begin{multline}\label{eq:agmon_IMS}
\langle \chi_{\varepsilon} e^{\frac{\kappa}{\sqrt{h}} \mathfrak{d}_M} \Psi, \Lh^V \chi_{\varepsilon} e^{\frac{\kappa}{\sqrt{h}} \mathfrak{d}_M} \Psi \rangle \leq \frac{1+ \Theta_0}{2} B h{{\mathcal A}} \| \chi_{\varepsilon} e^{\frac{\kappa}{\sqrt{h}} \mathfrak{d}_M} \Psi \| 
+ 2 \kappa^2 h\| \chi_{\varepsilon} e^{\frac{\kappa}{\sqrt{h}} \mathfrak{d}_M} \Psi \| ^2  \\
+ 2 h^2 \varepsilon^{-2}\| \chi' \|_{\infty}^2 \int_{\{ 1 \leq \varepsilon^{-1} \mathfrak{d}(x) \leq 2\}} |e^{\frac{\kappa}{\sqrt{h}} \mathfrak{d}_M} \Psi|^2\,dx.
\end{multline}
We choose $\varepsilon = \sqrt{h}/2$ and get from combining \eqref{eq:agmon_Lowerquadratic} and \eqref{eq:agmon_IMS} and taking the supremum over all normalized $\Psi \in {\rm Ran} \one_{(-\infty, \frac{1+ \Theta_0}{2} B h]}(\Lh^V)$,
\begin{align}
Bh {\mathcal A}^2 \leq (\frac{1+ \Theta_0}{2} B h + 2 h \kappa^2) {\mathcal A}^2  + 8 h \| \chi' \|_{\infty}^2 e^{2\kappa}. 
\end{align}
If $\kappa^2 \leq \frac{1- \Theta_0}{4} B$, we therefore get
\begin{align}
{\mathcal A} \leq \left( \frac{1- \Theta_0}{4} B - \kappa^2\right)^{-1} 4 \| \chi' \|_{\infty}^2 e^{2\kappa}.
\end{align}
In particular, ${\mathcal A}$ is bounded independently of $h$ and $M$. Using \eqref{eq:agmon_IMS} again, we can also bound
\[ \| (-ih \nabla - A) \chi_\varepsilon e^{\frac{\kappa}{\sqrt h} \mathfrak{d}_M } \Psi \|^2 \leq Ch, \]
uniformly in $M$ and $h$. Therefore
\[ \int_{\Omega_V} \Big( \big| \chi_{\varepsilon} e^{\frac{\kappa}{\sqrt h} \mathfrak{d}_M } \Psi \big|^2 + h^{-1}\big| (-ih \nabla - A) \chi_{\varepsilon}  e^{\frac{\kappa}{\sqrt h} \mathfrak{d}_M } \Psi \big|^2 \Big) \dd x \leq C. \]
We can let $M \to \infty$ to change $\mathfrak d_M$ into $\mathfrak d$. Moreover, on the region where $\chi_\varepsilon \neq 1$ the exponential is bounded. We deduce that
\[ \int_{\Omega_V} \Big( \big| e^{\frac{\kappa}{\sqrt h} \mathfrak{d} } \Psi \big|^2 + h^{-1}\big| (-ih \nabla - A) e^{\frac{\kappa}{\sqrt h} \mathfrak{d}} \Psi \big|^2 \Big) \dd x \leq C,\]
and the result follows.
\end{proof}

In particular, using the Agmon estimates we can show that the spectrum of $\Lh^V$ in $(-\infty, \mu_1 + c_0 \frac{1-2|e_0|}{2} h^2)$ is exponentially close to $\mu_1$. 

\begin{lemma}\label{lem.spectralgap}
If $\lambda \in {\rm{sp}}( \Lh^V)$ is such that $\lambda < \mu_1 +  c_0 \frac{1-2|e_0|}{2} h^2$ then we have $|\lambda - \mu_1| \leq C_N h^N$ for all $N \geq 2$, where $C_N>0$ is independent of $\lambda$ and $h$. 
\end{lemma}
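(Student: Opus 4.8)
The plan is to compare $\Lh^V$ locally, near each disc, with the single-obstacle operators $\Lh^\alpha$, exploiting the exponential localization from Lemma~\ref{lem.Agmon}. Let $\Psi$ be a normalized eigenfunction of $\Lh^V$ with eigenvalue $\lambda < \mu_1 + c_0 \frac{1-2|e_0|}{2} h^2$. Since $\mu_1 = \Theta_0 B h + o(h)$ by Theorem~\ref{thm.single} and $\Theta_0 < \frac{1+\Theta_0}{2}$, one has $\lambda < \frac{1+\Theta_0}{2} B h$ for $h$ small, so Lemma~\ref{lem.Agmon} applies to $\Psi$ and provides the global weighted estimate with weight $e^{\kappa\, \mathfrak{d}(x,\partial\Omega_V)/\sqrt{h}}$.

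Next I would localize near the obstacles. Fix $\rho \in \big(0, \frac{L-2R}{4}\big)$ and, for each $\alpha \in V$, pick $\chi_\alpha \in C_c^\infty(\R^2;[0,1])$ equal to $1$ on $\{\mathfrak{d}(x,D(\alpha,R)) \le \rho\}$ and supported in $\{\mathfrak{d}(x,D(\alpha,R)) \le 2\rho\}$, with $\|\nabla\chi_\alpha\|_\infty + \|\Delta\chi_\alpha\|_\infty$ bounded uniformly in $\alpha$. The choice of $\rho$ together with $L>2R$ guarantees that the supports of the $\chi_\alpha$ are pairwise disjoint and that each of them avoids $D(\beta,R)$ for $\beta \ne \alpha$; hence $\chi_\alpha\Psi$, extended by $0$, belongs to $\mathrm{Dom}(\Lh^\alpha)$ — it is $H^1$, $\Lh(\chi_\alpha\Psi) \in L^2$, and $\partial_r(\chi_\alpha\Psi) = \partial_r\Psi = 0$ on $\partial D(\alpha,R)$ because $\chi_\alpha \equiv 1$ there. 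The commutator identity then gives
\[
(\Lh^\alpha - \lambda)(\chi_\alpha\Psi) = [\Lh,\chi_\alpha]\Psi = -2ih\, \nabla\chi_\alpha \cdot (-ih\nabla - A)\Psi - h^2 (\Delta\chi_\alpha)\Psi,
\]
which is supported in the shell $\{\rho \le \mathfrak{d}(x,\partial\Omega_V) \le 2\rho\}$.

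Then I would insert the Agmon weight and sum over $\alpha$. Since the terms $[\Lh,\chi_\alpha]\Psi$ have pairwise disjoint supports, all contained in $\{\mathfrak{d}(x,\partial\Omega_V) \ge \rho\}$, Lemma~\ref{lem.Agmon} yields $\sum_{\alpha \in V} \|(\Lh^\alpha - \lambda)(\chi_\alpha\Psi)\|^2 \le C h^3 e^{-\kappa\rho/\sqrt{h}}$. Similarly, $\one - \sum_\alpha \chi_\alpha$ vanishes on $\{\mathfrak{d}(x,\partial\Omega_V) < \rho\}$, so $\big\| \sum_\alpha \chi_\alpha\Psi \big\| \ge 1 - C e^{-\kappa\rho/(2\sqrt{h})} \ge \frac12$ for $h$ small, and by disjointness of supports $\sum_\alpha \|\chi_\alpha\Psi\|^2 \ge \frac14$. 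Combining this with the spectral inequality $\|(\Lh^\alpha - \lambda) f\| \ge \mathrm{dist}\big(\lambda, \mathrm{sp}(\Lh^\alpha)\big)\|f\|$ and $\mathrm{sp}(\Lh^\alpha) = \mathrm{sp}(\Lh^0)$ (from \eqref{eq.Lh.conjugate}), I obtain
\[
\mathrm{dist}\big(\lambda, \mathrm{sp}(\Lh^0)\big)^2 \le 4 \sum_{\alpha\in V} \|(\Lh^\alpha - \lambda)(\chi_\alpha\Psi)\|^2 \le 4C h^3 e^{-\kappa\rho/\sqrt{h}},
\]
so $\mathrm{dist}\big(\lambda, \mathrm{sp}(\Lh^0)\big) = O(h^\infty)$. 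Finally, recall (from \cite{FH}, or from Theorem~\ref{thm.single} together with $\inf \mathrm{sp}_{\mathrm{ess}}(\Lh^0) = Bh$ and $\mu_1 < Bh$) that the only point of $\mathrm{sp}(\Lh^0)$ lying below $\mu_1 + c_0(1-2|e_0|) h^2 (1 + o(1))$ — hence the only one within a fixed distance of $\lambda$ for $h$ small — is the simple eigenvalue $\mu_1$. Therefore $|\lambda - \mu_1| = \mathrm{dist}\big(\lambda, \mathrm{sp}(\Lh^0)\big) \le C_N h^N$ for every $N \ge 2$.

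The step requiring the most care is keeping all constants independent of $V$, which matters because $V$ may be infinite (the lattice case). This is why I sum the \emph{squared} residual norms over $\alpha$ — rather than isolating a single well where $\|\chi_\alpha\Psi\|$ is largest — and why I rely on the single global Agmon estimate of Lemma~\ref{lem.Agmon} instead of well-by-well bounds: the disjointness of the cutoff supports then makes the summation lossless. The commutator identity and the estimate of the shell terms are otherwise routine.
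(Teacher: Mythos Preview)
Your argument follows the same strategy as the paper's (localize near each obstacle, bound the commutator with Agmon decay, invoke the spectral gap of $\Lh^0$), and the estimates you carry out are correct. There is, however, one genuine gap: you begin by taking $\Psi$ to be a normalized \emph{eigenfunction} of $\Lh^V$ at $\lambda$. When $V$ is infinite this need not exist --- indeed Theorem~\ref{thm.lattice} shows that in the lattice case the low-lying spectrum is that of an Almost-Mathieu operator, which for irrational flux is purely singular continuous. You even flag the infinite-$V$ case as the delicate one, but the fix you describe (summing squared norms) addresses uniformity of constants, not existence of an eigenvector.

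The repair is easy and brings you back to the paper's proof. Either (i) take a Weyl sequence $\Psi_n$ with $\|\Psi_n\|=1$ and $\|(\Lh^V-\lambda)\Psi_n\|\to 0$, project it onto $\mathrm{Ran}\,\one_{(-\infty,(1+\Theta_0)Bh/2]}(\Lh^V)$ so that Lemma~\ref{lem.Agmon} applies, and note that your commutator identity then acquires the extra term $\chi_\alpha(\Lh^V-\lambda)\Psi_n$, whose squared norms sum (by disjointness) to at most $\|(\Lh^V-\lambda)\Psi_n\|^2\to 0$; or (ii) do as the paper does: work with an arbitrary $\Psi\in\mathrm{Ran}(\Pi)$, keep the residual term, and at the end take the infimum $\inf_{\Psi}\|(\Lh^V-\lambda)\Pi\Psi\|/\|\Pi\Psi\|$, which vanishes because $\lambda\in\mathrm{sp}(\Lh^V|_{\mathrm{Ran}\,\Pi})$. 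The paper also uses a quadratic partition of unity $\sum\theta_\alpha^2=1$ rather than your disjoint cutoffs, which makes the bookkeeping of $\sum_\alpha\|\theta_\alpha\Psi\|^2=\|\Psi\|^2$ automatic, but your disjoint-support variant is equally valid.
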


\begin{proof}
Let $\Psi$ be in the range of the spectral projector, $\Pi \Psi = \Psi$, and $(\theta_\alpha)_{\alpha \in V}$ a smooth partition of unity such that $\theta_\alpha$ is supported on a neighborhood of $D(\alpha,R)$ and $\sum_{\alpha} \theta_\alpha^2 =1$. Then $\theta_\alpha \Psi$ is in the domain of the single obstacle operator $\mathscr{L}^\alpha_h$ and by the spectral theorem
\[ \sum_{\alpha \in V} \| (\Lh^\alpha - \lambda) \theta_\alpha \Psi \|^2 \geq |\mu_1 - \lambda|^2 \sum_{\alpha \in V} \| \theta_\alpha \Psi \|^2 = |\mu_1 - \lambda |^2 \|\Psi\|^2, \]
because $\mu_1$ is the closest eigenvalue of $\Lh^\alpha$ to $\lambda$ --this is our assumption on $\lambda$. Since $\theta_\alpha \Psi$ is also in the domain of $\Lh^V$,
\[ \sum_{\alpha \in V}  \| (\Lh^\alpha - \lambda) \theta_\alpha \Psi \|^2 \leq \sum_{\alpha \in V} \Big( \| \big[ \Lh^V, \theta_\alpha \big] \Psi \|^2 + \|\theta_\alpha (\Lh^V - \lambda) \Psi \|^2 \Big) \]
The commutator is supported away from the disks $D(\alpha,R)$. Therefore, using Agmon estimates from Lemma \ref{lem.Agmon} for $\Psi$, we deduce that it is exponentially small,
\begin{align*}
 \sum_{\alpha \in V}  \| (\Lh^\alpha - \lambda) \theta_\alpha \Psi \|^2 &\leq C_N h^N \| \Psi \|^2 + \|(\Lh^V - \lambda) \Psi \|^2
\end{align*}
We combine this with the lower bound to obtain 
\begin{equation}
|\mu_1 - \lambda|^2 \leq C_N h^N + \inf_{\Psi \neq 0} \frac{\| ( \mathscr{L}_h^V - \lambda) \Pi \Psi \|^2}{\| \Pi \Psi \|^2}
\end{equation}
Since the operator $\Lh^V$ restricted to $\rm{Ran} (\Pi)$ is selfadjoint, and $\lambda$ in its spectrum, the above infimum is vanishing. This concludes the proof.
\end{proof}

\subsection{Approximate eigenfunctions \texorpdfstring{$\Psi_\alpha$ and $g_{\alpha}$}{}}

$\Phi^-$ is the ground state of $\Lh^0$. We can translate and cut-off $\Phi^-$ to construct quasimodes for $\Lh^V$. For any $\eta >0$, let $\chi_\eta$ be a smooth radial cutoff function satisfying
\begin{equation}\label{def.cutoff}
\chi_\eta(x) =
\begin{cases}
1 \quad \text{if $|x| < L-R-2 \eta$}\\
0 \quad \text{if $|x| > L-R- \eta$}.
\end{cases}
\end{equation}
We then define, for $\alpha \in V$,
\begin{equation}\label{def.Psialpha}
\Psi_\alpha = \tau_\alpha^B \big( \chi_\eta \Phi^- \big), \qquad g_\alpha = \Pi \Psi_\alpha.
\end{equation}

In this section, we provide useful estimates on $g_\alpha - \Psi_\alpha$. First of all, note that the states $\Psi_\alpha$ are quasimodes for $\Lh^V$, approximately solving the eigenvalue equation.

\begin{lemma}\label{lem.Psiv}
If $\eta$ is small enough then, with the function $d$ defined in \eqref{eq:d_function},
\[ \| (\Lh^V - \mu_1) \Psi_\alpha \| = \mathscr{O}\big( e^{- \frac{d(L-R-3\eta)}{h}} \big), \quad \| \Psi_\alpha\| = 1 + \mathscr{O}\big( e^{- \frac{d(L-R-3\eta)}{h}} \big), \]
and the estimates are uniform with respect to $\alpha \in V$.
\end{lemma}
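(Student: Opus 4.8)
The plan is to estimate the two quantities separately, starting from the explicit structure of $\Psi_\alpha = \tau_\alpha^B(\chi_\eta \Phi^-)$. Since the magnetic translation $\tau_\alpha^B$ is unitary and intertwines $\Lh^\alpha$ with $\Lh^0$ (by \eqref{eq.Lh.conjugate}), and since the cutoff $\chi_\eta$ is radial and supported in $\{|x| < L-R-\eta\}$, the function $\chi_\eta \Phi^-$ lies in the domain of $\Lh^0$ (the Neumann condition at $r=R$ is preserved because $\chi_\eta \equiv 1$ there). Therefore $\Psi_\alpha \in \mathrm{Dom}(\Lh^\alpha) \subset \mathrm{Dom}(\Lh^V)$, using that the support of $\chi_\eta$, translated by $\alpha$, does not meet the other obstacles $D(\beta,R)$ for $\beta \ne \alpha$ (this is where $L-R-\eta < L-R$ and the minimal distance $L$ are used — the supports stay inside $\Omega_V$). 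On the support of $\Psi_\alpha$ the operators $\Lh^V$ and $\Lh^\alpha$ act identically, so
\[
(\Lh^V - \mu_1)\Psi_\alpha = \tau_\alpha^B (\Lh^0 - \mu_1)(\chi_\eta \Phi^-) = \tau_\alpha^B [\Lh^0, \chi_\eta] \Phi^-,
\]
because $(\Lh^0 - \mu_1)\Phi^- = 0$. The commutator $[\Lh^0,\chi_\eta] = -h^2(\Delta \chi_\eta) - 2h^2 (\nabla \chi_\eta)\cdot\nabla$ (in the appropriate magnetic form) is a first-order operator supported in the annulus $\{L-R-2\eta \le |x| \le L-R-\eta\}$.

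The first estimate then follows from the pointwise decay bound $|\Phi^-(r)| \le C_\eta e^{-(1-\eta)d(r)/h}$ together with the analogous bound on $|\nabla \Phi^-|$, which I would extract from Lemma~\ref{lem.decay}: differentiating the representation \eqref{eq.Phipm2} and using $m_- \sim \frac{BR^2}{2h}$, the gradient picks up a factor $\mathscr{O}(h^{-1})$ relative to $\Phi^-$ itself, and $u_h$ and $\partial_r u_h$ are controlled by \eqref{eq.uhr}. Since $d$ is increasing on $(R,\infty)$, on the support of $\nabla\chi_\eta$ we have $d(r) \ge d(L-R-2\eta)$, so the $L^2$ norm of $\tau_\alpha^B[\Lh^0,\chi_\eta]\Phi^-$ is bounded by $C h^{-1} e^{-(1-\eta)d(L-R-2\eta)/h}$ times the area of the annulus. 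Choosing $\eta$ small, continuity of $d$ gives $(1-\eta)d(L-R-2\eta) \ge d(L-R-3\eta)$ for $h$ small, which absorbs the polynomial prefactor $h^{-1}$ and yields the claimed bound $\mathscr{O}(e^{-d(L-R-3\eta)/h})$. Uniformity in $\alpha$ is automatic since $\tau_\alpha^B$ is unitary and the estimate on $(\Lh^0-\mu_1)(\chi_\eta\Phi^-)$ does not involve $\alpha$.

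For the norm estimate, $\|\Psi_\alpha\| = \|\chi_\eta \Phi^-\|$ by unitarity of $\tau_\alpha^B$, and
\[
\big| \|\chi_\eta \Phi^-\|^2 - 1 \big| = \big| \|\chi_\eta \Phi^-\|^2 - \|\Phi^-\|^2 \big| \le \int_{\{|x| > L-R-2\eta\}} |\Phi^-|^2\,dx \le C e^{-2(1-\eta)d(L-R-2\eta)/h},
\]
again using the decay bound and monotonicity of $d$; taking the square root and adjusting $\eta$ as above gives $\|\Psi_\alpha\| = 1 + \mathscr{O}(e^{-d(L-R-3\eta)/h})$. The main obstacle I anticipate is the bookkeeping of the $h$-dependent angular momentum $m_-$ when differentiating $\Phi^-$: one must check that the gradient estimate $|\nabla \Phi^-| \lesssim h^{-1}|\Phi^-|$ holds with the \emph{same} Agmon exponent, which is precisely what Lemma~\ref{lem.decay} is designed to give, but it requires care that the $\mathscr{O}(h^{1/2}/(r^2-R^2))$ error in \eqref{eq.uhr} stays harmless on the relevant annulus (bounded away from $r=R$), and that the factor $(r/R)^{m_-}e^{-B(r^2-R^2)/(4h)}$ differentiates to produce exactly the Agmon weight $e^{-d(r)/h}$ up to the stated $\eta$-loss. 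Everything else is routine.
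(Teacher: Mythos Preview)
Your proposal is correct and follows essentially the same approach as the paper: reduce to $\|[\Lh^0,\chi_\eta]\Phi^-\|$ via the intertwining \eqref{eq.Lh.conjugate} and the eigenvalue equation, then use that the commutator is supported in the annulus $\{L-R-2\eta \le |x|\le L-R-\eta\}$ together with the decay of $\Phi^-$ and its gradient from Lemma~\ref{lem.decay}, absorbing the polynomial prefactors by passing from $2\eta$ to $3\eta$; the norm estimate is handled the same way. The paper's proof is slightly terser (it writes the norm estimate as $\|(\chi_\eta-1)\Phi^-\|$ rather than comparing squares) but the content is identical, and your anticipated ``obstacle'' about the gradient bound is exactly the routine check the paper subsumes into its reference to Lemma~\ref{lem.decay}.
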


\begin{proof}
Since $L= \inf_{\substack{\alpha,\beta \in V \\ \alpha \neq \beta}} |\alpha-\beta|$ and by definition of $\chi_\eta$, the function $\Psi_\alpha$ vanishes on an $\eta$-neighborhood of each disk $D(\beta,R)$ for $\beta \neq \alpha$. Since it also satisfies the Neumann boundary condition on $D(\alpha,R)$, it is in the domain of $\Lh^V$. Moreover $\tau_\alpha^B$ commutes with $\Lh$ so we have
\begin{align*}
\|  (\Lh^V - \mu_1) \Psi_\alpha \| = \|  (\Lh^0 - \mu_1) \chi_\eta \Phi^- \| = \| \big[ \mathscr{L}_h^0 , \chi_\eta \big] \Phi^- \|,
\end{align*}
where we used that $\Phi^-$ was an eigenfunction for $\Lh^{0}$ with eigenvalue $\mu_1$. Now the commutator is supported at distance $|x|> L-R-2\eta$, so we can use the decay properties of $\Phi^-$ and its gradient from Lemma \ref{lem.decay} to get
\begin{align*}
\|  (\Lh^V - \mu_1) \Psi_\alpha \| = \mathscr{O}\big( e^{- \frac{d(L-R-3\eta)}{h}} \big),
\end{align*}
where we replaced $2 \eta$ by $3\eta$ inside the exponential to control the front powers of $h^{-1}$. We estimate the norm similarly, using that $\| \tau_\alpha^B \Phi^- \| =1$ and
\[ \| \Psi_\alpha - \tau_\alpha^B \Phi^- \| = \| (\chi_\eta - 1) \Phi^{-} \| = \mathscr{O}\big( e^{- \frac{d(L-R-3\eta)}{h}}  \big). \]
Note that in these estimates the translation by $\alpha$ does not change the norm, and therefore they are independent of $\alpha$.
\end{proof}

It follows that $g_\alpha$ is a small perturbation of $\Psi_\alpha$, and that these vectors form a basis of $\mathscr E_h$.

\begin{lemma}\label{lem.gv}
For all $\alpha \in V$, the state $\Psi_\alpha$ and its projection $g_\alpha = \Pi \Psi_\alpha$ on $\mathscr E_h$ satisfy, with the function $d$ defined in \eqref{eq:d_function},
\begin{equation*}
\| g_\alpha - \Psi_\alpha \| = \mathscr O \big( e^{-\frac{d(L-R-4\eta)}{h}} \big), \quad \langle \Lh^V (g_\alpha - \Psi_\alpha), g_\alpha - \Psi_\alpha \rangle =  \mathscr O \big( e^{-\frac{2d(L-R-4\eta)}{h}} \big).
\end{equation*}
Moreover, the vectors $(g_\alpha)_{\alpha \in V}$ span a dense subspace of $\mathscr E_h$.
\end{lemma}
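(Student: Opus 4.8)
The strategy is to use the spectral gap from Lemma~\ref{lem.spectralgap} together with the quasimode estimates of Lemma~\ref{lem.Psiv}. The key tool is the standard resolvent estimate: since $\Psi_\alpha$ is a quasimode with $\|(\Lh^V - \mu_1)\Psi_\alpha\| = \mathscr O(e^{-d(L-R-3\eta)/h})$, and since Lemma~\ref{lem.spectralgap} shows that the spectrum of $\Lh^V$ below $\mu_1 + c_0\frac{1-2|e_0|}{2}h^2$ is confined to an interval of radius $\mathscr O(h^N)$ around $\mu_1$ while the rest of the spectrum is at distance $\geq c_0\frac{1-2|e_0|}{2}h^2 - o(h^2)$ from $\mu_1$, the spectral calculus gives $\|(\mathbf 1 - \Pi)\Psi_\alpha\| \leq \frac{\|(\Lh^V - \mu_1)\Psi_\alpha\|}{\mathrm{dist}(\mu_1, \mathrm{sp}(\Lh^V) \setminus [\mu_1 - C_Nh^N, \mu_1+C_Nh^N])}$. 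Thus $\|g_\alpha - \Psi_\alpha\| = \|(\mathbf 1 - \Pi)\Psi_\alpha\|$ is bounded by a constant (a negative power of $h$, namely $\mathscr O(h^{-2})$) times $e^{-d(L-R-3\eta)/h}$; absorbing this polynomial factor into the exponential by replacing $3\eta$ with $4\eta$ (using that $d$ is strictly increasing near $L-R$, so $d(L-R-4\eta) < d(L-R-3\eta)$ with a fixed gap), we obtain the first estimate $\|g_\alpha - \Psi_\alpha\| = \mathscr O(e^{-d(L-R-4\eta)/h})$.

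For the second estimate, write $w_\alpha = g_\alpha - \Psi_\alpha = -(\mathbf 1 - \Pi)\Psi_\alpha$. Then
\[
\langle \Lh^V w_\alpha, w_\alpha\rangle = \langle \Lh^V (\mathbf 1 - \Pi)\Psi_\alpha, (\mathbf 1 - \Pi)\Psi_\alpha\rangle.
\]
Since $(\mathbf 1 - \Pi)\Psi_\alpha$ lies in the range of $\mathbf 1 - \Pi$, the operator $\Lh^V$ restricted there is bounded below by $\mu_1 + c_0\frac{1-2|e_0|}{2}h^2$, but to get a two-sided control one instead writes $(\mathbf 1-\Pi)\Psi_\alpha = (\mathbf 1-\Pi)(\Lh^V - \mu_1)^{-1}(\Lh^V - \mu_1)\Psi_\alpha$ on the complementary spectral subspace; since $(\Lh^V-\mu_1)$ is invertible there with norm $\mathscr O(1)$ on vectors of the form $(\Lh^V-\mu_1)\Psi_\alpha$ and inverse norm $\mathscr O(h^{-2})$, one gets both $\|(\mathbf 1-\Pi)\Psi_\alpha\|$ and $\|(\Lh^V-\mu_1)(\mathbf 1-\Pi)\Psi_\alpha\| \le \|(\Lh^V-\mu_1)\Psi_\alpha\| = \mathscr O(e^{-d(L-R-3\eta)/h})$ directly. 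Hence by Cauchy--Schwarz
\[
|\langle \Lh^V w_\alpha, w_\alpha\rangle| \le |\mu_1|\|w_\alpha\|^2 + |\langle (\Lh^V-\mu_1)(\mathbf 1-\Pi)\Psi_\alpha, (\mathbf 1-\Pi)\Psi_\alpha\rangle| = \mathscr O(e^{-2d(L-R-4\eta)/h}),
\]
again after trading a power of $h$ for a slightly smaller argument of $d$.

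Finally, for the density statement: the family $(\Psi_\alpha)_{\alpha\in V}$ is, up to exponentially small errors in the Gram matrix, an orthonormal family (by Lemma~\ref{lem.Psiv} and the Agmon decay, which makes $\langle \Psi_\alpha, \Psi_\beta\rangle$ exponentially small for $\alpha \ne \beta$), so the $(g_\alpha)$ are nearly orthonormal in $\mathscr E_h$. To see they span a dense subspace, suppose $\psi \in \mathscr E_h$ is orthogonal to all $g_\alpha$. Since $\Pi\psi = \psi$, orthogonality to $g_\alpha = \Pi\Psi_\alpha$ is the same as orthogonality to $\Psi_\alpha$, i.e. to $\tau_\alpha^B(\chi_\eta\Phi^-)$. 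Using the Agmon estimates for $\psi$ from Lemma~\ref{lem.Agmon}, $\psi$ is concentrated near $\bigcup_\alpha \partial D(\alpha,R)$, and near each disk $D(\alpha,R)$ it must, up to exponentially small error, be a ground state of $\Lh^\alpha$ (by the same argument as in the proof of Lemma~\ref{lem.spectralgap}), hence proportional to $\tau_\alpha^B\Phi^-$; orthogonality to $\Psi_\alpha$ then forces that component to vanish, so $\psi = 0$. The main obstacle is the bookkeeping in this last step---carefully justifying that a vector in $\mathscr E_h$, localized by Agmon near the obstacles, is locally close to the one-obstacle ground state, uniformly over the (possibly infinite) set $V$; this uses the spectral gap of the single-obstacle operator together with the exponential localization, exactly as in the proof of Lemma~\ref{lem.spectralgap}.
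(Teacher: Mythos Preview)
Your proposal is correct and follows essentially the same approach as the paper: the first two estimates are obtained exactly as you describe (spectral gap on $\mathrm{Ran}(I-\Pi)$ combined with the quasimode bound of Lemma~\ref{lem.Psiv}, then Cauchy--Schwarz for the quadratic form, absorbing powers of $h$ by passing from $3\eta$ to $4\eta$), and the density argument in the paper is precisely the IMS-localization plus single-obstacle spectral gap computation you point to. The only difference is presentational: where you describe the density step intuitively (``near each disk $\psi$ must be close to the one-obstacle ground state, but orthogonality kills that component''), the paper carries this out as an explicit quadratic-form lower bound $\langle \Lh^V \Psi,\Psi\rangle \geq (\mu_1 + c_0(1-2|e_0|)h^2 - Ch^N)\|\Psi\|^2$, which then contradicts $\Psi\in\mathscr E_h$.
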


\begin{proof}
By definition of the spectral projector $\Pi$ we have
\[\| (\Lh^V - \mu_1)(I- \Pi) \Psi_\alpha \| \geq c_0 \frac{(1-2|e_0|) h^2}{2} \| (I- \Pi) \Psi_\alpha \|.\]
Moreover, the projector $(I-\Pi)$ commutes with $\Lh^V$ so we have the upper bound
\[\| (\Lh^V - \mu_1)(I- \Pi) \Psi_\alpha \| \leq \| (\Lh^V - \mu_1) \Psi_\alpha \| = 
\| (\Lh^{\alpha} - \mu_1) \Psi_\alpha \| =
\mathscr{O}\big( e^{- \frac{d(L-R-3\eta)}{h}} \big), \]
where the last estimate is given by Lemma \ref{lem.Psiv}. Combining these two bounds we obtain
\[ \| (I- \Pi) \Psi_\alpha \| \leq C h^{-2} e^{- \frac{d(L-R-3\eta)}{h}},  \]
and this proves the bound on $\| g_\alpha - \Psi_\alpha \|$. To get the quadratic form bound we use a Cauchy-Schwarz inequality,
\[\langle \Lh^V (I - \Pi) \Psi_\alpha, (I - \Pi) \Psi_\alpha \rangle \leq \| \Lh^V (I - \Pi) \Psi_\alpha \| \| (I - \Pi) \Psi_\alpha \| \leq C h^{-4}e^{- \frac{2d(L-R-3\eta)}{h}},\]
where we used Lemma \ref{lem.Psiv} again. Note that we can absorb the $h^{-4}$ by changing $3\eta$ into $4\eta$ in the exponential. The bounds are uniform with respect to $\alpha$, since the ones of Lemma~\ref{lem.Psiv} are. 

Let us now prove that the vectors $(g_\alpha)_{\alpha \in V}$ span a dense subspace of $\mathscr E_h$. Note that, when $V$ is finite $(g_\alpha)_{\alpha \in V}$ is obviously a basis for cardinality reasons, but in the general case an additional argument is needed. Assume that $\Psi \in \mathscr{E}_h$ is orthogonal to all $g_\alpha$'s, and we have to prove that this implies $\Psi=0$. First note that $\Psi$ is also orthogonal to $\Psi_\alpha$ since
\begin{equation}
\langle \Psi , \Psi_\alpha \rangle = \langle \Pi \Psi, \Psi_\alpha \rangle = \langle \Psi, g_\alpha \rangle = 0.
\end{equation}
We introduce a family $( \theta_\alpha)_{\alpha \in V}$ of cutoffs such that $\theta_\alpha$ is localized on a small but fixed neighborhood of $D(\alpha,R)$, and satisfying
\[\sum_{\alpha \in V} \theta_\alpha^2 = 1.\]
We then use the IMS localization formula,
\begin{align*}
\langle \Lh^V \Psi, \Psi \rangle = \sum_{\alpha \in V} \langle \Lh^V \theta_\alpha \Psi, \theta_\alpha \Psi \rangle - h^2 \sum_{\alpha \in V} \| ( \nabla \theta_\alpha ) \Psi \|^2.
\end{align*}
Using the Agmon estimates from Lemma \ref{lem.Agmon}, we know that $\Psi$ is exponentially localized near the disks. Since $\sum_\alpha |\nabla \theta_\alpha|^2$ is supported away from the disks we deduce
\begin{align*}
\langle \Lh^V \Psi, \Psi \rangle \geq \sum_{\alpha \in V} \langle \Lh^V \theta_\alpha \Psi, \theta_\alpha \Psi \rangle - Ch^N \| \Psi \|^2,
\end{align*}
where we can choose $N$ arbitrarily large (here $N=3$ is enough). Note that $\theta_\alpha \Psi$ vanishes on a neighborhood of all disks $D(\beta,R)$ for $\beta \neq \alpha$. Therefore $\Lh^V \theta_\alpha \Psi = \Lh^\alpha \theta_\alpha \Psi$, where we recall that $\Lh^\alpha$ is the operator with a single obstacle $D(\alpha,R)$, i.e. $\Lh^\alpha = \tau_{\alpha}^B \Lh^0 \tau_{-\alpha}^{B}$. Thus
\begin{align}\label{eq.boundA}
\langle \Lh^V \Psi, \Psi \rangle \geq \sum_{\alpha \in V} \langle \Lh^0 \tau_{-\alpha}^B \theta_\alpha \Psi, \tau_{-\alpha}^B \theta_\alpha \Psi \rangle - Ch^N \| \Psi \|^2.
\end{align}
We now introduce the projection $\Pi_0$ on the ground state of $\Lh^0$, which is given by
\[\Pi_0 \tau_{-\alpha}^B \theta_\alpha \Psi = \langle \tau_{-\alpha}^B \theta_\alpha \Psi , \Phi^- \rangle \Phi^-.\]
Therefore
\begin{align*}
\langle \Lh^0 \tau_{-\alpha}^B \theta_\alpha \Psi, \tau_{-\alpha}^B \chi_\alpha \Psi \rangle &= \mu_1 | \langle \tau_{-\alpha}^B \theta_\alpha \Psi , \Phi^- \rangle |^2 +  \langle \Lh^0 (I-\Pi_0) \tau_{-\alpha}^B \theta_\alpha \Psi, (I-\Pi_0) \tau_{-\alpha}^B \theta_\alpha \Psi \rangle\\
&\geq \mu_1 | \langle \tau_{-\alpha}^B \theta_\alpha \Psi , \Phi^- \rangle |^2 + (\mu_1 + c_0(1-2|e_0|) h^2) \|  (I-\Pi_0) \tau_{-\alpha}^B \theta_\alpha \Psi \|^2,
\end{align*}
where we used that the second eigenvalue of $\Lh^0$ has a gap of size $c_0 (1-2|e_0|) h^2$ with $\mu_1$ (Theorem \ref{thm.single}). We reconstruct the $L^2$ norm of $ \tau_{-\alpha}^B \theta_\alpha \Psi$  and use that $\tau_{-\alpha}^B$ is an isometry to get
\begin{align}\label{eq.2704}
&\langle \Lh^0 \tau_{-\alpha}^B \theta_\alpha \Psi, \tau_{-\alpha}^B \chi_\alpha \Psi \rangle \nonumber \\
&\geq \big ( \mu_1 + c_0(1-2|e_0|) h^2 \big) \| \theta_\alpha \Psi \|^2 - c_0(1-2|e_0|) h^2 | \langle \tau_{-\alpha}^B \theta_\alpha \Psi , \Phi^- \rangle |^2.
\end{align}
The last term can be estimated as follows. First introduce the cutoff $\chi_\eta$ defining $\Psi_\alpha$,
\begin{align*}
 \langle \tau_{-\alpha}^B \theta_\alpha \Psi , \Phi^- \rangle = \langle \theta_\alpha \Psi , \tau_\alpha^B \chi_\eta \Phi^- \rangle + \langle  \theta_\alpha \Psi, \tau_\alpha^B (1-\chi_\eta) \Phi^- \rangle.
\end{align*}
Using the decay of $\Phi^-$ and the cutoff we see that the second term is exponentially small. In the first one we use the orthogonality $\langle \Psi, \Psi_\alpha \rangle =0$ and we find
\begin{align*}
 \langle \tau_{-\alpha}^B \theta_\alpha \Psi , \Phi^- \rangle &\leq \langle (\theta_\alpha-1) \Psi , \tau_\alpha^B \chi_\eta \Phi^- \rangle +Ch^N \| \theta_\alpha \Psi \|\\
&\leq \| \chi_\eta( . -\alpha) \big( \theta_\alpha -1 \big) \Psi \| +Ch^N \| \theta_\alpha \Psi \|.
\end{align*}
We insert this last inequality in \eqref{eq.2704} and equation \eqref{eq.boundA} now reads
\begin{equation}
\langle \Lh^V \Psi, \Psi \rangle \geq \big( \mu_1 + c_0(1-2|e_0|) h^2 - Ch^N \big) \| \Psi \|^2 - C h^2 \sum_{\alpha \in V} \| \chi_\eta( . -\alpha) \big( \theta_\alpha -1 \big) \Psi \|^2.
\end{equation}
Since the function $\sum_\alpha \chi_\eta(.-\alpha)^2 \big( \theta_\alpha -1 \big)^2$ is supported away from the obstacles, we can use Agmon estimates again to see that the last term is exponentially small, and therefore
\begin{equation}
\langle \Lh^V \Psi, \Psi \rangle \geq \big( \mu_1 + c_0(1-2|e_0|) h^2 - Ch^N \big) \| \Psi \|^2.
\end{equation}
But $\Psi \in \mathscr E_h$, meaning that $\langle \Lh^V \Psi, \Psi \rangle \leq ( \mu_1 + c_0 \frac{1-2|e_0|}{2}h^2) \| \Psi \|^2$, and therefore we must have $\Psi = 0$. This proves that $(g_\alpha)$ generates a dense subspace of $\mathscr{E}_h$.
\end{proof}

In the case when the set of disks $V$ is infinite, we need a little more information on the decay of $g_\alpha - \Psi_\alpha$, which we give in the following lemma. This result is very weak, since we only get decay when $\langle x - \alpha \rangle \gg h^{-1}$. However, this is enough to obtain decay of coefficients of our infinite interaction matrix away from the diagonal, in the proof of Theorem \ref{thm.Gram}.

\begin{lemma}\label{lem.weighted}
For all $\alpha \in V$ we have
\[ \| e^{ h \langle x - \alpha \rangle} ( g_\alpha - \Psi_\alpha) \| = \mathscr{O}( e^{-\frac{d(L-R-4\eta)}{h}} ),  \qquad \| e^{ h \langle x - \alpha \rangle} \Lh^V ( g_\alpha - \Psi_\alpha) \| = \mathscr{O}( e^{-\frac{d(L-R-4\eta)}{h}} ), \]
where $\langle x \rangle = (1+ |x|^2)^{1/2}$.
\end{lemma}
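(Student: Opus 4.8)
The plan is to represent $g_\alpha - \Psi_\alpha = (\Pi - I)\Psi_\alpha$ by a Cauchy integral and to commute the (very mild, rate-$h$) weight through the resolvent of $\Lh^V$, using crucially that $e^{h\langle x - \alpha\rangle}$ only ``sees'' scales $\langle x-\alpha\rangle\gtrsim h^{-1}$. Write $P = -ih\nabla - A$, $\phi = h\langle x - \alpha\rangle$ and $w = (\Lh^V - \mu_1)\Psi_\alpha = \tau_\alpha^B[\Lh^0, \chi_\eta]\Phi^-$. From the proof of Lemma~\ref{lem.Psiv}, $w$ is supported in the annulus $\{L - R - 2\eta \le |x - \alpha| \le L - R - \eta\}$ with $\|w\| = \mathscr{O}(e^{-d(L-R-3\eta)/h})$; in particular $e^{\phi}$ is bounded by $1 + \mathscr{O}(h)$ on $\operatorname{supp} w$, so $\|e^{\phi} w\| \le C\|w\|$, uniformly in $\alpha$.

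For the spectral data, combine Lemma~\ref{lem.spectralgap} (with $N = 3$) and Theorem~\ref{thm.single}: $\operatorname{sp}(\Lh^V) \cap \big(-\infty, \mu_1 + c_0\tfrac{1-2|e_0|}{2}h^2\big) \subset [\mu_1 - C_3 h^3, \mu_1 + C_3 h^3]$. Fix $r_0 := \tfrac{c_0(1-2|e_0|)}{4}h^2$ and let $\gamma$ be the positively oriented circle $\{|z - \mu_1| = r_0\}$; then $\gamma \cap \operatorname{sp}(\Lh^V) = \emptyset$ and ${\rm dist}(\gamma, \operatorname{sp}(\Lh^V)) \ge r_0/2$ (any $\lambda\in\operatorname{sp}(\Lh^V)$ with $|\lambda-\mu_1|<c_0\tfrac{1-2|e_0|}{2}h^2$ is within $C_3h^3$ of $\mu_1$, and the rest has $\lambda\ge\mu_1+2r_0$), so $\|(z-\Lh^V)^{-1}\| \le Ch^{-2}$ on $\gamma$, $|z| = \mathscr{O}(h)$ there, $\Pi = \tfrac{1}{2\pi i}\oint_\gamma (z - \Lh^V)^{-1}\,dz$, and $\tfrac{1}{2\pi i}\oint_\gamma (z - \mu_1)^{-1}\,dz = 1$. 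The usual resolvent identities then give
\[
g_\alpha - \Psi_\alpha = \frac{1}{2\pi i}\oint_\gamma \frac{(z - \Lh^V)^{-1} w}{z - \mu_1}\,dz,
\qquad
\Lh^V(g_\alpha - \Psi_\alpha) = -w + \frac{1}{2\pi i}\oint_\gamma \frac{z\,(z - \Lh^V)^{-1} w}{z - \mu_1}\,dz,
\]
where the second identity uses $\Lh^V(z - \Lh^V)^{-1} = -I + z(z - \Lh^V)^{-1}$.

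The crux is the weighted resolvent bound $\big\| e^{\phi}(z - \Lh^V)^{-1} e^{-\phi} \big\| \le C h^{-2}$ for $z \in \gamma$, uniformly in $z$, in small $h$, and in $\alpha$. Here the tiny rate of $\phi$ is essential: conjugation by $e^{\phi}$ perturbs $\Lh^V$ by $W := 2ih\,\nabla\phi\cdot P + h^2(\Delta\phi - |\nabla\phi|^2)$, and since $|\nabla\phi| \le h$, $|\Delta\phi| \le Ch$, this is $\mathscr{O}(h^2)$ times a first-order operator plus $\mathscr{O}(h^3)$. On $\gamma$ one has $\|\Lh^V(z - \Lh^V)^{-1}\| \le 1 + |z|\,\|(z - \Lh^V)^{-1}\| = \mathscr{O}(h^{-1})$ — this is where $|z| = \mathscr{O}(h)$ enters — hence, from $\|P u\|^2 = \langle u,\Lh^V u\rangle$, also $\|P(z - \Lh^V)^{-1}\| \le \big(\|\Lh^V(z-\Lh^V)^{-1}\|\,\|(z-\Lh^V)^{-1}\|\big)^{1/2} = \mathscr{O}(h^{-3/2})$; therefore $\|W(z - \Lh^V)^{-1}\| = \mathscr{O}(h^{1/2})$ and a Neumann series yields the claim. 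The one technical nuisance is that the multiplier $e^{\phi}$ does not preserve the magnetic-Neumann boundary condition; this is dealt with in the standard way, by truncating to $\phi_M = h\min(\langle x - \alpha\rangle, M)$ (smoothed, with the same bounds on its derivatives), running the argument at the level of the quadratic forms $Q(u)=\|Pu\|^2$ and $Q + R_M$ for each fixed $M$, and letting $M \to \infty$ by monotone convergence at the end. I expect this weighted resolvent estimate to be the main obstacle, essentially because the contour is forced to sit at distance only $\sim h^2$ from the spectrum (to separate $\mu_1$ from $\mu_2$), so the gain $h^{1/2} = h^2\cdot h^{-3/2}$ over the naive $\mathscr{O}(1)$ must be extracted carefully.

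Granting this, the conclusion is routine. Inserting $e^{\phi_M}(z-\Lh^V)^{-1}e^{-\phi_M}$ into the two Cauchy integrals, using $|z - \mu_1|^{-1} = r_0^{-1}$, $|\gamma| = 2\pi r_0$ on $\gamma$, the extra factor $|z| = \mathscr{O}(h)$ in the second identity, and $\|e^{\phi_M}w\| \le \|e^{\phi}w\| \le C\|w\|$, gives $\|e^{\phi_M}(g_\alpha - \Psi_\alpha)\| \le Ch^{-2}\|w\|$ and $\|e^{\phi_M}\Lh^V(g_\alpha - \Psi_\alpha)\| \le \|e^{\phi_M}w\|+Ch^{-1}\|w\| \le Ch^{-1}\|w\|$, both uniform in $M$; letting $M \to \infty$ replaces $\phi_M$ by $h\langle x - \alpha\rangle$. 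Finally, since $d$ is strictly increasing on $(R, \infty)$ and $L - R - 3\eta > L - R - 4\eta > R$ for $\eta$ small, $h^{-2}\|w\| = \mathscr{O}\big(h^{-2} e^{-d(L-R-3\eta)/h}\big) = \mathscr{O}\big(e^{-d(L-R-4\eta)/h}\big)$, which yields both stated bounds, with constants uniform in $\alpha$.
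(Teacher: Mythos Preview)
Your proof is correct and follows essentially the same route as the paper: represent $g_\alpha-\Psi_\alpha$ by a Cauchy integral of $(z-\Lh^V)^{-1}w$ around a circle of radius $\sim h^2$, exploit that the weight $e^{h\langle x-\alpha\rangle}$ has gradient only of order $h$ so that the conjugated perturbation satisfies $\|W(z-\Lh^V)^{-1}\|=\mathscr O(h^{1/2})$, truncate the weight to justify the manipulations, and absorb the resulting $h^{-2}$ loss into the gap $d(L-R-3\eta)-d(L-R-4\eta)$. The only cosmetic difference is that you package the weighted resolvent bound via a Neumann series for $e^{\phi}(z-\Lh^V)^{-1}e^{-\phi}$, whereas the paper writes the resolvent commutation identity directly and bootstraps the inequality $\|e^{h\phi}(\Lh^V-z)^{-1}r_\alpha\|\le Ch^{-2}\|e^{h\phi}r_\alpha\|+Ch^{1/2}\|e^{h\phi}(\Lh^V-z)^{-1}r_\alpha\|$; these are equivalent.
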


\begin{proof}
We recall that $\Psi_\alpha$, defined in \eqref{def.Psialpha}, is an approximate eigenfunction for $\Lh^V$. We define the remainder
\[ r_\alpha = (\Lh^V - \mu_1) \Psi_\alpha,\]
which satisfies
\begin{equation}\label{eq.approx1005}
(\Lh^V - z)^{-1} \Psi_\alpha = \frac{1}{\mu_1 - z} \Psi_\alpha - \frac{1}{\mu_1-z} (\Lh^V-z)^{-1} r_{\alpha}, \quad \forall z \notin {\rm{sp}}( \Lh^V).
\end{equation}
Also note that the spectral projection $g_\alpha =\Pi \Psi_\alpha$ can be written as the path-integral
\begin{equation}
g_\alpha = - \frac{1}{2\pi i} \int_{\Gamma} (\Lh^V - z)^{-1} \Psi_\alpha \dd z,
\end{equation}
where $\Gamma$ is a (clock-wise) circle of center $\mu_1$ and radius $c_0 \frac{1-2|e_0|}{2}h^2$. Therefore using \eqref{eq.approx1005}, we have
\begin{equation} \label{eq.integralg}
 g_\alpha - \Psi_\alpha = \frac{1}{2\pi i} \int_{\Gamma} \frac{1}{\mu_1 -z} (\Lh^V - z)^{-1} r_\alpha \dd z, 
 \end{equation}
and
\begin{equation} \label{eq.bound1005}
\| e^{ h \langle x- \alpha \rangle} (g_\alpha - \Psi_\alpha) \| \leq C  \sup_{z \in \Gamma} \| e^{ h \langle x-\alpha \rangle} (\Lh^V - z)^{-1} r_\alpha \| 
\end{equation}
In order to bound the right-hand side, we first consider the weight $\phi(x) = \langle x - \alpha \rangle \rho(x/M)$, where $\rho \in C^\infty_0(\R^2)$ is equal to $1$ on $D(0,1)$. We will prove estimates uniform in $M\geq 1$ and then let $M \to \infty$. 

For $z \in \Gamma$, we use the resolvent commutation formula to find
\begin{align}\nonumber
 \| e^{ h \phi} (\Lh^V - z)^{-1} r_\alpha \| &\leq \| (\Lh^V-z)^{-1} e^{ h \phi} r_\alpha  \| + \| (\Lh^V -z)^{-1} \big[ \Lh^V, e^{ h \phi}\big] (\Lh^V - z)^{-1} r_\alpha \| \\
&\leq C h^{-2} \| e^{ h \phi} r_\alpha  \| + C h^{-2} \|\big[ \Lh^V, e^{ h \phi}\big] (\Lh^V - z)^{-1} r_\alpha \|, \label{eq.est1005}
\end{align}
where we used the spectral theorem and Lemma \ref{lem.spectralgap} to bound the norm of the resolvent. Let us now investigate the commutator, which is explicitly given by
\[ \big[ \Lh^V, e^{ h \phi} \big] =  -e^{ h \phi}\big( 2i h^2 \nabla \phi \cdot (-ih\nabla - A) + h^3 \Delta \phi + h^4 |\nabla \phi |^2 \big), \]
where $A=(-\frac B2 x_2, \frac B2 x_1)$ is the vector potential.
Since $\nabla \phi$ and $\Delta \phi$ are bounded (uniformly with respect to $\alpha$) we deduce
\begin{align}\nonumber
\| \big[ \Lh^V, e^{ h \phi} \big] (\Lh^V - z)^{-1} r_\alpha \|&\leq C  h^2 \| e^{ h \phi} (-ih\nabla - A) (\Lh-z)^{-1} r_\alpha \|  \\ &\qquad + C  h^3 \| e^{ h \phi} (\Lh-z)^{-1} r_\alpha \|. \label{eq.commute1005}
\end{align}
The first term in \eqref{eq.commute1005} we bound as follows using the notation $u = (\Lh^V - z)^{-1} r_\alpha$.
We integrate by parts, and since $u \in {\rm{Dom}}(\Lh^V)$ has Neumann boundary conditions, we find
\begin{align*}
\| e^{ h \phi} (-ih \nabla - A) u \|^2 &= \langle (-ih \nabla - A) e^{2 h \phi} (-ih\nabla - A) u, u \rangle \\
&\leq | \langle e^{2  h \phi} \Lh u, u \rangle | + 2| \langle  h^2 \nabla \phi e^{ h \phi} (-ih\nabla - A)u,e^{ h \phi} u \rangle |\\
&\leq \| e^{ h \phi} \Lh u \| \| e^{ h \phi} u \| + C  h^2 \|e^{ h \phi} (-ih\nabla - A)u \| \| e^{ h \phi} u \|.
\end{align*}
In the last term, we use the Cauchy inequality and compare with the left-hand side to get
\begin{align*}
\| e^{ h \phi} (-ih \nabla - A) u \|^2 &\leq  C \| e^{ h \phi} \Lh u \| \| e^{ h \phi} u \| + C   h^4\| e^{ h \phi} u \|^2 \\
&\leq C\| e^{ h \phi} (\Lh^V - z) u \| \| e^{ h \phi} u \| + C h \| e^{ h \phi} u \|^2\\
&\leq C h^{-1}  \| e^{ h \phi} (\Lh^V - z) u \|^2  + C  h \| e^{ h \phi} u \|^2  .
\end{align*}
We recall that $u = (\Lh^V - z)^{-1} r_\alpha$ and insert this last bound in \eqref{eq.commute1005},
\begin{align}\nonumber
\| \big[ \Lh^V, e^{ h \phi} \big] (\Lh^V - z)^{-1} r_\alpha \|&\leq C  h^{3/2} \| e^{ h \phi} r_\alpha \|  + C  h^{5/2} \| e^{ h \phi} (\Lh-z)^{-1} r_\alpha \|. 
\end{align}
We now insert in \eqref{eq.est1005} and obtain
\[ \| e^{ h \phi} (\Lh^V - z)^{-1} r_\alpha \| \leq  C h^{-2} \| e^{ h \phi} r_\alpha  \| + C  h^{1/2} \| e^{ h \phi} (\Lh^V - z)^{-1} r_\alpha \|. \]
We substract the last term from the left-hand side and get
\[  \| e^{ h \phi} (\Lh^V - z)^{-1} r_\alpha \| \leq  C h^{-2} \| e^{ h \phi} r_\alpha  \|. \]
Then, we let $M \to \infty$, so that $\phi \to \langle x-\alpha \rangle$, and with \eqref{eq.bound1005} this gives
\[ \| e^{ h \langle x- \alpha \rangle} (g_\alpha - \Psi_\alpha) \| \leq Ch^{-2} \| e^{ h \langle x-\alpha \rangle} r_\alpha  \|. \]
Finally, we can estimate the norm of $e^{ h\langle x-\alpha \rangle} r_\alpha$ as in Lemma \ref{lem.Psiv}. Indeed, this estimate follows from the decay of the groundstate $\Phi^-$ of $\Lh^0$, which is much faster than $e^{h\langle x-\alpha \rangle}$, and the proof is thus identical. We obtain the first stated result. For the second one, we use \eqref{eq.integralg} again and apply $\Lh^V$ to find
\begin{align*}
\| e^{ h \langle x - \alpha \rangle} \Lh^V ( g_\alpha - \Psi_\alpha) \| \leq C  \| e^{h \langle x - \alpha \rangle} r_\alpha \| + C \sup_{z \in \Gamma} |z| \| e^{h \langle x - \alpha \rangle} (\Lh^V - z)^{-1} r_\alpha \|,
\end{align*}
and these two terms we estimate as above.
\end{proof}

\subsection{The Gram matrix}

In the previous section, we constructed a basis $(g_\alpha)_{\alpha \in V}$ of the spectral subspace $\mathscr{E}_h$. Before estimating the spectrum of $\Lh^V$ restricted to $\mathscr{E}_h$, we orthonormalize the basis $(g_\alpha)_{\alpha \in V}$, thus providing a Hilbertian basis $(e_\alpha)_{\alpha \in V}$ of $\mathscr{E}_h$. This is done using the Gram matrix $\mathbf G$. For any $\varepsilon >0$ we define the error order
\begin{align}
\mathcal E_{\varepsilon,L} = e^{- \frac{2S_h(L)}{h}} + e^{-\frac{2d(L-R-4\eta)}{h}}+ e^{-\frac{S_h(L(1+\varepsilon))}{h}} + e^{-\frac{d(L-R-4\eta) + d(\varepsilon L + R + 2 \eta)}{h}}, 
\end{align}
where $S_h$ was defined in \eqref{eq:Sh}.
This error is shown to be small enough in Lemma \ref{lem.errors}.

\begin{theorem}\label{thm.Gram}
Let $\varepsilon >0$ and assume $h>0$ is small enough along an $e_0$-sequence with $e_0 \in (-\frac 12, \frac 12)$. For all $p \in [1, \infty]$, the infinite matrix $\G = (\langle g_\alpha, g_\beta \rangle)_{\alpha, \beta \in V}$ defines a bounded operator on $\ell^p(V)$. Moreover,
\begin{enumerate}
\item We have the following estimate in $\ell^p$ operator norm topology
\[\G = \mathbf I + \T + \mathscr{O}(\mathcal E_{\varepsilon,L}), \quad {\text{in}}\quad \mathcal{L}(\ell^p(V)),\]
where $\T_{\alpha \beta} = \langle g_\alpha, g_\beta \rangle \one_{L \leq |\alpha - \beta| \leq L(1 + \varepsilon )}$. 
\item $\G^{-1}$ is a bounded operator on $\ell^p(V)$ and has a square root which satisfies
\[ \G^{- \frac 1 2} = \mathbf I - \frac 1 2 \T + \mathscr{O}(\mathcal E_{\varepsilon,L}), \quad {\text{in}}\quad \mathcal{L}(\ell^p(V)). \]
\item The vectors
\[e_\alpha = \sum_{\beta \in V} (\G^{- \frac 12})_{\alpha \beta} g_\beta, \qquad \alpha \in V,\]
define a Hilbert basis of $\mathscr{E}_h$.
\item The restriction of $\Lh^V$ to $\mathscr{E}_h$ is unitarily equivalent to $\M = ( \langle \Lh^V e_\alpha, e_\beta \rangle)_{\alpha, \beta \in V}$ as an operator on $\ell^2(V)$, and
\[ \M = \mu_1 \mathbf I + \mathbf W +\mathscr{O}(\mathcal E_{\varepsilon,L}),\]
where $\mathbf W_{\alpha \beta} = \langle (\Lh - \mu_1) \Psi_\alpha, \Psi_\beta \rangle \one_{L \leq |\alpha - \beta | \leq L(1 + \varepsilon)}$.
\end{enumerate}
\end{theorem}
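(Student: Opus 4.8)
The plan is to prove the four claims in order, using the estimates on $g_\alpha - \Psi_\alpha$ from Lemmas~\ref{lem.gv} and~\ref{lem.weighted} together with the decay of $\Phi^-$ from Lemma~\ref{lem.decay} and the $S_h$-estimates collected in Appendix~\ref{sec.errors}. First I would establish that $\G$ is bounded on every $\ell^p(V)$ via the Schur test: it suffices to bound $\sup_\alpha \sum_\beta |\langle g_\alpha, g_\beta\rangle|$, and since $g_\alpha = \Psi_\alpha + (g_\alpha - \Psi_\alpha)$ with $\Psi_\alpha = \tau_\alpha^B(\chi_\eta\Phi^-)$ supported in a ball of radius $L - R - \eta$ around $\alpha$, the $\Psi_\alpha$ overlaps are controlled by the exponential decay of $\Phi^-$ in terms of the Agmon distance $d$, while the contributions involving $g_\alpha - \Psi_\alpha$ are controlled by Lemma~\ref{lem.weighted} (the weight $e^{h\langle x-\alpha\rangle}$ gives just enough room to sum over the lattice-like set $V$). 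The number of $\beta \in V$ with $|\alpha - \beta| \le \rho$ grows at most polynomially in $\rho$ because the disks are disjoint, so the geometric decay beats the multiplicity and the Schur bound is uniform in $\alpha$; the same argument bounds the tail $\sum_{|\alpha-\beta| > L(1+\varepsilon)} |\langle g_\alpha, g_\beta\rangle|$ by $\mathscr{O}(\mathcal E_{\varepsilon,L})$, which is claim (1). The diagonal of $\G$ is $\|g_\alpha\|^2 = 1 + \mathscr{O}(e^{-2d(L-R-4\eta)/h})$ by Lemma~\ref{lem.gv}, so the $\mathbf I$ in claim (1) is accounted for.

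For claim (2), write $\G = \mathbf I + \mathbf E$ where $\mathbf E = \T + \mathscr{O}(\mathcal E_{\varepsilon,L})$ has small $\ell^p \to \ell^p$ norm (the key input from Appendix~\ref{sec.errors} is that both $\|\T\|$ and $\mathcal E_{\varepsilon,L}$ tend to $0$); then $\G^{-1} = \sum_{k\ge 0}(-\mathbf E)^k$ converges in operator norm, is bounded on $\ell^p$, and $\G^{-1/2}$ is defined by the binomial series $\sum_k \binom{-1/2}{k}\mathbf E^k = \mathbf I - \tfrac12\mathbf E + \mathscr{O}(\|\mathbf E\|^2)$. Since $\|\mathbf E\|^2 = \mathscr{O}(\|\T\|^2) + \mathscr{O}(\mathcal E_{\varepsilon,L}) = \mathscr{O}(\mathcal E_{\varepsilon,L})$ — here one uses that $\|\T\|^2$ is dominated by $e^{-2S_h(L)/h} + e^{-S_h(L(1+\varepsilon))/h}$, which is part of $\mathcal E_{\varepsilon,L}$, a point that needs the careful phase analysis of Appendix~\ref{sec.phase} — and $\tfrac12\mathbf E = \tfrac12\T + \mathscr{O}(\mathcal E_{\varepsilon,L})$, we get $\G^{-1/2} = \mathbf I - \tfrac12\T + \mathscr{O}(\mathcal E_{\varepsilon,L})$, which is claim (2).

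Claim (3) is then formal: by Lemma~\ref{lem.gv} the $g_\alpha$ span a dense subspace of $\mathscr E_h$, and the standard symmetric-orthonormalization identity $\langle e_\alpha, e_\beta\rangle = (\G^{-1/2}\G\G^{-1/2})_{\alpha\beta} = \delta_{\alpha\beta}$ (valid because $\G$ is a bounded positive operator with bounded inverse, so all three factors commute and compose correctly) shows $(e_\alpha)$ is orthonormal; combined with density and the fact that $\G^{-1/2}$ is invertible on $\ell^2$, the $(e_\alpha)$ form a Hilbert basis of $\mathscr E_h$. For claim (4), since $(e_\alpha)$ is a Hilbert basis of the $\Lh^V$-invariant space $\mathscr E_h$, the restriction is unitarily equivalent to $\M_{\alpha\beta} = \langle \Lh^V e_\alpha, e_\beta\rangle$. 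Expanding $e_\alpha = \sum_\gamma (\G^{-1/2})_{\alpha\gamma} g_\gamma$ with $g_\gamma = \Psi_\gamma + (g_\gamma - \Psi_\gamma)$ and $\G^{-1/2} = \mathbf I - \tfrac12\T + \mathscr O(\mathcal E_{\varepsilon,L})$, one computes $\langle \Lh^V e_\alpha, e_\beta\rangle$ by multilinear expansion: the leading term is $\langle \Lh^V \Psi_\alpha, \Psi_\beta\rangle = \mu_1\langle\Psi_\alpha,\Psi_\beta\rangle + \langle(\Lh^V - \mu_1)\Psi_\alpha, \Psi_\beta\rangle$; the first piece contributes $\mu_1(\mathbf I + \T + \mathscr O(\mathcal E_{\varepsilon,L}))$ which, after the $-\tfrac12\T$ corrections on both sides cancel the off-diagonal $\mu_1\T$, leaves $\mu_1\mathbf I + \mathscr O(\mathcal E_{\varepsilon,L})$ (using $\T^* = \T$ up to the same error); the second piece gives $\mathbf W$ plus errors, where one restricts the indicator to $|\alpha-\beta|\le L(1+\varepsilon)$ at the cost of $\mathscr O(\mathcal E_{\varepsilon,L})$ by the tail estimate, and replaces $\Lh^V$ by $\Lh$ on the support of $\Psi_\alpha,\Psi_\beta$. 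All cross terms involving at least one factor $g_\gamma - \Psi_\gamma$ or one factor of $-\tfrac12\T$ beyond the ones identified are bounded by $\mathscr O(\mathcal E_{\varepsilon,L})$ using Lemmas~\ref{lem.gv}, \ref{lem.weighted} and Cauchy–Schwarz, together with Lemma~\ref{lem.spectralgap} to control $\Lh^V$ on $\mathscr E_h$.

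The main obstacle will be the bookkeeping in claim (4): one must verify that every error term produced by the triple expansion — products of off-diagonal $\T$-entries, products of $\T$ with $g_\gamma - \Psi_\gamma$ terms, and the adjustment from $\Lh^V$ to $\Lh$ near the obstacles — is genuinely of size $\mathscr O(\mathcal E_{\varepsilon,L})$, and in particular that quadratic-in-$\T$ contributions like $\sum_\gamma \T_{\alpha\gamma}\T_{\gamma\beta}$ are absorbed (they are, since such sums telescope geometrically and each factor is $\lesssim e^{-S_h(L)/h}$, with the worst case $|\alpha-\beta| \approx L$ or $\approx 2L$ producing exactly $e^{-2S_h(L)/h}$ or the cross term $e^{-S_h(L)/h - S_h(L(1+\varepsilon))/h}$ appearing in $\mathcal E_{\varepsilon,L}$). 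This requires the subadditivity/superadditivity properties of $S_h$ and $d$ proved in Appendix~\ref{sec.phase}, and for the infinite-lattice case the polynomial growth of counting functions on $V$; the actual estimate that $\mathcal E_{\varepsilon,L}$ is small compared to the main tunneling term $h e^{-S_h(L)/h}$ is exactly Lemma~\ref{lem.errors} and is where the hypothesis $L > 6R$ (resp. $25R$) enters. Everything else is standard resolvent/Schur-test manipulation.
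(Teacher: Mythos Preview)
Your overall strategy matches the paper's: Schur test for $\ell^1/\ell^\infty$ plus interpolation, binomial series for $\G^{-1/2}$, and the identity $\M = \G^{-1/2}\widetilde{\M}\,\G^{-1/2}$ with $\widetilde{\M}_{\alpha\beta} = \langle \Lh^V g_\alpha, g_\beta\rangle$. However, there is one genuine gap that recurs in both items (1) and (4).

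You expand $\langle g_\alpha, g_\beta\rangle$ by writing $g_\alpha = \Psi_\alpha + (g_\alpha - \Psi_\alpha)$ and then say the ``contributions involving $g_\alpha - \Psi_\alpha$'' are controlled by Lemma~\ref{lem.weighted} and Cauchy--Schwarz. But the cross term $\langle \Psi_\alpha, g_\beta - \Psi_\beta\rangle$ is only bounded by $\|\Psi_\alpha\|\,\|g_\beta - \Psi_\beta\| = \mathscr O(e^{-d(L-R-4\eta)/h})$ via Cauchy--Schwarz, and the weight $e^{h\langle x-\beta\rangle}$ from Lemma~\ref{lem.weighted} does not improve this (it decays too slowly). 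This is \emph{not} $\mathscr O(\mathcal E_{\varepsilon,L})$: every term in $\mathcal E_{\varepsilon,L}$ carries at least a factor $e^{-2d(L-R-4\eta)/h}$ or $e^{-d(L-R-4\eta)/h - d(\cdot)/h}$, i.e.\ is the square of what you obtain. The same problem arises in your treatment of $\langle \Lh^V g_\alpha, g_\beta\rangle$ in item (4).

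The missing idea is the orthogonality built into the projection: since $g_\alpha - \Psi_\alpha = -(I-\Pi)\Psi_\alpha \perp \mathscr E_h$ and $g_\beta \in \mathscr E_h$, the cross terms vanish exactly, and one gets the Pythagorean identity
\[
\langle g_\alpha, g_\beta\rangle = \langle \Psi_\alpha, \Psi_\beta\rangle - \langle g_\alpha - \Psi_\alpha, g_\beta - \Psi_\beta\rangle,
\]
so the correction is quadratic in $g-\Psi$, hence $\mathscr O(e^{-2d(L-R-4\eta)/h})$ as needed. The same orthogonality (together with the fact that $\Lh^V$ commutes with $\Pi$) gives $\langle (\Lh^V-\mu_1)g_\alpha,g_\beta\rangle = \langle (\Lh^V-\mu_1)\Psi_\alpha,\Psi_\beta\rangle + \langle (\Lh^V-\mu_1)(g_\alpha-\Psi_\alpha),g_\beta-\Psi_\beta\rangle$, which is what makes item (4) work. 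Once you insert this observation, the rest of your outline is correct.

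Two minor remarks: the bound you need on $\|\T\|$ comes from Lemma~\ref{lem.scalarproduct} (the $S_h$-estimate on $\langle \Psi_\alpha,\Psi_\beta\rangle$), not from Appendix~\ref{sec.phase}; and Theorem~\ref{thm.Gram} itself does not use $L>6R$ --- that hypothesis enters only later via Lemma~\ref{lem.errors} when one needs $\mathcal E_{\varepsilon,L} = o(he^{-S_h(L)/h})$.
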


\begin{proof}
Some of the technical estimates on the interaction coefficients required in the proof of Theorem~\ref{thm.Gram} are postponed to Appendix~\ref{appendix:IC} below.

1. We prove the result for $p=1$ and $p=\infty$. The general statement then follows by the interpolation inequality,
\[ \| \mathbf A \|_p \leq \| \mathbf A \|_1^{\frac 1 p} \| \mathbf A \|_{\infty}^{1-\frac 1 p} ,\]
where here $\| \mathbf A \|_p$ is the operator norm $\ell^p(V) \to \ell^p(V)$. Note that for hermitian matrices like $\G$, the $\ell^1$-norm and $\ell^{\infty}$-norm are equal and given by
\[ \| \mathbf A \|_1 = \| \mathbf A \|_{\infty} = \sup_{\alpha \in V} \sum_{\beta \in V} |\mathbf A_{\alpha \beta}| .\]
We can write $\G$ as follows, decomposing according to $|\alpha - \beta|$,
\[ \G = \widetilde{\mathbf I} + \mathbf T + \widetilde{\mathbf R}, \]
where 
\begin{align*}
\widetilde{\mathbf I}_{\alpha \beta} = \| g_\alpha \|^2 \mathbf{1}_{\alpha = \beta}, \quad
{\mathbf T}_{\alpha \beta} = \langle g_\alpha, g_\beta \rangle \mathbf{1}_{L \leq |\alpha - \beta| \leq L(1 + \varepsilon)}, \quad
\widetilde{\mathbf R}_{\alpha\beta} = \langle g_\alpha, g_\beta \rangle \mathbf{1}_{|\alpha - \beta| >L(1 + \varepsilon)}.
\end{align*}
We first estimate the norm of $\widetilde{\mathbf R}$,
\begin{align*}
\| \widetilde{\mathbf R} \|_1 = \sup_{\alpha \in V} \sum_{\substack{\beta \in V \\ |\alpha - \beta|>L(1 + \varepsilon)}} |\langle g_\alpha, g_\beta \rangle|.
\end{align*}
By the Pythagorean Theorem we have $\langle g_\alpha, g_\beta \rangle = \langle \Psi_\alpha, \Psi_\beta \rangle + \langle g_\alpha - \Psi_\alpha, g_\beta - \Psi_\beta \rangle$. Remember that $\Psi_\alpha$ is supported on $D(\alpha, L-R-\eta)$. Therefore, given $\alpha \in V$, the scalar product $\langle \Psi_\alpha, \Psi_\beta \rangle$ vanishes for all but finitely many $\beta$'s. For those we can use Lemma~\ref{lem.scalarproduct} below, and we deduce
\begin{align}\label{est.R1}
\| \widetilde{\mathbf R} \|_1 \leq \sup_{\alpha \in V} \sum_{\substack{\beta \in V \\ |\alpha - \beta|>L(1 + \varepsilon)}} | \langle g_\alpha - \Psi_\alpha, g_\beta - \Psi_\beta \rangle | + \mathscr{O} \big(e^{-\frac{S_h(L(1+ \varepsilon))}{h}} + e^{-\frac{d(L-R-3\eta) + d(\varepsilon L + R + 2 \eta)}{h}} \big)
\end{align}
To estimate the remaining sum, we use that $\langle \alpha - \beta \rangle \leq \langle \alpha - x \rangle + \langle \beta - x \rangle$ to insert a decaying weight,
\begin{align*}
| \langle g_\alpha - \Psi_\alpha, g_\beta - \Psi_\beta \rangle | \leq e^{-h \langle \alpha - \beta \rangle} \| e^{h \langle x - \alpha \rangle }( g_\alpha - \Psi_\alpha) \| \|e^{h \langle x - \beta \rangle }( g_\beta - \Psi_\beta) \|,
\end{align*}
and then we use Lemma \ref{lem.weighted} to get
\begin{align*}
| \langle g_\alpha - \Psi_\alpha, g_\beta - \Psi_\beta \rangle | \leq C e^{-h\langle \alpha - \beta \rangle} e^{- \frac{2 d(L-R-4\eta)}{h}}.
\end{align*}
The sum over $\beta$ is convergent and \eqref{est.R1} becomes
\begin{equation}\label{eq.bdR}
\| \widetilde{\mathbf R }\|_1 \leq C \big(e^{- \frac{2 d(L-R-4\eta)}{h}} + e^{-\frac{S_h(L(1+\varepsilon))}{h}} + e^{-\frac{d(L-R-4\eta) + d(\varepsilon L + R + 2 \eta)}{h}} \big) .
\end{equation}
Similarly we can estimate
\begin{align*}
%\| \mathbf T - \widetilde{\mathbf T} \|_1 &\leq C \sup_{L \leq |\alpha - \beta| \leq L(1 + \varepsilon)} |\langle g_\alpha - \Psi_\alpha, g_\beta - \Psi_\beta \rangle | \leq C e^{- \frac{2d(L-R- 4\eta)}{h}},\\
\| \mathbf I - \widetilde{\mathbf I} \|_1 &\leq C \sup_{\alpha \in V} | \| g_\alpha \|^2 -1 | \leq C  e^{- \frac{2d(L-R- 4\eta)}{h}},
\end{align*}
using Lemma \ref{lem.gv}. This proves the first item.

2. Note that the previous discussion also gives, for $p \in [1,\infty]$,
\begin{align*}
\| \mathbf G - \mathbf I \|_p \leq \| \mathbf T \|_p + \mathscr{O}(e^{- \frac{2 d(L-R-4\eta)}{h}} + e^{-\frac{S_h(L(1+\varepsilon))}{h}} + e^{-\frac{d(L-R-4\eta) + d(\varepsilon L + R + 2 \eta)}{h}}),
\end{align*}
and using Lemma~\ref{lem.gv}, as well as Lemma~\ref{lem.scalarproduct} from Appendix~\ref{appendix:IC}, we get
\begin{align*}
\| \mathbf T \|_p \leq \| \mathbf T \|_1 &\leq C \sup_{L \leq |\alpha - \beta| \leq L(1 + \varepsilon)} | \langle \Psi_\alpha, \Psi_\beta \rangle | + C e^{-\frac{2d(L-R-4\eta)}{h}}  \\ &\leq C \big(e^{-\frac{S_h(L)}{h}} + e^{-\frac{d(L-R-3\eta)}{h}} \big).
\end{align*}
Therefore $\G$ is a perturbation of the identity, and as such it is invertible. Moreover, $\mathbf G^{-1/2}$ exists -- as sum of a convergent Taylor series -- and the first terms of the series give
\begin{align*}
\| \G^{- 1/2} - \mathbf I + \frac 1 2 \mathbf T \|_p \leq C \| \T \|_p^2 + C \big(e^{- \frac{2 d(L-R-4\eta)}{h}} + e^{-\frac{S_h(L(1+\varepsilon))}{h}} + e^{-\frac{d(L-R-4\eta) + d(\varepsilon L + R + 2 \eta)}{h}} \big),
\end{align*}
which proves the second item.

3. The series defining $e_\alpha$ is convergent in $\mathscr{E}_h$ because $\G^{- \frac 1 2}$ is a bounded operator on $\ell^{1}(V)$ and
\[ \sum_{\beta \in V}  | (\mathbf G^{-\frac 12})_{\alpha \beta}| \|g_\beta \| \leq C \sum_{\beta} | (\mathbf G^{-\frac 12})_{\alpha \beta}| \leq C \| \mathbf G \|_1. \]
Moreover, item (2) for $p=1$ yields
\begin{equation}
 e_\alpha = g_\alpha - \frac 1 2 \sum_{\substack{\beta \in V \\ L \leq |\alpha - \beta |\leq L(1 + \varepsilon)}} \langle g_\alpha, g_\beta \rangle g_\beta + \mathscr{O}(\mathcal E_{\varepsilon, L}),
\end{equation}
uniformly with respect ot $\alpha$. In particular, $e_\alpha$ is a perturbation of $g_\alpha$. Using Lemma \ref{lem.gv}, we deduce that it spans a dense subspace of $\mathscr{E}_h$. It is also an orthonormal family, since
\begin{align*}
\langle e_\alpha, e_{\alpha'} \rangle &= \sum_{\beta, \beta' \in V} \G^{- \frac 1 2}_{\alpha \beta}  \langle g_\beta, g_{\beta'} \rangle \overline{\G^{- \frac 1 2}_{\alpha' \beta'}} = (\G^{- \frac 12} \G \G^{- \frac 12})_{\alpha \alpha'} = \mathbf{I}_{\alpha \alpha'}.
\end{align*}

4. Since $(e_\alpha)_{\alpha \in V}$ is a Hilbert basis of $\mathscr{E}_h$, the operator $\Lh$ restricted to $\mathscr{E}_h$ is obviously unitary equivalent to $\M$, the unitary being
\[ \mathbf U : x \in \ell^2(V) \mapsto \sum_{\alpha} x_\alpha e_{\alpha} \in \mathscr{E}_h. \]
To obtain the estimates on $\M$, let us first introduce the infinite matrix $$\widetilde{\M}= ( \langle \Lh^V g_\alpha, g_\beta \rangle)_{\alpha, \beta \in V}.$$ We claim that $\widetilde{\M}$ defines a bounded operator on $\ell^1(V)$ such that
\begin{equation}\label{def.tildeL}
 \widetilde{\M} = \mu_1 \mathbf{I} + \mathbf W + \mu_1 \mathbf T + \mathscr{O}(\mathcal E_{\varepsilon,L}), 
\end{equation}
in $\ell^1$ operator norm topology. Indeed, for $\alpha = \beta$ we have
\begin{align*}
 \langle \Lh^V g_\alpha, g_\alpha \rangle &= \mu_1 \| g_\alpha \|^2 + \langle (\Lh^V - \mu_1) g_\alpha, g_\alpha \rangle\\
&=  \mu_1 \| g_\alpha \|^2 + \langle (\Lh^V - \mu_1) \Psi_\alpha, \Psi_\alpha \rangle +  \langle (\Lh^V - \mu_1) ( g_\alpha - \Psi_\alpha), ( g_\alpha- \Psi_\alpha) \rangle\\
&= \mu_1 + \mathscr{O}(e^{- \frac{2d(L-R-4\eta)}{h}}),
\end{align*}
by Lemmas \ref{lem.Psiv} and \ref{lem.gv}. For $L \leq |\alpha - \beta| \leq L(1 + \varepsilon)$ we have
\begin{align*}
 \langle \Lh^V g_\alpha, g_\beta \rangle \mathbf{1}_{L \leq |\alpha - \beta| \leq L(1 + \varepsilon)} &=   \Big( \langle (\Lh^V - \mu_1) g_\alpha, g_\beta \rangle  + \mu_1 \langle g_\alpha, g_\beta \rangle \Big) \mathbf{1}_{L \leq |\alpha - \beta| \leq L(1 + \varepsilon)}\\ &=\mathbf W_{\alpha \beta} + \mu_1 \mathbf{T}_{\alpha \beta} +  \mathscr{O}(e^{- \frac{2d(L-R-4\eta)}{h}}), 
\end{align*}
where we used Lemma \ref{lem.gv} again. Finally the last part we estimate as before,
\begin{multline*}
 \sup_{\alpha \in V} \sum_{\substack{\beta \in V \\ | \alpha - \beta| >L(1 + \varepsilon)}} |\langle \Lh^V g_\alpha, g_\beta \rangle | \\ \leq  \sup_{\alpha \in V} \sum_{\substack{\beta \in V \\ | \alpha - \beta| >L(1 + \varepsilon)}} |\langle \Lh^V (g_\alpha - \Psi_\alpha), g_\beta - \Psi_\beta \rangle | + |\langle \Lh^V \Psi_\alpha, \Psi_\beta \rangle |.
 \end{multline*}
The second term vanishes for all but finitely many $\beta$'s. For the remaining ones we use Lemma \ref{lem.wh2}. In the first term we insert the exponential weight and find
 \begin{align*}
 &\sup_{\alpha \in V} \sum_{\substack{\beta \in V \\ | \alpha - \beta| >L(1 + \varepsilon)}} |\langle \Lh^V g_\alpha, g_\beta \rangle | \\ &\qquad \leq  \sup_{\alpha \in V} \sum_{\substack{\beta \in V \\ | \alpha - \beta| >L(1 + \varepsilon)}} e^{-h \langle \alpha - \beta \rangle} \| e^{h \langle x - \alpha \rangle} \Lh^V (g_\alpha - \Psi_\alpha) \| \| e^{h \langle x - \beta \rangle} ( g_\beta - \Psi_\beta ) \| \\ & \qquad \qquad + \mathscr{O}(e^{- \frac{S_h(L(1+\varepsilon))}{h}}+ e^{-\frac{d(L-R- 3 \eta) + d(\varepsilon L + R + 2 \eta)}{h}}+e^{-\frac{2d(L-R-4\eta)}{h}}).
 \end{align*}
 We then use Lemma \ref{lem.weighted} to obtain
  \begin{align*}
 \sup_{\alpha \in V} \sum_{\substack{\beta \in V \\ | \alpha - \beta| >L(1 + \varepsilon)}} |\langle \Lh^V g_\alpha, g_\beta \rangle | \leq  \mathscr{O}(e^{- \frac{S_h(L(1+\varepsilon))}{h}}+ e^{-\frac{d(L-R- 3 \eta) + d(\varepsilon L + R + 2 \eta)}{h}}+e^{-\frac{2d(L-R-4\eta)}{h}}).
 \end{align*}
and this proves \eqref{def.tildeL}.

By definition of $\M$ and $e_{\alpha}$ we have
\begin{align*}
\M_{\alpha \alpha'} = \sum_{\beta \beta' \in V} \mathbf G^{- \frac 12}_{\alpha \beta} \mathbf G^{- \frac 12}_{\alpha' \beta'} \widetilde{\M}_{\beta \beta'} = (\G^{- \frac 12} \widetilde{\M} \G^{- \frac 12} )_{\alpha \alpha'}.
\end{align*}
Moreover using item (2) and \eqref{def.tildeL} we have
\[\G^{- \frac 12} \widetilde{\M} \G^{- \frac 12} = \mu_1 \mathbf I + \mathbf W + \mathscr{O}(\mathcal E_{\varepsilon,L}) ,\]
in $\ell^1$ operator norm topology. We deduce that the remainder $\mathbf S = \M - \mu_1 \mathbf I - \mathbf W$ is small enough in $\ell^1$ operator norm topology, but also in $\ell^{\infty}$ norm topology since it is symmetric. Using interpolation we deduce
\[  \M - \mu_1 \mathbf I - \mathbf W = \mathscr{O}(\mathcal E_{\varepsilon,L}) \]
in $\ell^2$ operator norm topology.
\end{proof}

\section{Reduction \texorpdfstring{when $e_0 = \frac 12$}{(Case 2)}} \label{sec.reduction2}

We now consider the case $e_0= \frac 12$, and we assume in all of this section that $h$ is as small as we want along a $\frac 12$-sequence. In this case, the first eigenvalue of the single-obstacle operator $\Lh^0$ might be double. However, Theorem \ref{thm.single} ensures that there is a gap with the third eigenvalue,
\begin{equation}
\mu_3 - \mu_2 \geq 2c_0 h^2 + o(h^2).
\end{equation} 
By Lemma \ref{lem.m-m+}, the first two eigenvalues are $\lbrace \mu_-,\mu_+ \rbrace$ where $\mu_\pm$ is the ground state energy of $\mathscr{L}_{h,m_\pm}$ and
\begin{align}
m_- &= \xi_h - \frac 12 + o(1) = \frac{BR^2}{2h} + \sqrt{\frac{\Theta_0 BR^2}{h}} + \mathcal C_2 - \frac 12 + o(1),\\
m_+ &= \xi_h + \frac 12 + o(1)= \frac{BR^2}{2h} + \sqrt{\frac{\Theta_0 BR^2}{h}} + \mathcal C_2 + \frac 12 + o(1).
\end{align}
Depending on $h$, either $\mu_-$ or $\mu_+$ can be the smallest. The associated eigenfunctions are $\Phi^-$ and $\Phi^+$.
We denote by $\Pi$ the spectral projector of $\Lh^V$ corresponding to the interval $(-\infty, \mu_2 +c_0 h^2)$. 

The reduction in this case follows essentially the same lines as in Section \ref{sec.reduction1}, except that the eigenspace of the single-obstacle operator is now bidimensional. Using Agmon estimates (Lemma \ref{lem.Agmon}), we know that the spectrum of $\Lh^V$ is exponentially close to $\mu_1$.

\begin{lemma}\label{lem.spectralgap2}
If $\lambda \in {\rm{sp}}(\Lh^V)$ is such that $\lambda < \mu_2 +c_0 h^2$ then we have 
\[ \min \big( |\lambda - \mu_-| , |\lambda - \mu_+| \big) \leq C_N h^N \]
 for all $N \geq 2$, where $C_N>0$ is independent of $\lambda$ and $h$. 
\end{lemma}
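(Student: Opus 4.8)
The statement is the direct analogue of Lemma \ref{lem.spectralgap}, with the one-dimensional ground state eigenspace of $\Lh^0$ replaced by the (possibly) two-dimensional space spanned by $\Phi^-$ and $\Phi^+$. The plan is to run exactly the argument of Lemma \ref{lem.spectralgap}, localizing a state $\Psi$ from $\mathrm{Ran}(\Pi)$ near the obstacles with a partition of unity $(\theta_\alpha)_{\alpha\in V}$, $\sum_\alpha \theta_\alpha^2 = 1$, $\theta_\alpha$ supported near $D(\alpha,R)$. For each $\alpha$, the localized piece $\theta_\alpha\Psi$ lies in the domain of the single-obstacle operator $\Lh^\alpha$, which is unitarily equivalent to $\Lh^0$ via $\tau_\alpha^B$. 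The key point is that, since $\lambda < \mu_2 + c_0 h^2$ and $\mu_3 - \mu_2 \geq 2c_0 h^2 + o(h^2)$ by Theorem \ref{thm.single}, the spectral subspace of $\Lh^0$ below $\tfrac{1}{2}(\mu_2+\mu_3)$ is exactly $\mathrm{span}(\Phi^-,\Phi^+)$, and the closest point of $\mathrm{sp}(\Lh^0)$ to $\lambda$ is either $\mu_-$ or $\mu_+$, with all other eigenvalues at distance $\gtrsim h^2$.

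Concretely, I would use the spectral theorem for $\Lh^\alpha$ to write, for each $\alpha$,
\[
\|(\Lh^\alpha - \lambda)\theta_\alpha\Psi\|^2 \geq \min\big(|\lambda-\mu_-|,|\lambda-\mu_+|\big)^2 \, \|\Pi_0^\alpha \theta_\alpha\Psi\|^2 + c\, h^4 \|(I-\Pi_0^\alpha)\theta_\alpha\Psi\|^2,
\]
where $\Pi_0^\alpha = \tau_\alpha^B \Pi_0 (\tau_\alpha^B)^*$ is the projection onto $\mathrm{span}(\tau_\alpha^B\Phi^-, \tau_\alpha^B\Phi^+)$; but actually the cleaner route, mirroring Lemma \ref{lem.spectralgap} verbatim, is to just bound $\|(\Lh^\alpha-\lambda)\theta_\alpha\Psi\|^2 \geq \min(|\lambda-\mu_-|,|\lambda-\mu_+|)^2 \|\theta_\alpha\Psi\|^2$ when $\lambda$ is closer to $\{\mu_-,\mu_+\}$ than to any other eigenvalue of $\Lh^0$ — which holds once $h$ is small, by the gap to $\mu_3$. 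Summing over $\alpha$ and using $\sum_\alpha\|\theta_\alpha\Psi\|^2 = \|\Psi\|^2$ gives the lower bound $\sum_\alpha \|(\Lh^\alpha-\lambda)\theta_\alpha\Psi\|^2 \geq \min(|\lambda-\mu_-|,|\lambda-\mu_+|)^2\|\Psi\|^2$. For the upper bound, since $\theta_\alpha\Psi$ also lies in $\mathrm{Dom}(\Lh^V)$ and $\Lh^\alpha\theta_\alpha\Psi = \Lh^V\theta_\alpha\Psi$ (because $\theta_\alpha\Psi$ vanishes near all other obstacles), expand the commutator: $\|(\Lh^\alpha-\lambda)\theta_\alpha\Psi\|^2 \leq 2\|[\Lh^V,\theta_\alpha]\Psi\|^2 + 2\|\theta_\alpha(\Lh^V-\lambda)\Psi\|^2$. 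The commutator is supported away from the obstacles, so the Agmon estimates of Lemma \ref{lem.Agmon} (applicable since $\lambda < \mu_2 + c_0 h^2 < \tfrac{1+\Theta_0}{2}Bh$ for small $h$, hence $\Psi = \one_{(-\infty,\frac{1+\Theta_0}{2}Bh]}(\Lh^V)\Psi$) make $\sum_\alpha\|[\Lh^V,\theta_\alpha]\Psi\|^2 \leq C_N h^N\|\Psi\|^2$. Since $\Psi\in\mathrm{Ran}(\Pi)$ and $\Lh^V|_{\mathrm{Ran}(\Pi)}$ is self-adjoint with $\lambda$ in its spectrum, $\inf_{\Psi\neq 0}\|(\Lh^V-\lambda)\Pi\Psi\|^2/\|\Pi\Psi\|^2 = 0$, so the second term can be made arbitrarily small. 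Combining the two bounds yields $\min(|\lambda-\mu_-|,|\lambda-\mu_+|)^2 \leq C_N h^N$, which is the claim after taking square roots (renaming $N$).

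The only genuinely new point compared to Lemma \ref{lem.spectralgap} — and the thing to be careful about — is that the single-obstacle spectral subspace is two-dimensional, so one must ensure that $\lambda$ being below $\mu_2 + c_0 h^2$ really does force $\lambda$ to be closer to $\{\mu_-,\mu_+\}$ than to $\mu_3$ (and than to the ground states of $\mathscr{L}_{h,m}$ for $m \notin \{m_-,m_+\}$, which also sit at energy $\geq \mu_3$). This is immediate from Theorem \ref{thm.single}: $\mu_3 - \lambda \geq \mu_3 - \mu_2 - c_0 h^2 \geq c_0 h^2 + o(h^2) > 0$, while $\min(|\lambda-\mu_-|,|\lambda-\mu_+|) \leq \lambda - \mu_1 \leq \mu_2 + c_0 h^2 - \mu_1$, and one checks the former dominates for small $h$; no actual obstacle here, just bookkeeping. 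The rest is a line-by-line transcription of the proof of Lemma \ref{lem.spectralgap}.
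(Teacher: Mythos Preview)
Your proposal is correct and follows essentially the same approach as the paper: the paper's own proof simply states that it is identical to that of Lemma~\ref{lem.spectralgap}, replacing the distance $|\lambda-\mu_1|$ by $\min(|\lambda-\mu_-|,|\lambda-\mu_+|)$, which is precisely what you spell out. Your added verification that $\lambda<\mu_2+c_0h^2$ forces $\{\mu_-,\mu_+\}$ to contain the closest point of $\mathrm{sp}(\Lh^0)$ to $\lambda$ is the only bookkeeping needed, and it is handled correctly.
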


\begin{proof}
The proof is identical to Lemma \ref{lem.spectralgap}, up to changing the distance $|\lambda - \mu_1|$ by the minimal distance $\min \big( |\lambda - \mu_-| , |\lambda - \mu_+| \big)$.
\end{proof}

Recalling the definition \eqref{def.cutoff} of the cutoff function $\chi_\eta$, we define for $\alpha \in V$ the pairs of approximate eigenfunctions
\begin{equation}\label{def.Psialpha2}
\Psi_{\alpha}^{\pm} = \tau_\alpha^B \big( \chi_\eta \Phi^\pm \big), \quad g_{\alpha}^\pm = \Pi \Psi_\alpha^{\pm}.
\end{equation}

Again, the functions $\Psi_\alpha^{\pm}$ are good approximate solutions to the eigenvalue equation for $\Lh^V$. Moreover, $g_\alpha^\pm$ is exponentially close to $\Psi_\alpha^\pm$.

\begin{lemma}\label{lem.gv2}
If $\eta$ is small enough then
\begin{enumerate}
\item For all $\alpha \in V$,
\begin{align*}
 \|( \Lh^V - \mu_\pm) \Psi_\alpha^\pm \| = \mathscr{O}\big( e^{- \frac{d(L-R-3\eta)}{h}} \big), \quad \| \Psi_\alpha^\pm \| = 1 + \mathscr{O}\big( e^{- \frac{d(L-R-3\eta)}{h}} \big),
 \end{align*}
 and
 \[\langle \Psi_\alpha^+, \Psi_\alpha^{-} \rangle = \mathscr{O}\big( e^{- \frac{2d(L-R-3\eta)}{h}} \big). \]
\item For all $\alpha \in V$,
\[ \| g_\alpha^\pm - \Psi_\alpha^\pm \| = \mathscr{O}\big( e^{- \frac{d(L-R-4\eta)}{h}} \big), \quad \langle \Lh^V ( g_\alpha^\pm - \Psi_\alpha^\pm), g_\alpha^\pm - \Psi_\alpha^\pm \rangle = \mathscr{O}\big( e^{- \frac{d(L-R-4\eta)}{h}} \big),\]
and the estimates are uniform with respect to $\alpha \in V$.
\item The vectors $(g_\alpha^\pm)_{\alpha \in V}$ span a dense subspace of $\mathscr{E}_h$.
\end{enumerate}
\end{lemma}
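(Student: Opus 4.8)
The plan is to repeat almost verbatim the proofs of Lemmas~\ref{lem.Psiv} and~\ref{lem.gv}, the only structural change being that the ground eigenspace of $\Lh^0$ is now the two-dimensional space spanned by $\Phi^-$ and $\Phi^+$. The one genuinely new input is that $\Phi^-$ and $\Phi^+$ are orthogonal in $L^2(\Omega^0)$, simply because they sit in the distinct Fourier sectors $m_-$ and $m_+=m_-+1$, so that $\int_0^{2\pi}e^{i(m_+-m_-)\theta}\,\dd\theta=0$.

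\emph{Items (1) and (2).} For the quasimode estimate I would use, exactly as in Lemma~\ref{lem.Psiv}, that $\tau_\alpha^B$ commutes with $\Lh$ and that $\Lh^0\Phi^\pm=\mu_\pm\Phi^\pm$, whence $(\Lh^V-\mu_\pm)\Psi_\alpha^\pm=\tau_\alpha^B[\Lh^0,\chi_\eta]\Phi^\pm$; the commutator is supported in $\{|x|>L-R-2\eta\}$, and the decay of $\Phi^\pm$ and of its gradient from Lemma~\ref{lem.decay} gives the bound $\mathscr{O}(e^{-d(L-R-3\eta)/h})$ (the passage from $2\eta$ to $3\eta$ absorbs the $(1-\eta)$ in the Agmon weight and the front powers of $h^{-1}$). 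The norm estimate follows from $\|\Psi_\alpha^\pm-\tau_\alpha^B\Phi^\pm\|=\|(\chi_\eta-1)\Phi^\pm\|=\mathscr{O}(e^{-d(L-R-3\eta)/h})$ together with $\|\tau_\alpha^B\Phi^\pm\|=1$. For the cross term, since $\tau_\alpha^B$ is unitary and $\chi_\eta$ is radial, $\langle\Psi_\alpha^+,\Psi_\alpha^-\rangle=\langle\chi_\eta^2\Phi^+,\Phi^-\rangle=\langle(\chi_\eta^2-1)\Phi^+,\Phi^-\rangle$ by orthogonality of $\Phi^\pm$, and on the support of $\chi_\eta^2-1$ both factors are $\mathscr{O}(e^{-(1-\eta)d(L-R-2\eta)/h})$, which produces the squared bound. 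Item (2) is then the argument of Lemma~\ref{lem.gv}: since $I-\Pi$ commutes with $\Lh^V$ and $\Lh^V-\mu_\pm\geq\mu_2+c_0h^2-\mu_\pm\geq c_0h^2$ on $\mathrm{Ran}(I-\Pi)$ (using $\mu_\pm\leq\mu_2$, which holds because $\{\mu_-,\mu_+\}=\{\mu_1,\mu_2\}$ by Lemma~\ref{lem.m-m+}), one gets $c_0h^2\|(I-\Pi)\Psi_\alpha^\pm\|\leq\|(\Lh^V-\mu_\pm)\Psi_\alpha^\pm\|$, hence $\|g_\alpha^\pm-\Psi_\alpha^\pm\|=\mathscr{O}(h^{-2}e^{-d(L-R-3\eta)/h})=\mathscr{O}(e^{-d(L-R-4\eta)/h})$; the quadratic-form bound follows from Cauchy--Schwarz together with $\|\Lh^V(g_\alpha^\pm-\Psi_\alpha^\pm)\|=\|(I-\Pi)\Lh^V\Psi_\alpha^\pm\|=\mathscr{O}(e^{-d(L-R-4\eta)/h})$, obtained by writing $(I-\Pi)\Lh^V\Psi_\alpha^\pm=\mu_\pm(I-\Pi)\Psi_\alpha^\pm+(I-\Pi)(\Lh^V-\mu_\pm)\Psi_\alpha^\pm$.

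\emph{Item (3).} This is the density statement and I expect it to be the main obstacle, since it is where the two-dimensionality of the local eigenspace must be handled quantitatively. Following Lemma~\ref{lem.gv}, suppose $\Psi\in\mathscr{E}_h=\mathrm{Ran}(\Pi)$ is orthogonal to all $g_\alpha^\pm$; then $\langle\Psi,\Psi_\alpha^\pm\rangle=\langle\Pi\Psi,\Psi_\alpha^\pm\rangle=\langle\Psi,g_\alpha^\pm\rangle=0$. With a partition of unity $\sum_\alpha\theta_\alpha^2=1$, $\theta_\alpha$ localized near $D(\alpha,R)$, the IMS formula and the Agmon estimates of Lemma~\ref{lem.Agmon} give $\langle\Lh^V\Psi,\Psi\rangle\geq\sum_\alpha\langle\Lh^0 v_\alpha,v_\alpha\rangle-Ch^N\|\Psi\|^2$ with $v_\alpha=\tau_{-\alpha}^B\theta_\alpha\Psi$. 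I would then split with the rank-two projector $\Pi_0$ onto $\mathrm{span}(\Phi^-,\Phi^+)$: on $\mathrm{Ran}(I-\Pi_0)$ one has $\Lh^0\geq\mu_3\geq\mu_2+2c_0h^2+o(h^2)$, and since $\mu_2+2c_0h^2-\mu_\pm=\mathscr{O}(h^2)$ — here the hypothesis $e_0=\frac12$ enters through $\mu_2-\mu_1=o(h^2)$ in Theorem~\ref{thm.single} — this yields
\[ \langle\Lh^0 v_\alpha,v_\alpha\rangle\geq(\mu_2+2c_0h^2)\|v_\alpha\|^2-Ch^2\big(|\langle v_\alpha,\Phi^-\rangle|^2+|\langle v_\alpha,\Phi^+\rangle|^2\big). \]
Each scalar product is treated as in Lemma~\ref{lem.gv}: inserting $\chi_\eta$ and using $\langle\Psi,\Psi_\alpha^\pm\rangle=0$, one has $\langle v_\alpha,\Phi^\pm\rangle=\langle(\theta_\alpha-1)\Psi,\Psi_\alpha^\pm\rangle+\mathscr{O}(h^N\|\theta_\alpha\Psi\|)$, so that $\sum_\alpha|\langle v_\alpha,\Phi^\pm\rangle|^2$ is controlled by $\sum_\alpha\|\chi_\eta(\cdot-\alpha)(\theta_\alpha-1)\Psi\|^2+Ch^{2N}\|\Psi\|^2$, which is exponentially small plus $\mathscr{O}(h^{2N})$ by the Agmon estimates, its coefficient being supported away from the obstacles. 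Since $\sum_\alpha\|v_\alpha\|^2=\|\Psi\|^2$, summing over $\alpha$ gives $\langle\Lh^V\Psi,\Psi\rangle\geq(\mu_2+2c_0h^2-Ch^N)\|\Psi\|^2$ up to exponentially small terms, which for $h$ small contradicts $\Psi\in\mathscr{E}_h=\mathrm{Ran}\,\one_{(-\infty,\mu_2+c_0h^2)}(\Lh^V)$ unless $\Psi=0$.
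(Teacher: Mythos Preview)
Your proposal is correct and follows essentially the same approach as the paper: items (1) and (2) are reproved exactly as in Lemmas~\ref{lem.Psiv} and~\ref{lem.gv} with the orthogonality $\langle\Phi^+,\Phi^-\rangle=0$ supplying the cross-term estimate, and item (3) is the argument of Lemma~\ref{lem.gv} with the rank-one projector $\Pi_0$ replaced by the rank-two projector onto $\mathrm{span}(\Phi^-,\Phi^+)$ and the gap $\mu_3-\mu_2\geq 2c_0h^2$ replacing $\mu_2-\mu_1\geq c_0(1-2|e_0|)h^2$. Your quadratic-form estimate in item (2) is in fact slightly sharper than the one stated (you obtain the squared exponential, as in Lemma~\ref{lem.gv}), which is harmless.
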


\begin{proof}
The proof of the first item is identical to Lemma \ref{lem.Psiv}, except for the last statement. There, we simply use that $\langle \Phi^+, \Phi^- \rangle = 0$ and the decay of $\Phi^{\pm}$ to find
\[\langle \Psi_\alpha^+, \Psi_\alpha^- \rangle = \langle (\chi_\eta^2-1) \Phi^+, \Phi^- \rangle = \mathscr{O}\big( e^{- \frac{2d(L-R-3\eta)}{h}} \big).\]
The second item is identical to Lemma \ref{lem.gv}. To obtain that $(g_\alpha^\pm)_{\alpha \in V}$ span a dense subspace of $\mathscr E_h$, we also follow the same strategy. We can show that any $\Psi \in \lbrace g_\alpha^\pm\,, \alpha \in V \rbrace^\perp$ satisfy
\[ \langle \Lh^V \Psi, \Psi \rangle \geq \Big( \mu_2 + 2c_0 h^2 -C h^N\Big) \| \Psi \|^2,\]
using the spectral projector
\[\widetilde{\Pi}_0 = \langle \cdot, \Phi^+ \rangle \Phi^+ + \langle \cdot , \Phi^- \rangle \Phi^-\]
instead of $\Pi_0$, and the lower bound $\mu_3 - \mu_2 \geq 2 c_0 h^2$. If $\Psi \in \mathscr{E}_h$ we also have $\langle \Lh^V \Psi, \Psi \rangle \leq (\mu_2 + c_0 h^2) \| \Psi \|^2$, and
therefore $\Psi$ must have vanishing norm -- since $\mu_1 = \mu_2 + o(h^2)$.
\end{proof}

The proof of Lemma \ref{lem.weighted} also works here to provide some decay on $g_\alpha^\pm - \Psi_\alpha^\pm$,
\begin{equation}\label{eq.weighted2}
\| e^{ h \langle x - \alpha \rangle} ( g_\alpha^\pm - \Psi_\alpha^\pm) \| = \mathscr{O}( e^{-\frac{d(L-R-4\eta)}{h}} ),  \quad \| e^{ h \langle x - \alpha \rangle} \Lh^V ( g_\alpha^\pm - \Psi_\alpha^\pm) \| = \mathscr{O}( e^{-\frac{d(L-R-4\eta)}{h}} ).
\end{equation}

We are now in position to prove the main result of this section, which is the analog of Theorem \ref{thm.Gram}. We define effective operators which act on the space $\ell^p(V) \otimes \C^2$. An element of this space will be denoted by $x= (x^+, x^-)$, where $x^+$, $x^- \in \ell^p(V)$. Operators acting on this space can be described by an infinite matrix $\mathbf M = (\mathbf M_{\alpha \beta}^{\sigma \sigma'})_{\alpha \beta \in V, \sigma \sigma' \in \lbrace \pm \rbrace}$, and $y = \mathbf M x$ is then such that
\[y^\sigma_\alpha = \sum_{\beta \in V, \sigma' \in \lbrace \pm \rbrace} \mathbf M_{\alpha \beta}^{\sigma \sigma'} x^{\sigma'}_\beta.\]
We recall the size of the error terms,
\[\mathcal E_{\varepsilon,L} = e^{- \frac{2S_h(L)}{h}} + e^{-\frac{2d(L-R-4\eta)}{h}}+ e^{-\frac{S_h(L(1+\varepsilon))}{h}} + e^{-\frac{d(L-R-4\eta) + d(\varepsilon L + R + 2 \eta)}{h}}, \]
which are controlled in Lemma \ref{lem.errors}.

\begin{theorem}\label{thm.Gram2}
Let $\varepsilon >0$ and assume $e_0=\frac 12$. For all $p \in [1,\infty]$, the infinite matrix $\mathbf G = (\langle g_\alpha^\sigma, g_\beta^{\sigma'} \rangle)_{\alpha \beta \in V, \sigma \sigma' \in \lbrace \pm \rbrace }$ defines a bounded operator on $\ell^p(V) \otimes \C^2$. Moreover,
\begin{enumerate}
\item We have the following estimate in $\ell^p$ operator norm topology
\[\G = \mathbf I + \T + \mathscr{O}(\mathcal E_{\varepsilon,L}), \quad {\text{in}}\quad \mathcal{L}(\ell^p(V)\otimes \C^2),\]
where $\T_{\alpha \beta}^{\sigma \sigma'} = \langle g_\alpha^\sigma, g_\beta^{\sigma'} \rangle \one_{L \leq |\alpha - \beta| \leq L(1 + \varepsilon )}$. 
\item $\G^{-1}$ is a bounded operator on $\ell^p(V)$, and has a square root satisfying
\[ \G^{- \frac 1 2} = \mathbf I - \frac 1 2 \T + \mathscr{O}(\mathcal E_{\varepsilon,L}), \quad {\text{in}}\quad \mathcal{L}(\ell^p(V) \otimes \C^2). \]
\item The vectors
\[e_\alpha^\sigma = \sum_{\beta \in V, \sigma' \in \lbrace \pm \rbrace} (\G^{- \frac 12})_{\alpha \beta}^{\sigma \sigma'} g_\beta^{\sigma'}, \qquad \alpha \in V, \quad \sigma \in \lbrace \pm \rbrace,\]
define a Hilbert basis of $\mathscr{E}_h$.
\item The restriction of $\Lh^V$ to $\mathscr{E}_h$ is unitarily equivalent to the operator\\ $\M = ( \langle \Lh^V e_\alpha^\sigma, e_\beta^{\sigma'} \rangle)_{\alpha, \beta \in V, \sigma, \sigma' \in \lbrace \pm \rbrace}$ acting on $\ell^2(V) \otimes \C^2$, and
\[ \M = \mathbf D + \mathbf W +\mathscr{O}(\mathcal E_{\varepsilon,L}),\]
where $\mathbf D_{\alpha \beta}^{\sigma \sigma'} = \mu_\sigma \mathbf I_{\alpha \beta}^{\sigma \sigma'}$ and
\begin{equation}\label{eq.Wabs}
 \mathbf W_{\alpha \beta}^{\sigma \sigma'} = \big\langle (\Lh - \frac{\mu_\sigma + \mu_{\sigma'}}{2}) \Psi_\alpha^\sigma, \Psi_\beta^{\sigma'} \big\rangle \one_{L \leq |\alpha - \beta | \leq L(1 + \varepsilon)}.
 \end{equation}
\end{enumerate}
\end{theorem}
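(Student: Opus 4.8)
The plan is to run, essentially verbatim, the four-step scheme behind Theorem~\ref{thm.Gram}, with the obstacle index $\alpha\in V$ promoted to the pair $(\alpha,\sigma)\in V\times\{\pm\}$ and the scalar matrix $\mu_1\mathbf I$ replaced by the block-diagonal $\mathbf D_{\alpha\beta}^{\sigma\sigma'}=\mu_\sigma\mathbf I_{\alpha\beta}^{\sigma\sigma'}$. For item~(1), I would prove the estimate for $p=1$ and $p=\infty$ --- where, $\G$ being Hermitian, both norms equal $\sup_\alpha\sum_{\beta,\sigma'}|\G_{\alpha\beta}^{\sigma\sigma'}|$ --- and interpolate. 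Decompose $\G=\widetilde{\mathbf I}+\T+\widetilde{\mathbf R}$ according to whether $\alpha=\beta$, $L\le|\alpha-\beta|\le L(1+\varepsilon)$, or $|\alpha-\beta|>L(1+\varepsilon)$. For the diagonal block, whose $\alpha$-entry is $\big(\langle g_\alpha^\sigma,g_\alpha^{\sigma'}\rangle\big)_{\sigma,\sigma'}$, Lemma~\ref{lem.gv2} together with the decay of $\Phi^\pm$ gives $\langle g_\alpha^\sigma,g_\alpha^{\sigma'}\rangle=\delta_{\sigma\sigma'}+\mathscr O(\mathcal E_{\varepsilon,L})$ (using in particular $\langle\Psi_\alpha^+,\Psi_\alpha^-\rangle=\mathscr O(e^{-2d(L-R-3\eta)/h})$ and that $\|\Psi_\alpha^\pm\|^2-1$ is $\mathscr O(e^{-2d(L-R-3\eta)/h})$), so $\widetilde{\mathbf I}=\mathbf I+\mathscr O(\mathcal E_{\varepsilon,L})$. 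For the tail, split $\langle g_\alpha^\sigma,g_\beta^{\sigma'}\rangle$ into the overlap $\langle\Psi_\alpha^\sigma,\Psi_\beta^{\sigma'}\rangle$ --- which vanishes for all but finitely many $\beta$ and, for those, is bounded by the scalar-product estimates of Appendix~\ref{appendix:IC} applied to $\Phi^\pm$ --- and a term $\pm\langle g_\alpha^\sigma-\Psi_\alpha^\sigma,g_\beta^{\sigma'}-\Psi_\beta^{\sigma'}\rangle$, which one estimates by inserting the weight $e^{-h\langle\alpha-\beta\rangle}$ and using the weighted bounds~\eqref{eq.weighted2}, the resulting sum over $\beta$ being convergent. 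This gives $\widetilde{\mathbf R}=\mathscr O(\mathcal E_{\varepsilon,L})$ and hence item~(1).

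Items~(2) and~(3) are then formal. Since $\G-\mathbf I=\T+\mathscr O(\mathcal E_{\varepsilon,L})$ with $\|\T\|$ itself exponentially small, $\G$ is a small perturbation of the identity, so $\G^{-1}$ and $\G^{-1/2}$ are defined by convergent Taylor series and $\G^{-1/2}=\mathbf I-\frac12\T+\mathscr O(\mathcal E_{\varepsilon,L})$ (the remainder absorbing $\|\T\|^2$). Boundedness of $\G^{-1/2}$ on $\ell^1(V)\otimes\C^2$ makes the series $e_\alpha^\sigma=\sum_{\beta,\sigma'}(\G^{-1/2})_{\alpha\beta}^{\sigma\sigma'}g_\beta^{\sigma'}$ converge in $\mathscr E_h$; orthonormality is the identity $\langle e_\alpha^\sigma,e_{\alpha'}^{\sigma'}\rangle=(\G^{-1/2}\G\,\G^{-1/2})_{(\alpha\sigma)(\alpha'\sigma')}=\mathbf I_{(\alpha\sigma)(\alpha'\sigma')}$; and since $e_\alpha^\sigma$ differs from $g_\alpha^\sigma$ by a correction of small norm $\mathscr O(\|\T\|)$, Lemma~\ref{lem.gv2}(3) shows the $e_\alpha^\sigma$ span a dense subspace, so they form a Hilbert basis of $\mathscr E_h$.

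For item~(4), set $\widetilde{\mathbf M}_{\alpha\beta}^{\sigma\sigma'}=\langle\Lh^V g_\alpha^\sigma,g_\beta^{\sigma'}\rangle$, so that $\M=\G^{-1/2}\widetilde{\mathbf M}\,\G^{-1/2}$ via the unitary $x\mapsto\sum x_\alpha^\sigma e_\alpha^\sigma$. Using Lemma~\ref{lem.gv2} and~\eqref{eq.weighted2} to replace each $g$ by the corresponding $\Psi$ up to $\mathscr O(\mathcal E_{\varepsilon,L})$, and then the self-adjointness of $\Lh$ on $\Omega_V$ (valid because $\Psi_\alpha^\sigma$ and $\Psi_\beta^{\sigma'}$ satisfy Neumann conditions and have disjoint supports away from $\partial\Omega_V$), one gets, for $L\le|\alpha-\beta|\le L(1+\varepsilon)$,
\[
\widetilde{\mathbf M}_{\alpha\beta}^{\sigma\sigma'}=\tfrac{\mu_\sigma+\mu_{\sigma'}}{2}\langle g_\alpha^\sigma,g_\beta^{\sigma'}\rangle+\big\langle\big(\Lh-\tfrac{\mu_\sigma+\mu_{\sigma'}}{2}\big)\Psi_\alpha^\sigma,\Psi_\beta^{\sigma'}\big\rangle+\mathscr O(\mathcal E_{\varepsilon,L}),
\]
the symmetrized shift $\frac{\mu_\sigma+\mu_{\sigma'}}{2}$ being forced both by Hermiticity of $\mathbf W$ and by the cancellation below; the diagonal block is $\mathbf D+\mathscr O(\mathcal E_{\varepsilon,L})$ (here one uses $\langle g_\alpha^+,g_\alpha^-\rangle=\mathscr O(\mathcal E_{\varepsilon,L})$ again), and the tail is $\mathscr O(\mathcal E_{\varepsilon,L})$ via the interaction-coefficient bounds of Appendix~\ref{appendix:IC} and~\eqref{eq.weighted2}. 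Hence $\widetilde{\mathbf M}=\mathbf D+\mathbf W+\mathbf E+\mathscr O(\mathcal E_{\varepsilon,L})$ with $\mathbf E_{\alpha\beta}^{\sigma\sigma'}=\frac{\mu_\sigma+\mu_{\sigma'}}{2}\T_{\alpha\beta}^{\sigma\sigma'}$. Conjugating by $\G^{-1/2}=\mathbf I-\frac12\T+\mathscr O(\mathcal E_{\varepsilon,L})$ and noting $(\T\mathbf D)_{\alpha\beta}^{\sigma\sigma'}=\mu_{\sigma'}\T_{\alpha\beta}^{\sigma\sigma'}$ and $(\mathbf D\T)_{\alpha\beta}^{\sigma\sigma'}=\mu_\sigma\T_{\alpha\beta}^{\sigma\sigma'}$, one sees $\mathbf E-\frac12\T\mathbf D-\frac12\mathbf D\T=0$ \emph{exactly}; every other product ($\T\mathbf W$, $\mathbf W\T$, $\T\mathbf D\T$, $\mathbf D\cdot\mathscr O(\mathcal E_{\varepsilon,L})$, and so on) is $\mathscr O(\mathcal E_{\varepsilon,L})$ since $\mathbf D=\mathscr O(h)$ and $\|\T\|$ is exponentially small, and the symmetry of $\mathbf M-\mathbf D-\mathbf W$ allows the passage from the $\ell^1$ to the $\ell^2$ estimate by interpolation.

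The main obstacle is the control of the infinitely many off-diagonal entries with $|\alpha-\beta|>L(1+\varepsilon)$: there the only available input is the rather weak weighted decay of $g_\alpha^\sigma-\Psi_\alpha^\sigma$ at scale $\langle x-\alpha\rangle\gg h^{-1}$ from~\eqref{eq.weighted2}, and one has to check that after all the above reductions every residual term is dominated by $\mathcal E_{\varepsilon,L}$ --- which holds for $L>6R$ by Lemma~\ref{lem.errors}. The only genuinely new point compared with the $e_0\neq\frac12$ case is algebraic: the symmetrized energy $\frac{\mu_\sigma+\mu_{\sigma'}}{2}$ in $\mathbf W$, which is precisely what makes the conjugation cancellation exact even though $\mathbf D$ is not a scalar multiple of the identity.
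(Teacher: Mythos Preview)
Your proposal is correct and follows essentially the same approach as the paper's own proof: the same three-zone decomposition of $\G$, the same use of Lemma~\ref{lem.gv2}, \eqref{eq.weighted2} and Lemma~\ref{lem.scalarproduct} for the blocks, the same Taylor expansion for $\G^{-1/2}$, and the same intermediate matrix $\widetilde{\M}=\mathbf D+\mathbf W+\tfrac12(\mathbf D\T+\T\mathbf D)+\mathscr O(\mathcal E_{\varepsilon,L})$ (your $\mathbf E$ is precisely $\tfrac12(\mathbf D\T+\T\mathbf D)$), with the conjugation by $\G^{-1/2}$ cancelling this term exactly. One small misstatement: Lemma~\ref{lem.errors} is not needed for Theorem~\ref{thm.Gram2} itself --- it only shows $\mathcal E_{\varepsilon,L}\ll he^{-S_h(L)/h}$, which is used later in the applications, not that the residual terms are dominated by $\mathcal E_{\varepsilon,L}$ (that is what your estimates prove directly).
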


\begin{proof}
1. As before, we can use the $\ell^1$ operator norm. We decompose $\mathbf G$ as
\[ \mathbf G = \widetilde{\mathbf I} + {\mathbf T} + \widetilde{\mathbf R},\]
where
\begin{align*}
\widetilde{\mathbf I}_{\alpha \beta}^{\sigma \sigma'} = \langle g_\alpha^\sigma, g_\beta^{\sigma'} \rangle \mathbf{1}_{\alpha = \beta}, \quad \mathbf T_{\alpha \beta}^{\sigma \sigma'} = \langle g_\alpha^\sigma, g_\beta^{\sigma'} \rangle \mathbf{1}_{L \leq |\alpha - \beta| \leq L(1+\varepsilon)}, \quad \widetilde{\mathbf R}_{\alpha \beta}^{\sigma \sigma'} = \langle g_\alpha^\sigma, g_\beta^{\sigma'} \rangle \mathbf{1}_{|\alpha - \beta| > L(1+\varepsilon)}.
\end{align*}
As in \eqref{eq.bdR}, we bound $\widetilde{\mathbf R}$ and $\mathbf I - \widetilde{\mathbf I}$ using Lemma \ref{lem.gv2}, equation \eqref{eq.weighted2} and Lemma \ref{lem.scalarproduct}. Note that to estimate the off-diagonal terms of $\widetilde{\mathbf I}$ we need the estimate on $\langle \Psi_\alpha^+, \Psi_\alpha^- \rangle$ from Lemma \ref{lem.gv2}.

2. The proof of the second item is identical to Theorem \ref{thm.Gram}. Indeed, $\mathbf G$ is still a perturbation of the identity, and a Taylor expansion of the square root gives the result.

3. The third item is also unchanged. $e_\alpha^\sigma$ is a perturbation of $g_\alpha^\sigma$, and these vectors span a dense subspace of $\mathscr E_h$. The family $(e_\alpha^\sigma)_{\alpha \in V, \sigma \in \lbrace \pm \rbrace}$ is also orthonormal since
\begin{align*}
\langle e_\alpha^\sigma, e_{\alpha'}^{\sigma'} \rangle &= \sum_{\beta, \beta' \in V \tau, \tau' \in \lbrace \pm \rbrace } (\G^{- \frac 1 2})_{\alpha \beta}^{\sigma \tau}  \langle g_\beta^\tau, g_{\beta'}^{\tau'} \rangle (\G^{- \frac 1 2})_{\alpha' \beta'}^{\sigma' \tau'} = (\G^{- \frac 12} \G \G^{- \frac 12})_{\alpha \alpha'}^{\sigma \sigma'} = \mathbf{I}_{\alpha \alpha'}^{\sigma \sigma'},
\end{align*}
and therefore it is a Hilbert basis of $\mathscr E_h$.

4. $\Lh^V$ is unitarily equivalent to $\M$ through the Hilbert basis $(e_\alpha^\sigma)$. Before estimating $\M$, we introduce the operator $\widetilde{\M} = ( \langle \Lh^V g_\alpha^\sigma, g_\beta^{\sigma'} \rangle)_{\alpha \beta \in V, \sigma \sigma' \in \lbrace \pm \rbrace}$. We claim that
\begin{equation}\label{eq.Ltilde2}
\widetilde{\M} = \mathbf D + \mathbf W + \frac{1}{2}( \mathbf D \mathbf T + \mathbf T \mathbf D) + \mathscr{O} \big(\mathcal E_{\varepsilon, L} \big),
\end{equation}
in $\ell^1(V) \otimes \C^2$ operator norm topology, with $\mathbf D_{\alpha \beta}^{\sigma \sigma'} = \mu_\sigma \mathbf{I}_{\alpha \beta}^{ \sigma \sigma'}$. Indeed, on the diagonal $\alpha = \beta$ we have
\begin{align*}
 \langle \Lh^V g_\alpha^\sigma, g_\alpha'^{\sigma'} \rangle &= \mu_\sigma \langle g_\alpha^\sigma, g_\alpha^{\sigma'} \rangle + \mathscr{O}(e^{-\frac{2d(L-R-4\eta)}{h}}) \\
 &= \mu_\sigma \mathbf I_{\alpha \alpha}^{\sigma \sigma'} + \mathscr{O}(e^{-\frac{2d(L-R-4\eta)}{h}})
 \end{align*}
 because of Lemma \ref{lem.gv2}. For $L \leq |\alpha - \beta | \leq L(1+\varepsilon)$, we have
 \begin{align*}
 \langle \Lh^V g_\alpha^\sigma, g_\beta^{\sigma'} \rangle &= \langle \Lh \Psi_\alpha^\sigma, \Psi_\beta^{\sigma'} \rangle + \mathscr{O}(e^{-\frac{2d(L-R-4\eta)}{h}}) \\
 &=  \mathbf W_{\alpha \beta}^{\sigma \sigma'} + \frac{\mu_\sigma + \mu_{\sigma'}}{2} \langle \Psi_\alpha^\sigma, \Psi_\beta^{\sigma'} \rangle+ \mathscr{O}(e^{-\frac{2d(L-R-4\eta)}{h}})\\
 &= \mathbf W_{\alpha \beta}^{\sigma \sigma'} + \frac{\mu_\sigma + \mu_{\sigma'}}{2} \mathbf T_{\alpha \beta}^{\sigma \sigma'} +\mathscr{O}(e^{-\frac{2d(L-R-4\eta)}{h}}),
 \end{align*}
 where we used Lemma \ref{lem.gv2} again in the first line. Finally, the remaining terms $|\alpha - \beta |> L(1+\varepsilon)$ are bounded exactly as in Theorem \ref{thm.Gram}, using equation \eqref{eq.weighted2} and Lemma \ref{lem.gv}. This proves \eqref{eq.Ltilde2}. Now, by definition of $\M$ and $e_\alpha^\sigma$ we have
 \begin{align*}
\M_{\alpha \alpha'}^{\sigma \sigma'} = \sum_{\beta \beta' \in V, \tau, \tau' \in \lbrace \pm \rbrace} (\mathbf G^{- \frac 12})_{\alpha \beta}^{\sigma \tau}( \mathbf G^{- \frac 12})_{\alpha' \beta'}^{\sigma' \tau'} \widetilde{\M}_{\beta \beta'}^{\tau \tau'} = (\G^{- \frac 12} \widetilde{\M} \G^{- \frac 12} )_{\alpha \alpha'}^{\sigma \sigma'}.
\end{align*}
Moreover using item (2) and \eqref{eq.Ltilde2} we deduce
\[\M = \G^{- \frac 12} \widetilde{\M} \G^{- \frac 12} = \mathbf D + \mathbf W + \mathscr{O}(\mathcal E_{\varepsilon,L}) ,\]
in $\ell^1$ (and thus $\ell^2$) operator norm topology.
\end{proof}

\section{Proof of Theorems \ref{thm.two.obstacles} and \ref{thm.two.obstacles.2}}\label{sec:tunneling_2obstacles}

In this section we prove the spectral gap estimates in the case of two obstacles. The centers of the obstacles are $V= \lbrace \alpha_\ell, \alpha_r \rbrace$ with
\[ \alpha_\ell = \Big( - \frac L2,0 \Big), \qquad \alpha_r = \Big( \frac L2, 0 \Big).\]

\begin{proof}[Proof of Theorem~\ref{thm.two.obstacles}]
When $e_0 \neq \frac 12$, we use Theorem \ref{thm.Gram} with $\varepsilon \geq 2$. In this case, $\mathscr{E}_h$ is two-dimensional and the restriction of $\Lh^V$ to $\mathscr{E}_h$ is unitarily equivalent to the matrix
\[ \M = 
\begin{pmatrix}
\mu_1 & \mathbf W_{\alpha_\ell \alpha_r} \\
\mathbf W_{\alpha_r \alpha_\ell}& \mu_1 
\end{pmatrix}
+ \mathscr{O}\big( \mathcal{E}_{\varepsilon,L} \big).
 \]
The first two eigenvalues of $\Lh^V$ are thus of the form
\[ \lambda_1 = \mu_1 - \big|\mathbf W_{\alpha_\ell \alpha_r} \big| + \mathscr{O}\big( \mathcal{E}_{\varepsilon,L} \big), \qquad  \lambda_2 = \mu_1 + \big|\mathbf W_{\alpha_\ell \alpha_r} \big| + \mathscr{O}\big( \mathcal{E}_{\varepsilon,L} \big). \]
In Lemma \ref{lem.wh1} and \ref{lem.wh2} below the interaction coefficient $\mathbf W_{\alpha_\ell \alpha_r}$ is estimated and we find
\begin{align*}
 \lambda_2 - \lambda_1 &= 4 |{\rm{Re}}(w_L)| + \mathscr{O}\big( \mathcal{E}_{\varepsilon,L} \big) \\
 &= 4 |{\rm{Re}}(C_L)| K_L^{-e_0-1} h e^{-\frac{S_h(L)}{h}} \big( 1+o(1) \big)+ \mathscr{O}\big( \mathcal{E}_{\varepsilon,L} \big).
\end{align*}
The error terms $\mathcal{E}_{\varepsilon,L}$ are shown to be small enough in Lemma \ref{lem.errors}, using that $L>6R$.
\end{proof}

\begin{proof}[Proof of Theorem~\ref{thm.two.obstacles.2}]
When $e_0 = \frac 12$, the eigenspace $\mathscr{E}_h$ is four-dimensional and we use Theorem \ref{thm.Gram2} instead. The Hilbert space $\ell^2(V) \otimes \mathbf C^2$ is four-dimensional as well, and $\Lh^V$ is unitarily equivalent to the matrix
\[ \M = 
\begin{pmatrix}
\mu_-& \mathbf W^{--}& 0 & \mathbf{W}^{-+} \\
\mathbf W^{--} & \mu_- & \overline{\mathbf{W}}^{+-} & 0 \\
0 & \mathbf W^{+-} & \mu_+ & \mathbf{W}^{++} \\
\overline{\mathbf W}^{-+} & 0 & \mathbf{W}^{++} & \mu_+
\end{pmatrix}
 + \mathscr{O}\big( \mathcal{E}_{\varepsilon,L} \big),
 \]
 where we used the short-hand notation $\mathbf W^{\sigma \sigma'} = \mathbf{W}^{\sigma \sigma'}_{\alpha_\ell \alpha_r}$. We use the Lemmas \ref{lem.wh1} and 
\ref{lem.wh2} to estimate the interaction coefficients and find
\[ \M =  
\begin{pmatrix}
\mu_- & \lambda_h K_L^{-1} & 0 & -\lambda_h \\
\lambda_h K_L^{-1} & \mu_- & \lambda_h & 0 \\
0 &\lambda_h& \mu_+ &- \lambda_h K_L \\
-\lambda_h & 0 & -\lambda_h K_L & \mu_+
\end{pmatrix}
 + o \big( h e^{- \frac{S_h(L)}{h}} \big),
 \]
 with $\lambda_h =2 (-1)^{m_+} {\rm{Re}}(C_L) K_L^{-e_0}$.
In the special case when $\mu_- = \mu_+$ we find
\begin{align*}
\lambda_1 &= \mu_1 - |\lambda_h| \big( K_L + K_L^{-1} \big) + o \big( h e^{- \frac{S_h(L)}{h}} \big),\\
\lambda_2 &= \mu_1+ o \big( h e^{- \frac{S_h(L)}{h}} \big),\\
\lambda_3 &= \mu_1+o \big( h e^{- \frac{S_h(L)}{h}} \big) ,\\
\lambda_4 &= \mu_1 + |\lambda_h| \big( K_L + K_L^{-1} \big) +o \big( h e^{- \frac{S_h(L)}{h}} \big),
\end{align*}
and this concludes the proof of Theorem \ref{thm.two.obstacles}.
\end{proof}

\section{tunneling for a lattice of obstacles}\label{sec:tunneling_lattice}
In this section be calculate the tunneling in the case of infinitely many disjoint obstacles.
We focus here on the special case when the obstacles are distributed along a lattice, that is $V= L \Z^2$.
\begin{proof}[Proof of Theorem~\ref{thm.lattice}]
Let us start with the simpler situation, when $e_0 \neq \frac 12$. We use Theorem~\ref{thm.Gram}, and make the specific choice $\varepsilon = 0.4$. Indeed, we want $\varepsilon < \sqrt{2}-1$ to ensure that we do not capture the interaction between obstacles along diagonals. For instance, the lattice points $(0,0)$ and $(L,L)$ have distance $L\sqrt{2} > L(1+ \varepsilon)$ to each other: their interaction is not included in the infinite matrix $\mathbf W$. We also want $\varepsilon$ as large as possible in order to reduce the size of the error terms $\mathcal E_{\varepsilon,L}$. Note that we could also choose $\varepsilon$ larger and take into account the diagonal interactions, but this would make the interaction matrix $\mathbf W$ more complicated.

With the choice $\varepsilon = 0.4$, we obtain that the restriction of $\Lh^V$ to $\mathscr{E}_h$ is unitarily equivalent to an operator $\M$ on $\ell^2(L\Z^2)$ of the form
\[ \M = \mu_1 \mathbf I + \mathbf W + \mathscr{O}(\mathcal E_{\varepsilon,L}).\]
For $\alpha$, $\beta \in L \Z^2$, the coefficients $\mathbf W_{\alpha \beta}$ are all vanishing unless $|\alpha - \beta|=L$ (i.e. they are closest neighbours). In this case, by Lemmas \ref{lem.wh1} and \ref{lem.wh2} we have
\[ \mathbf W_{\alpha \beta} = (-1)^{m_-+1} e^{\frac{iB}{2h}\alpha \wedge \beta} 2C_L  K_L^{-e_0-1} h e^{- \frac{S_h(L)}{h}} \big( 1+ o(1) \big).\]
 With the notations of Theorem \ref{thm.lattice} this gives
\[ \mathbf W = \lambda \X + o \big( h e^{-\frac{S_h(L)}{h}} \big) ,\]
where 
\[ \X_{\alpha \beta} = \begin{cases}
e^{\frac{iB}{2h}\alpha \wedge \beta} &{\rm{if}} \, |\alpha - \beta |=L,\\
0 &{\rm{otherwise}}.
\end{cases}\]
The errors $\mathcal{E}_{\varepsilon,L}$ are shown to be small enough in Lemma \ref{lem.errors}, as soon as $L > 25 R$.

\medskip
We now consider $e_0 = \frac 12$. In this case we use Theorem \ref{thm.Gram2} (still with $\varepsilon = 0.4$) and we obtain an operator on $\ell^2(L\Z^2) \otimes \mathbf C^2$ with similar expansion,
\[ \M = \mathbf D + \mathbf W + \mathscr{O}(\mathcal E_{\varepsilon,L}).\]
The coefficients $\mathbf W_{\alpha \beta}^{\sigma \sigma'}$ are vanishing unless $|\alpha - \beta|=L$, in which case we have
\[ \mathbf W_{\alpha\beta}^{\sigma \sigma'} = (-1)^{m_{\sigma'}+1} e^{\frac{iB}{2h}\alpha \wedge \beta} 
e^{- i \frac{\sigma - \sigma'}{2} \arg(\beta-\alpha)}  2C_L K_L^{\frac{\sigma + \sigma' -1}{2}} h e^{-\frac{S_h(L)}{h}} \big( 1+o(1) \big),\]
by Lemmas \ref{lem.wh1} and \ref{lem.wh2}. With the notations of Theorem \ref{thm.lattice} we have $\mathbf W = \lambda \Y+ o \big( h e^{-\frac{S_h(L)}{h}} \big)$. Finally, the errors $\mathcal E_{\varepsilon,L}$ are small enough by Lemma \ref{lem.errors}. 
\end{proof}

\appendix

\section{Analysis of the phase} \label{sec.phase}

The phase function involved in several integrals in this Appendix is
\begin{equation}
\begin{aligned}
s_h^{\sigma \sigma'}(x_1,x_2)&= \frac{B}{4} \Big(x_1+\frac{\ell}{2} \Big)^2 + \frac{B}{4}\Big(x_1- \frac{\ell}{2}\Big)^2 + \frac{B}{2} x_2^2 - \frac{BR^2}{2} + i \frac{B\ell x_2}{2} \\ &\qquad - hm_{\sigma} \log \Big( \frac{\ell/2 +x_1+ix_2}{R} \Big) - hm_{\sigma'} \log \Big(\frac{ \ell/2 - x_1 +ix_2}{R} \Big),
\end{aligned}
\end{equation}
where $\sigma$, $\sigma' \in \lbrace \pm \rbrace$. It depends on the parameter $\ell > 2R$. Recalling that $m_\pm$ depends on $h$, we consider the main part of this function,
\begin{equation}
\begin{aligned}
s_0(x_1,x_2)&= \frac{B}{4} \Big(x_1+\frac{\ell}{2} \Big)^2 + \frac{B}{4}\Big(x_1- \frac{\ell}{2}\Big)^2 + \frac{B}{2} x_2^2 - \frac{BR^2}{2} + i \frac{B\ell x_2}{2} \\ &\qquad -\frac{BR^2}{2} \log \Big( \frac{\ell/2 +x_1+ix_2}{R} \Big) - \frac{BR^2}{2} \log \Big(\frac{ \ell/2 - x_1 +ix_2}{R} \Big) .
\end{aligned}
\end{equation}
We collect below the main properties of these functions, necessary to use the Laplace method.

\begin{lemma}\label{lem.phase}
The function $(x_1,x_2) \mapsto s_0(x_1,x_2)$  has the following properties.
\begin{enumerate}
\item $s_0$ is holomorphic on $D = \lbrace {\rm{Re}}(x_1) \in \big( - \frac \ell 2, \frac \ell 2 \big)$, ${\rm{Im}}(x_2) \leq 0 \rbrace \subset \mathbf C^2$. 
\item $s_0$ has a unique critical point $(x_1^*,x_2^*)$ in $D$, which is
\[x_1^* = 0,  \quad x_2^* = -i  \sqrt{\frac{\ell^2}{4}-R^2}.\]
\item The critical value is
\[ s_0(x_1^*,x_2^*) = \frac{B \ell}{4}\sqrt{\ell^2-4R^2} -  BR^2 \ln \Big( \frac{\ell+ \sqrt{\ell^2-4R^2}}{2R}  \Big).\]
\item The Hessian is such that ${\rm{Im}} \nabla^2 s_0(x^*) = 0$. Also, $\partial_1 \partial_2 s_0(x^*) = 0$ and
\[ \begin{cases}
 \partial_1^2 s_0(x^*) &= \frac{B\ell}{2R^2}\big( \ell - \sqrt{\ell^2-4R^2} \big), \\ 
 \partial_2^2 s_0(x^*) &= \frac{B}{2R^2}\sqrt{\ell^2-4R^2} \big( \ell - \sqrt{\ell^2-4R^2} \big).
 \end{cases} \]
\item The function $x_1 \in (-\frac \ell 2, \frac \ell 2) \mapsto s_0(x_1,x_2^*)$ is real valued, and has a unique global minimum at $0$.
%\item The function $(x_1,x_2) \in \big( - \frac \ell 2, \frac \ell 2 \big) \times \R \mapsto s_0(x_1,x_2 + x_2^*)$ of two real variables is uniformly strictly convex. 
\item For $x_2 \in \mathbf R$, we have the following expansion as $x_2 \to 0$,
\begin{align*}
 s_h^{\sigma \sigma'}(0,x_2+x_2^*) &= s_h^{\sigma \sigma'}(0,x_2^*) + \frac{B \sqrt{\ell^2-4R^2}}{4R^2} ( \ell - \sqrt{\ell^2-4R^2} ) x_2^2 \\ & \qquad - i \frac{x_2}{R} (\ell- \sqrt{\ell^2-4R^2}) \sqrt{h \Theta_0 B} + \mathscr{O}(x_2^3 + \sqrt h x_2^2 + h x_2),
 \end{align*}
 and
 \begin{align*} s_h^{\sigma \sigma'}(0,x_2^*) &=  \frac{B\ell}{4} \sqrt{\ell^2-4R^2} - h (m_\sigma + m_{\sigma'}) \ln \Big( \frac{ \ell + \sqrt{\ell^2-4R^2} }{2R} \Big).  \\
 &= S_h(\ell) - h ( 2 \mathcal C_2 + m_\sigma + m_{\sigma'} - 2 \xi_h) \ln \Big( \frac{ \ell + \sqrt{\ell^2-4R^2} }{2R} \Big),
 \end{align*}
where we recall
\begin{equation}
S_h(\ell) =  \frac{B\ell}{4}\sqrt{\ell^2 - 4R^2} - \big( BR^2 + 2\sqrt{h\Theta_0 BR^2} \big) \ln \Big( \frac{\ell + \sqrt{\ell^2 - 4R^2}}{2R}\Big) .
\end{equation}
\end{enumerate}
\end{lemma}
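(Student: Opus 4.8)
\emph{Proof strategy.} Since $s_0$ is, apart from the imaginary term $i\tfrac{B\ell x_2}{2}$, just a quadratic polynomial plus two logarithms, the plan is to dispatch items (1)--(5) by one streamlined differentiation/substitution and to concentrate the real effort on item (6), where the $h$-dependence of the momenta $m_\pm$ enters. It is useful to keep in mind that for real $(x_1,x_2)$ one has $\mathrm{Re}\,s_0 = d(|x-\alpha_\ell|)+d(|x-\alpha_r|)$ with $d$ the single-well weight of \eqref{eq:d_function}, so $s_0$ is the natural ``two-well'' phase; moreover $s_0$ is even in $x_1$, which already suggests that its critical point sits at $x_1=0$.

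First I would prove (1): on $D$ the two log-arguments $\tfrac\ell2\pm x_1+ix_2$ have real part $\tfrac\ell2\pm\mathrm{Re}(x_1)-\mathrm{Im}(x_2)>0$, so they stay off the cut of the principal logarithm and $s_0$ is holomorphic there. For (2)--(3), writing $a=\tfrac\ell2+ix_2$, I would compute
\[
\partial_{x_1}s_0 = Bx_1\,\frac{a^2-x_1^2+R^2}{a^2-x_1^2}, \qquad
\partial_{x_2}s_0 = Bx_2 + i\tfrac{B\ell}{2} - \frac{iBR^2a}{a^2-x_1^2},
\]
observe that the other root $x_1^2=a^2+R^2$ of the first factor would force $\partial_{x_2}s_0 = iB\ell\neq 0$, so every critical point has $x_1=0$; there $\partial_{x_2}s_0=0$ reads $(\tfrac\ell2+ix_2)(x_2+i\tfrac\ell2)=iR^2$, and since the left side equals $i(\tfrac{\ell^2}{4}+x_2^2)$ identically this reduces to $\tfrac{\ell^2}{4}+x_2^2=R^2$, whose only root with $\mathrm{Im}(x_2)\le 0$ is $x_2^*=-i\sqrt{\ell^2/4-R^2}$. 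Substituting $(0,x_2^*)$ and using the two identities $a^2+R^2=\ell a$ and $(\tfrac\ell2+\rho)(\ell-2\rho)=2R^2$ (with $\rho:=\sqrt{\ell^2/4-R^2}$ and $a=\tfrac\ell2+\rho$ real) the quadratic part collapses and I obtain the stated critical value. Differentiating once more and evaluating at $(0,x_2^*)$ yields $\partial_1\partial_2 s_0=0$, $\partial_1^2 s_0 = B\ell/a$ and $\partial_2^2 s_0 = B(a^2-R^2)/a^2$; all three are real since $a$ is real, so $\mathrm{Im}\,\nabla^2 s_0(x^*)=0$, and the same two identities turn them into the formulas of (4). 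For (5), along $x_2=x_2^*$ the log-arguments are the positive reals $\tfrac\ell2+\rho\pm x_1$ on $(-\tfrac\ell2,\tfrac\ell2)$ and the nominally imaginary terms are real, so $s_0(\cdot,x_2^*)$ is real with derivative $Bx_1\,\tfrac{b-x_1^2+R^2}{b-x_1^2}$, $b=(\tfrac\ell2+\rho)^2>x_1^2$, which has the sign of $x_1$; hence the unique global minimum at $x_1=0$.

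The hard part is (6). Setting $x_1=0$ and shifting $x_2\mapsto x_2+x_2^*$, both log-arguments collapse to $(a+ix_2)/R$, and Taylor-expanding $\ln(1+ix_2/a)$ reduces the task to
\[
s_h^{\sigma\sigma'}(0,x_2+x_2^*) = s_h^{\sigma\sigma'}(0,x_2^*) + \tfrac{B}{2}x_2^2 + i\tfrac{B}{2}(\ell-2\rho)x_2 - h(m_\sigma+m_{\sigma'})\Big(\tfrac{ix_2}{a}+\tfrac{x_2^2}{2a^2}\Big) + \mathcal O(x_2^3).
\]
Here I would insert $hm_\pm = \tfrac{BR^2}{2}+\sqrt{h\Theta_0 BR^2}+\mathcal O(h)$ from Theorem~\ref{thm.single}, so $h(m_\sigma+m_{\sigma'}) = BR^2+2\sqrt{h\Theta_0 BR^2}+\mathcal O(h)$. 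The decisive cancellation -- and the point I expect to be most delicate -- is that the $\mathcal O(1)$ part $i\tfrac{B}{2}(\ell-2\rho)x_2$ of the linear term is exactly killed by $-iBR^2x_2/a$, since $BR^2/a=\tfrac{B}{2}(\ell-2\rho)$; only the $\sqrt h$ contribution survives, namely $-\tfrac{2i\sqrt{h\Theta_0 BR^2}}{a}x_2 = -i\tfrac{x_2}{R}(\ell-\sqrt{\ell^2-4R^2})\sqrt{h\Theta_0 B}$, while the $x_2^2$ coefficient becomes $\tfrac{B(a^2-R^2)}{2a^2}+\mathcal O(\sqrt h) = \tfrac{B\sqrt{\ell^2-4R^2}}{4R^2}(\ell-\sqrt{\ell^2-4R^2})+\mathcal O(\sqrt h)$, and the remaining terms are $\mathcal O(x_2^3+\sqrt h\,x_2^2+h\,x_2)$. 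Finally the two displayed forms of $s_h^{\sigma\sigma'}(0,x_2^*)$ follow from item (3) together with the bookkeeping $h(m_\sigma+m_{\sigma'}) = 2h\xi_h + h(m_\sigma+m_{\sigma'}-2\xi_h)$ and $2h\xi_h = BR^2+2\sqrt{h\Theta_0 BR^2}+2h\mathcal C_2$, compared with the definition \eqref{eq:Sh} of $S_h$. Throughout (6) the only genuine care needed is to keep the three scales $1$, $\sqrt h$, $h$ carried by $m_\pm$ separated and to check the $\mathcal O(1)$ linear-term cancellation, so that the error is truly of the claimed order.
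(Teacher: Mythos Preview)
Your proof is correct; every computation checks out, including the key cancellation in item (6) where the $\mathcal O(1)$ linear contribution $i\tfrac{B}{2}(\ell-2\rho)x_2$ is killed by the $BR^2$-part of $h(m_\sigma+m_{\sigma'})$ via the identity $BR^2/a=\tfrac{B}{2}(\ell-2\rho)$. The paper itself states Lemma~\ref{lem.phase} without proof, treating it as a list of elementary computational facts about the phase; your write-up therefore supplies more detail than the paper does, and follows exactly the direct-computation route the authors implicitly had in mind.
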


\section{Estimates on the interaction coefficients}\label{appendix:IC}

We collect in this appendix estimates on the scalar products
\[ \langle \Psi_\alpha, \Psi_\beta \rangle, \qquad \langle \Psi_\alpha^\sigma, \Psi_\beta^{\sigma'} \rangle, \]
and on the interaction coefficients
\[ \mathbf W_{\alpha \beta} = \langle \big( \Lh - \mu_1 \big) \Psi_\alpha, \Psi_\beta \rangle, \qquad \mathbf W_{\alpha \beta}^{\sigma \sigma'} = \big\langle \big( \Lh - \frac{\mu_\sigma + \mu_{\sigma'}}{2} \big) \Psi_\alpha^\sigma, \Psi_\beta^{\sigma'} \big\rangle, \]
where $\alpha$, $\beta \in V$ and $\sigma$, $\sigma' \in \lbrace \pm \rbrace$. These estimates follow from the Laplace method, and the involved phase $s_h^{\sigma \sigma'}$ was described in Appendix \ref{sec.phase}. We start by estimating the scalar products in Lemma \ref{lem.scalarproduct}. We then explain how to rewrite the interaction coefficients by an interaction by part argument in Lemma \ref{lem.wh1}. We finally estimate these coefficients in Lemma \ref{lem.wh2}.

\begin{lemma}\label{lem.scalarproduct}
Let $\alpha$, $\beta \in V$ such that $|\alpha - \beta| = \ell \geq L$. Then our quasimodes $\Psi_\alpha$ and $\Psi_\alpha^\pm$ are such that:
\begin{enumerate}
\item When $e_0 \neq \frac 12$ we have
\[ \langle \Psi _\alpha, \Psi_{\beta} \rangle = \mathscr{O} \big(h^{\frac 12} e^{-\frac{S_h(\ell)}{h}} + e^{-\frac{d(L-R-3\eta) + d(\ell - L + R + 2 \eta)}{h}} \big),\]
as $h \to 0$ along a $e_0$-sequence, and uniformly with respect to $\alpha$ and $\beta$.
\item When $e_0 = \frac 12$, for $\sigma$, $\sigma' \in \lbrace \pm \rbrace$ we have
\[ \langle \Psi _\alpha^\sigma, \Psi_{\beta}^{\sigma'} \rangle = \mathscr{O} \big(h^{\frac 12} e^{-\frac{S_h(\ell)}{h}} + e^{-\frac{d(L-R-3\eta) + d(\ell - L + R + 2 \eta)}{h}} \big),\]
as $h \to 0$ along a $\frac 12$-sequence, and uniformly with respect to $\alpha$ and $\beta$.
\end{enumerate}
\end{lemma}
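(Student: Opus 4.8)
The plan is to evaluate both families of scalar products by the Laplace (complex saddle‑point) method, using the explicit formula \eqref{eq.Phipm2} for the one‑well ground states and the analysis of the phase $s_h^{\sigma\sigma'}$ carried out in Appendix~\ref{sec.phase}. The two cases are entirely parallel, so I will describe the argument for case~(1) (take $\sigma=\sigma'=-$ and write $m=m_-$), indicating the trivial changes for case~(2) at the end. The first step is a reduction to a canonical configuration. Writing $\tau_\alpha^B=e^{-\frac{iB}{2h}p\wedge\alpha}\tau_p^B\tau_{\alpha-p}^B$ with $p=\frac{\alpha+\beta}{2}$ and using that $\tau_p^B$ is unitary, one gets $|\langle\Psi_\alpha,\Psi_\beta\rangle|=|\langle\tau_{\gamma/2}^B(\chi_\eta\Phi^-),\tau_{-\gamma/2}^B(\chi_\eta\Phi^-)\rangle|$ with $\gamma=\alpha-\beta$, $|\gamma|=\ell$. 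Since $\Lh$ in symmetric gauge commutes with rotations and since $D(0,R)$, $\chi_\eta$ are rotation‑invariant (the angular‑momentum phases $e^{\pm im\phi}$ cancel in the scalar product), one may conjugate by a rotation to assume $\gamma=(\ell,0)$, i.e. the obstacle centers are $\alpha_r=(\tfrac\ell2,0)$ and $\alpha_\ell=(-\tfrac\ell2,0)$. Inserting \eqref{eq.Phipm2}, writing $r_\pm^2=(x_1\pm\tfrac\ell2)^2+x_2^2$, and regarding $u_h$ as a holomorphic function of $r^2$ on $\{\mathrm{Re}(r^2)>0\}$ as in Lemma~\ref{lem.decay}, a short computation puts the integral in the form
\[\langle\Psi_{\alpha_r},\Psi_{\alpha_\ell}\rangle=\Big(\tfrac{BR^2}{h}\Big)^{\Theta_0/2}\int_{\R^2}\chi_\eta(x-\alpha_r)\,\chi_\eta(x-\alpha_\ell)\,u_h(r_-)u_h(r_+)\,e^{-\frac1h s_h^{--}(x)}\,\dd x ,\]
up to a harmless unimodular factor; here $s_h^{--}$ is precisely the phase of Appendix~\ref{sec.phase} (a cosmetic sign flip $x_2\mapsto-x_2$ and the powers of $R$ being absorbed into the logarithms of $s_h^{--}$). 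If $\ell>2(L-R-\eta)$ the two cutoffs have disjoint supports and the integral vanishes, so I assume from now on $L\le\ell\le2(L-R-\eta)$.

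The second step is to localize and deform. Split the integral into the part over the lens $\Lambda=\{|x-\alpha_r|<L-R-2\eta\}\cap\{|x-\alpha_\ell|<L-R-2\eta\}$, on which $\chi_\eta\equiv1$ (and which contains the midpoint $0$ since $\tfrac\ell2<L-R-2\eta$), and the rest. On $\Lambda$ the integrand is holomorphic on a complex neighborhood lying in the domain $D$ of Lemma~\ref{lem.phase}: one checks that along the deformation $\mathrm{Re}\big((x_1\pm\tfrac\ell2)^2+x_2^2\big)>0$, so $u_h(r_\pm)$ stays in its analyticity domain, and the logarithmic arguments stay in the right half‑plane. Deform the $x_2$‑contour down to pass through the critical point $x^*=(0,-i\sqrt{\ell^2/4-R^2})$ of Lemma~\ref{lem.phase} and apply the stationary‑phase estimate using the non‑degenerate Hessian of Lemma~\ref{lem.phase}(4) and the expansion of Lemma~\ref{lem.phase}(6): the critical value is $s_h^{--}(0,x_2^*)=S_h(\ell)-h(2\mathcal C_2+2m-2\xi_h)\ln(\cdots)=S_h(\ell)+h\,\mathscr O(1)$ because $m-\xi_h=\mathscr O(1)$, while the amplitude $\big(\tfrac{BR^2}{h}\big)^{\Theta_0/2}u_h(r_-^*)u_h(r_+^*)$ is $\mathscr O(h^{-\Theta_0/2})$ by the far‑field asymptotics \eqref{eq.uhr} (at $x^*$ the radii $r_\pm^*$ are bounded away from $R$). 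The two‑dimensional Gaussian integration then contributes $\mathscr O(h)$, so the contribution of $\Lambda$ is $\mathscr O(h^{1-\Theta_0/2}e^{-S_h(\ell)/h})=\mathscr O(h^{1/2}e^{-S_h(\ell)/h})$ since $\Theta_0\in(0,1)$.

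The third step handles the remaining contributions by the crude pointwise decay $|\Phi^-(r,\theta)|\le C_\eta e^{-(1-\eta)d(r)/h}$ of Lemma~\ref{lem.decay}. On the transition region of the first cutoff one has $|x-\alpha_r|>L-R-2\eta$, hence by the triangle inequality $|x-\alpha_\ell|>\ell-L+R+\eta$; as $d$ is increasing, the integrand there is $\mathscr O\big(e^{-(1-\eta)(d(L-R-2\eta)+d(\ell-L+R+\eta))/h}\big)$, and absorbing the polynomial prefactor (using that $d$ is strictly increasing and $\ell<2L-2R$) this is $\mathscr O\big(e^{-(d(L-R-3\eta)+d(\ell-L+R+2\eta))/h}\big)$. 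The transition region of the second cutoff is symmetric, and the piece arising from closing the deformed contour near $\partial\Lambda$ (where $r_\pm\to L-R-2\eta$, and $2d(L-R-2\eta)$ dominates the stated error because $R+2\eta<L-R-2\eta$) is handled the same way. Summing the three contributions gives the claimed bound, uniformly in $\alpha,\beta$ since after the reduction everything depends only on $\ell$. For case~(2) nothing changes except that one works with $\Phi^\sigma$, $\Phi^{\sigma'}$ and the phase $s_h^{\sigma\sigma'}$, the estimates $m_\sigma-\xi_h,\ m_{\sigma'}-\xi_h=\mathscr O(1)$ ensuring again a critical value $S_h(\ell)+h\,\mathscr O(1)$.

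The main obstacle is precisely the contour deformation: the naive real‑variable bound coming from the pointwise decay of $\Phi^-$ only yields the rate $e^{-2d(\ell/2)/h}$, which is too weak — the magnetic gain down to $e^{-S_h(\ell)/h}$ comes entirely from the oscillatory factor $e^{iB\ell x_2/(2h)}$ in the phase and can only be extracted by exploiting the holomorphy of the integrand on the lens $\Lambda$ where the cutoffs are trivial and pushing the $x_2$‑contour to the complex critical point. This forces the compatibility checks between the reach of the deformation, the analyticity domain of $u_h$ (Lemma~\ref{lem.decay}), the branch‑cut‑free domain $D$ of Lemma~\ref{lem.phase}, and the region $\{\chi_\eta\equiv1\}$; the condition $\ell\le 2(L-R-\eta)$ is exactly what makes all of these compatible, and the residual gap between the $\eta$'s in the error term is absorbed using the strict monotonicity of $d$. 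The remaining bookkeeping (that these errors, and the $h^{1/2}e^{-S_h/h}$ main term, combine into the $\mathcal E_{\varepsilon,L}$ estimates used in Theorems~\ref{thm.Gram} and \ref{thm.Gram2}) is deferred to Appendix~\ref{sec.errors}.
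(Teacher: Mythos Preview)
Your overall strategy matches the paper's: reduce by magnetic translations and rotations to the horizontal configuration, insert the explicit form \eqref{eq.Phipm2}, recognize the phase $s_h^{\sigma\sigma'}$, handle the cutoffs by the crude Agmon bound, and then deform the $x_2$-contour to the complex saddle. However, there is a genuine error in your amplitude estimate at the saddle point.

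You claim that ``at $x^*$ the radii $r_\pm^*$ are bounded away from $R$'' and invoke the far-field asymptotics \eqref{eq.uhr}. This is false: at $x^*=(0,-i\sqrt{\ell^2/4-R^2})$ one computes
\[
(r_\pm^*)^2=\Big(\pm\tfrac\ell2\Big)^2+\Big(-i\sqrt{\tfrac{\ell^2}{4}-R^2}\Big)^2=\tfrac{\ell^2}{4}-\Big(\tfrac{\ell^2}{4}-R^2\Big)=R^2,
\]
so $r_\pm^*=R$ \emph{exactly}. The far-field expansion \eqref{eq.uhr}, which carries a factor $(r^2-R^2)^{-\delta_0}$, is singular there and cannot be used. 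The correct input is the near-$R$ asymptotics \eqref{eq.uhR}: since $M_\pm(x)=R+\mathscr O(|x|)$ near the (shifted) origin, one has $u_h(M_\pm)\sim c\,(h/BR^2)^{(\Theta_0-1)/4}$, hence the full amplitude $(BR^2/h)^{\Theta_0/2}u_h(M_-)u_h(M_+)$ is of order $h^{-1/2}$, not $h^{-\Theta_0/2}$. Combined with the $\mathscr O(h)$ from the two-dimensional Gaussian, this gives precisely the $\mathscr O(h^{1/2}e^{-S_h(\ell)/h})$ of the statement. The paper carries this out exactly so, expanding $M_\pm(x)=R\pm\frac{\ell x_1}{2R}+\frac{x_2^*x_2}{R}+\mathscr O(x^2)$ and inserting \eqref{eq.uhR}.

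A secondary point: your lens $\Lambda$ depends on both $x_1$ and $x_2$, which makes the $x_2$-deformation and the ``closing the contour near $\partial\Lambda$'' step awkward to justify cleanly. The paper avoids this by first removing both $\chi_\eta$'s entirely (via the real-part bound $\mathrm{Re}\,s_h^{\sigma\sigma'}\ge d(|z+\tfrac\ell2|)+d(|z-\tfrac\ell2|)$) and then inserting a cutoff $\tilde\chi(x_1)$ in $x_1$ alone, supported on $|x_1|<\tfrac\ell2-R-\eta$. The integrand is then holomorphic in $x_2$ on a half-plane and the shift $x_2\mapsto x_2+x_2^*$ is a clean translation with no boundary contribution.
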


\begin{proof}
Note that when $e_0 \neq \frac 12$, $\Psi_\alpha$ is equal to $\Psi_\alpha^{-}$. Therefore, we can focus on the second case. We first use the definition \eqref{def.Psialpha2} of $\Psi_\alpha^\sigma$ and properties of the magnetic translations, in order to center the integral at $0$. Indeed,
\begin{align*}
\langle \Psi_\alpha^\sigma, \Psi_\beta^{\sigma'} \rangle &= \langle \tau_\alpha^B \chi_\eta \Phi^\sigma, \tau_\beta^B \chi_\eta \Phi^{\sigma'}\rangle = \langle \tau^B_{-\frac{\alpha + \beta}{2}} \tau^B_\alpha \chi_\eta  \Phi^\sigma,  \tau^B_{-\frac{\alpha + \beta}{2}} \tau^B_\beta \chi_\eta \Phi^{\sigma'} \rangle \\
&=e^{\frac{iB}{2h} \alpha \wedge \beta} \langle \tau_{\frac{\alpha - \beta}{2}}^B \chi_\eta \Phi^\sigma, \tau^B_{\frac{\beta - \alpha}{2}} \chi_\eta \Phi^{\sigma'} \rangle.
\end{align*}
Now we use a rotation $R_{-\theta}$ such that $R_{-\theta} \big( \frac{\alpha - \beta}{2} \big) = (- \frac \ell 2, 0)$. The commutation with magnetic translations gives
\[ R_\theta \cdot \tau_{\alpha}^B = \tau_{(R_{-\theta} \alpha)}^B \cdot R_\theta,\]
and therefore 
\begin{align*}
\langle \Psi_\alpha^\sigma, \Psi_\beta^{\sigma'} \rangle = e^{\frac{iB}{2h} \alpha \wedge \beta} \langle \tau^B_{(- \frac{\ell}{2},0)} R_\theta \chi_\eta \Phi^\sigma, \tau^B_{( \frac{\ell}{2},0)} R_\theta \chi_\eta \Phi^{\sigma'}\rangle.
\end{align*}
Then, using the angular dependence of $\Phi^\sigma$ we have $R_\theta \Phi^\sigma(x) =\Phi^\sigma( R_\theta x) = e^{im_\sigma \theta} \Phi^\sigma(x)$ and 
\begin{equation}
\langle \Psi_\alpha^\sigma, \Psi_\beta^{\sigma'} \rangle = e^{\frac{iB}{2h} \alpha \wedge \beta + i (m_\sigma - m_{\sigma'})\theta} \langle \tau^B_{(-\frac{\ell}{2},0)} \chi_\eta \Phi^\sigma,  \tau^B_{(\frac{\ell}{2},0)} \chi_\eta \Phi^{\sigma'} \rangle.
\end{equation}
We are reduced to the case of two disks on the horizontal axis, with relative distance $\ell$. For simplicity we write $\epsilon = e^{i(m_\sigma - m_{\sigma'})\theta}$, which is either equal to $1$ or $e^{\pm i\theta}$. Now we can use the explicit formula for $\Phi^\sigma$ from Lemma \ref{lem.decay},
\begin{equation}
\Phi^\sigma(x) = \Big( \frac{BR^2}{h} \Big)^{\frac{\Theta_0}{4}} \Big( \frac z R \Big)^{m_\sigma} e^{- \frac{B}{4h}(|z|^2 - R^2)} u_h(|z|), 
\end{equation}
where we identify $x=(x_1,x_2)$ and $z=x_1+ix_2$. Replacing we find
\begin{multline*}
 \langle \Psi_\alpha^\sigma, \Psi_\beta^{\sigma'} \rangle = e^{\frac{iB}{2h} \alpha \wedge \beta} e^{i \pi m_{\sigma'}}  \epsilon \Big( \frac{BR^2}{h} \Big)^{\frac{\Theta_0}{2}} \int_{\R^2} e^{- \frac{iB\ell x_2}{2h}} \chi_\eta \Big(z+ \frac{\ell}{2} \Big) \chi_\eta \Big(z-\frac{\ell}{2} \Big) \\ 
 \Big(\frac{\ell/2 +z}{R}\Big)^{m_\sigma} \overline{\Big(\frac{ \ell/2 - z}{R} \Big)^{m_{\sigma'}}} e^{- \frac{B}{4h} ( |z+\frac{\ell}{2}|^2 + |z- \frac{\ell}{2}|^2-2R^2)} u_h\Big(\big|z+\frac{\ell}{2}\big|\Big)u_h \Big(\big|z-\frac{\ell}{2}\big|\Big) \dd z. 
\end{multline*}
For simplicity we will not specify the arguments of $\chi_\eta$ in the notation below. Also, we can write the power of any complex number as exponential of a logarithm. The formula holds for any choice of complex logarithm, but on the support of $\chi_\eta$, the standard logarithm is continuous - and thus, we make this choice. We obtain
\begin{align*}
| \langle \Psi_\alpha^{\sigma}, \Psi_\beta^{\sigma'} \rangle | =\Big( \frac{BR^2}{h} \Big)^{\frac{\Theta_0}{2}} \Big| \int_{\R^2} e^{- \frac{s_h^{\sigma \sigma'}(z)}{h}} \chi_\eta u_h \Big(\big|z+\frac{\ell}{2}\big|\Big) \chi_\eta u_h \Big(\big|z-\frac{\ell}{2}\big|\Big) \dd z \Big|,
\end{align*}
with phase
\begin{align*}
s_h^{\sigma \sigma'}(x_1,x_2)&= \frac{B}{4} \big|z+\frac{\ell}{2} \big|^2 + \frac{B}{4}\big|z- \frac{\ell}{2}\big|^2 - \frac{BR^2}{2} + i \frac{B\ell x_2}{2} \\ &\qquad - hm_\sigma \log \Big( \frac{\ell/2 +z}{R} \Big) - hm_{\sigma'} \log \Big(\frac{ \ell/2 - \bar z}{R} \Big) .
\end{align*}
Note that ${\rm{Re}}(s_h^{\sigma \sigma'}(z)) \geq d \Big(\big|z+\frac{\ell}{2} \big| \Big) + d \Big(\big|z-\frac{\ell}{2} \big| \Big)$. We want to remove the cutoff function $\chi_\eta$. To do so, we use that on the support of $1- \chi_\eta$ we have \[{\rm{Re}}(s_h^{\sigma \sigma'}(z))  \geq d(L-R-2\eta) + d(\ell - L + R + 2 \eta).\] For the same reason, we can add a cutoff $\tilde{\chi}(x_1)$ supported on $|x_1|< \frac{\ell}{2} - R - \eta$ and we get
\begin{align*}
| \langle \Psi_\alpha^\sigma, \Psi_\beta^{\sigma'} \rangle | &= \Big( \frac{BR^2}{h} \Big)^{\frac{\Theta_0}{2}} \Big| \int_{\R^2} \tilde{\chi}(x_1) e^{- \frac{s_h^{\sigma \sigma'}(z)}{h}}  u_h \Big(\big|z+\frac{\ell}{2}\big|\Big)  u_h \Big(\big|z-\frac{\ell}{2}\big|\Big) \dd z \Big| \\ &\qquad + \mathscr{O}(e^{-\frac{d(L-R-3\eta)  + d(\ell - L + R + 2 \eta)}{h}}).
\end{align*}
This last cutoff is added to make sure the complex logarithms remain continuous. We can estimate this integral using the complex Laplace method. The phase $s_h^{\sigma \sigma'}(x_1,x_2)$ is analyzed in Lemma \ref{lem.phase}. It has a unique critical point of the form $(0,x_2^*)$ where $x_2^*$ is purely imaginary. The integral being analytic as function of $x_2$, we can make the complex shift $x_2 \mapsto x_2 + x_2^*$,
\begin{align*}
| \langle \Psi_\alpha^\sigma, \Psi_\beta^{\sigma'} \rangle | &=\Big( \frac{BR^2}{h} \Big)^{\frac{\Theta_0}{2}} \Big| \int_{\R^2} \tilde{\chi}(x_1) e^{- \frac{s_h^{\sigma \sigma'}(x_1,x_2+x_2^*)}{h}}  u_h \Big(M_{-}(x) \Big)  u_h \Big(M_{+}(x)\Big) \dd x \Big|\\ &\qquad + \mathscr{O}(e^{-\frac{d(L-R-3\eta)+ d(\ell - L + R + 2 \eta)}{h}}),
\end{align*}
where $M_{\pm}(x) =\sqrt{(x_1 \pm \frac \ell 2 )^2 + (x_2 + x_2^*)^2}$. By Lemma \ref{lem.phase} we have
\[ s_h^{\sigma \sigma'}(x+x^*) = S_h(\ell) + \frac{1}{2} \nabla^2 s_0(x^*) (x,x) + iC \sqrt{h} x_2 + \mathscr{O}(h|x| + |x|^3), \] 
and using the Laplace method,
 \begin{align*}
| \langle \Psi_\alpha^\sigma, \Psi_\beta^{\sigma'} \rangle | =  \mathscr{O} \Big( h^{\frac 12} e^{-\frac{S_h(\ell)}{h}} \Big) + \mathscr{O}\Big(e^{-\frac{d(L-R-3\eta)+ d(\ell - L + R + 2 \eta)}{h}}\Big),
\end{align*}
where we used that $M_{\pm}(x) = R \pm \frac{\ell x_1}{2R} + \frac{x_2^* x_2}{R} + \mathscr{O}(x^2)$ and \eqref{eq.uhR}.
\end{proof}

\begin{lemma}\label{lem.wh1}
Let $\alpha$, $\beta \in V$ be such that $L \leq |\alpha - \beta| = \ell \leq L(1+\varepsilon)$. Then:
\begin{enumerate}
\item When $e_0 \neq \frac 12$ we have
\[ \mathbf W_{\alpha \beta} = (-1)^{m_-} e^{\frac{iB}{2h} \alpha \wedge \beta} 2w_\ell^{--} +  \mathscr{O}(\mathcal E')\]
where
\[ w_\ell^{--} = h^2 \int_{\R} e^{-\frac{iB\ell x_2}{2h}} \Phi^- \Big(\frac \ell 2, x_2 \Big) \partial_1 \Phi^- \Big( \frac{\ell}{2}, x_2\Big)  \dd x_2, \]
and $\mathcal E' = h^{\frac 52} e^{- \frac{S_h(\ell)}{h}} + e^{-\frac{d(L-R-3\eta) + d(\ell - L + R + 2 \eta)}{h}} + e^{-\frac{2d(L-R-3\eta)}{h}}$.
\item When $e_0 = \frac 12$ we have
\[\mathbf W_{\alpha \beta}^{\sigma \sigma'} =(-1)^{m_{\sigma'}} e^{\frac{iB}{2h} \alpha \wedge \beta} e^{i (m_\sigma - m_{\sigma'}) \mathrm{arg}(\beta - \alpha)}  \big( w^{\sigma \sigma'}_\ell + w^{\sigma' \sigma}_\ell \big) + \mathscr{O}(\mathcal E'),\]
where
\[w^{\sigma \sigma'}_\ell =  h^2 \int_{\R} e^{-\frac{iB\ell x_2}{2h}} \Phi^\sigma \Big(\frac \ell 2, x_2 \Big) \partial_1 \Phi^{\sigma'} \Big( \frac{\ell}{2}, x_2\Big)  \dd x_2. \]
\end{enumerate}
\end{lemma}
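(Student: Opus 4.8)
The plan is to run the Helffer--Sjöstrand interaction scheme, as in \cite{HS,FMR,M24}: reduce to two aligned obstacles, express the interaction coefficient as a flux of a magnetic bilinear (Wronskian-type) current across the line separating the obstacles, and then read off the stated formula from the explicit description of $\Phi^\pm$ in Lemma~\ref{lem.decay}.

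\emph{Step 1 (reduction to two aligned obstacles).} First I would repeat the algebraic manipulations from the beginning of the proof of Lemma~\ref{lem.scalarproduct}: using the group law and the adjoint formula for the magnetic translations $\tau_\gamma^B$, the rotation covariance $R_\theta\,\tau_\gamma^B=\tau_{R_{-\theta}\gamma}^B\,R_\theta$ with $\theta=\arg(\beta-\alpha)$, the angular behaviour $R_\theta\Phi^\sigma=e^{im_\sigma\theta}\Phi^\sigma$, and the fact that $\Lh$ commutes with these symmetries, one rewrites
\[ \mathbf W_{\alpha\beta}^{\sigma\sigma'} = e^{\frac{iB}{2h}\alpha\wedge\beta}\,e^{i(m_\sigma-m_{\sigma'})\arg(\beta-\alpha)}\,\Big\langle\big(\Lh-\tfrac{\mu_\sigma+\mu_{\sigma'}}{2}\big)\,\tau^B_{(-\ell/2,0)}(\chi_\eta\Phi^\sigma),\ \tau^B_{(\ell/2,0)}(\chi_\eta\Phi^{\sigma'})\Big\rangle, \]
with $\ell=|\alpha-\beta|$, reducing everything to two obstacles centred at $(\mp\ell/2,0)$. (When $e_0\neq\frac12$ one has $\sigma=\sigma'=-$, $\mu_\sigma=\mu_{\sigma'}=\mu_1$, and the angular factor is $1$.)

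\emph{Step 2 (interaction by parts).} Write $v_\ell=\tau^B_{(-\ell/2,0)}\Phi^\sigma$ and $v_r=\tau^B_{(\ell/2,0)}\Phi^{\sigma'}$ --- exact eigenfunctions of $\Lh$ with energies $\mu_\sigma,\mu_{\sigma'}$ on the complement of the left, resp.\ right, disk --- and $\chi_\ell,\chi_r$ the corresponding translated cut-offs, so that the two factors above are $\chi_\ell v_\ell$ and $\chi_r v_r$. Since $(\Lh-\mu_\sigma)v_\ell=0$ on $\operatorname{supp}\chi_\ell$,
\[ \big(\Lh-\tfrac{\mu_\sigma+\mu_{\sigma'}}{2}\big)(\chi_\ell v_\ell) = \tfrac{\mu_\sigma-\mu_{\sigma'}}{2}\,\chi_\ell v_\ell + [\Lh,\chi_\ell]v_\ell. \]
The first term produces $\tfrac{\mu_\sigma-\mu_{\sigma'}}{2}\langle\chi_\ell v_\ell,\chi_r v_r\rangle = \mathscr O(h^2)\cdot\mathscr O(h^{1/2}e^{-S_h(\ell)/h}+\cdots) = \mathscr O(\mathcal E')$ by Theorem~\ref{thm.single} and Lemma~\ref{lem.scalarproduct} --- this is exactly why the symmetric shift $\tfrac{\mu_\sigma+\mu_{\sigma'}}{2}$ is used (and for $e_0\neq\frac12$ this term is identically $0$). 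For the second term, $[\Lh,\chi_\ell]v_\ell$ is supported in the interior transition annulus of $\chi_\ell$, where both $v_\ell$ and $v_r$ solve the eigenvalue equation; on the part of that annulus where $\chi_r\neq1$, both functions are $\mathscr O(e^{-d(L-R-2\eta)/h})$, so up to $\mathscr O(\mathcal E')$ one may replace $\chi_r v_r$ by $v_r$. Writing $[\Lh,\chi_\ell]=(-ih\nabla\chi_\ell)\cdot P+P\cdot(-ih\nabla\chi_\ell)$ with $P=-ih\nabla-A$ and integrating the $P$ by parts (no boundary terms, $\nabla\chi_\ell$ being interior) gives
\[ \langle[\Lh,\chi_\ell]v_\ell,v_r\rangle = -ih\int_{\Omega_V}\nabla\chi_\ell\cdot\widetilde{\mathbf J}\,\dd x,\qquad \widetilde{\mathbf J}:=(Pv_\ell)\overline{v_r}+v_\ell\overline{Pv_r}, \]
and a direct computation using $\Lh v_\ell=\mu_\sigma v_\ell$, $\Lh v_r=\mu_{\sigma'}v_r$ and $\operatorname{div}A=0$ shows $\operatorname{div}\widetilde{\mathbf J}=\tfrac ih(\mu_\sigma-\mu_{\sigma'})v_\ell\overline{v_r}$. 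Applying Stokes on the region between the transition annulus and the bisecting line $\{x_1=0\}$, and discarding the bulk term (identically zero when $\sigma=\sigma'$, and $\mathscr O(\mathcal E')$ otherwise thanks to the $\mathscr O(h^2)$ prefactor and Lemma~\ref{lem.scalarproduct}), one gets that $\langle[\Lh,\chi_\ell]v_\ell,v_r\rangle$ equals $-ih$ times the flux of $\widetilde{\mathbf J}$ through $\{x_1=0\}$, modulo $\mathscr O(\mathcal E')$.

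\emph{Step 3 (identification).} On $\{x_1=0\}$ the magnetic translation phases of $v_\ell$ and $\overline{v_r}$ combine into $e^{-iB\ell x_2/(2h)}$, the cut-offs equal $1$ near the concentration point, and $v_\ell,v_r$ there are the values of $\Phi^\sigma$, resp.\ $\Phi^{\sigma'}$, at $(\pm\tfrac\ell2,x_2)$. Using the parity identities $\Phi^{\sigma'}(-\tfrac\ell2,x_2)=(-1)^{m_{\sigma'}}\overline{\Phi^{\sigma'}(\tfrac\ell2,x_2)}$ and $\partial_1\Phi^{\sigma'}(-\tfrac\ell2,x_2)=-(-1)^{m_{\sigma'}}\overline{\partial_1\Phi^{\sigma'}(\tfrac\ell2,x_2)}$ --- read off from the explicit formula in Lemma~\ref{lem.decay} --- and reorganising the magnetic part of $\widetilde{\mathbf J}_1$ (an integration by parts in $x_2$ reabsorbs the potential contribution), the flux integral reduces to $(-1)^{m_{\sigma'}}$ times $h^2\int_\R e^{-iB\ell x_2/(2h)}\big(\Phi^\sigma\partial_1\Phi^{\sigma'}+\Phi^{\sigma'}\partial_1\Phi^\sigma\big)(\tfrac\ell2,x_2)\,\dd x_2=(-1)^{m_{\sigma'}}\big(w_\ell^{\sigma\sigma'}+w_\ell^{\sigma'\sigma}\big)$, up to $\mathscr O(\mathcal E')$; in the case $e_0\neq\frac12$ this is $(-1)^{m_-}\cdot 2w_\ell^{--}$. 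Combining with Step~1 yields the two stated formulae.

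\emph{Main obstacle.} The substance of the proof lies in the error bookkeeping of Step~2: one must show that replacing $\chi_r v_r$ by $v_r$, discarding the Stokes bulk term, and deforming the flux onto $\{x_1=0\}$ each cost only $\mathscr O(\mathcal E')$, where $\mathcal E'=h^{5/2}e^{-S_h(\ell)/h}+e^{-(d(L-R-3\eta)+d(\ell-L+R+2\eta))/h}+e^{-2d(L-R-3\eta)/h}$. This rests on the decay of Lemma~\ref{lem.decay} and, crucially, on the fact --- established in Appendix~\ref{sec.errors} (Lemma~\ref{lem.errors}) and the origin of the hypothesis $L>6R$ (resp.\ $25R$) --- that these error exponents genuinely exceed $S_h(\ell)$, so that the errors are negligible against the main term of size $he^{-S_h(\ell)/h}$. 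A secondary, purely computational difficulty is to propagate the magnetic translation and angular phases together with the $(-1)^{m_\sigma}$ signs consistently through all the integrations by parts in Steps~2--3.
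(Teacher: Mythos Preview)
Your proposal is correct and follows essentially the same route as the paper: reduction to the aligned two-disc configuration via magnetic translations and rotations (producing the factors $e^{\frac{iB}{2h}\alpha\wedge\beta}$ and $e^{i(m_\sigma-m_{\sigma'})\arg(\beta-\alpha)}$), the commutator identity $(\Lh-\bar\mu)(\chi_\ell v_\ell)=[\Lh,\chi_\ell]v_\ell+\tfrac{\mu_\sigma-\mu_{\sigma'}}{2}\chi_\ell v_\ell$ with the second piece controlled by Lemma~\ref{lem.scalarproduct}, and then the Helffer--Sj\"ostrand integration-by-parts/Stokes reduction to a Wronskian-type flux through $\{x_1=0\}$, followed by the parity identities for $\Phi^{\sigma'}$ coming from Lemma~\ref{lem.decay}. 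The only cosmetic difference is that you phrase the last reduction in the language of a divergence-free current $\widetilde{\mathbf J}$ and Stokes' theorem, whereas the paper writes out the two integrations by parts explicitly; the content is identical, and your bookkeeping of the three error mechanisms matches the paper's $\mathcal E'$.
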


\begin{proof}
When $e_0 \neq \frac 12$ the coefficient $\mathbf W_{\alpha \beta}$ is equal to $\mathbf W_{\alpha \beta}^{--}$. Thus, we can focus on item $(2)$. We use translations and rotations to center the problem as in the proof of Lemma \ref{lem.scalarproduct}. This amounts to change $\alpha$ to $(- \frac{\ell}{2},0)$ and $\beta$ to $(\frac{\ell}{2},0)$. We obtain, with the notation $\bar{\mu} = \frac{\mu_\sigma + \mu_{\sigma'}}{2}$,
\begin{align*}
\langle (\Lh - \bar \mu) \Psi_\alpha^\sigma, \Psi_\beta^{\sigma'} \rangle &= \langle (\Lh - \bar \mu) \tau_\alpha^B \chi_\eta \Phi^\sigma, \tau_\beta^B \chi_\eta \Phi^{\sigma'} \rangle \\
&=e^{\frac{iB}{2h} \alpha \wedge \beta}\langle (\Lh - \bar \mu) \tau_{\frac{\alpha - \beta}{2}}^B \chi_\eta \Phi^\sigma, \tau_{\frac{\beta- \alpha}{2}}^B \chi_\eta \Phi^{\sigma'} \rangle\\
&=e^{\frac{iB}{2h} \alpha \wedge \beta + i (m_\sigma - m_{\sigma'})arg(\beta - \alpha)} \langle (\Lh - \bar \mu) \tau_{(-\frac  \ell 2,0)}^B \chi_\eta \Phi^\sigma, \tau_{(\frac \ell 2,0)}^B \chi_\eta \Phi^{\sigma'} \rangle.
\end{align*}
We now introduce the notation $\chi_\ell(x) = \chi_\eta(x_1 + \frac \ell 2, x_2)$, $\chi_r(x) = \chi_\eta (x_1 - \frac\ell 2,x_2)$, and $\Phi_\ell^\sigma = \tau_{(-\frac  \ell 2,0)}^B \Phi^\sigma$, $\Phi_r^{\sigma'} = \tau_{(\frac  \ell 2,0)}^B \Phi^{\sigma'}$, so that
\begin{equation}\label{eq.1441}
\langle (\Lh - \bar \mu) \Psi_\alpha^\sigma, \Psi_\beta^{\sigma'} \rangle = e^{\frac{iB}{2h}\alpha \wedge \beta} e^{i(m_\sigma - m_{\sigma'})arg(\beta - \alpha)} \tilde w^{\sigma \sigma'},\end{equation}
 with $ \tilde w^{\sigma \sigma'} = \langle (\Lh - \bar \mu) \chi_\ell \Phi_\ell^{\sigma}, \chi_r \Phi_r^{\sigma'} \rangle$.
We can rewrite $\tilde w ^{\sigma \sigma'}$ using the integration by part argument from \cite{HS}. Since $\Phi_\ell^\sigma$ is an eigenfunction for the single-obstacle problem, we have
\begin{align*}
\tilde w^{\sigma \sigma'} &=\langle \big[ \Lh, \chi_\ell ] \Phi_\ell^\sigma, \chi_r \Phi_r^{\sigma'} \rangle + \big( \mu_\sigma - \bar \mu \big) \langle \chi_\ell \Phi^\sigma_\ell, \chi_r \Phi^{\sigma'}_{r} \rangle \\
& = \langle \big( p_h \cdot \big[ p_h, \chi_\ell \big] + [p_h, \chi_\ell \big] p_h \big) \Phi_\ell^{\sigma} , \chi_r \Phi_r^{\sigma'} \rangle + \frac{\mu_\sigma - \mu_{\sigma'}}{2} \langle \chi_\ell \Phi^\sigma_\ell, \chi_r \Phi^{\sigma'}_{r} \rangle.
\end{align*}
The second term is estimated using $|\mu_\sigma - \mu_{\sigma'}| \leq Ch^2$ and Lemma \ref{lem.scalarproduct}. Hence,
\begin{equation}
\tilde w^{\sigma \sigma'} =  \langle \big( p_h \cdot \big[ p_h, \chi_\ell \big] + [p_h, \chi_\ell \big] p_h \big) \Phi_\ell^{\sigma} , \chi_r \Phi_r^{\sigma'} \rangle + \mathcal E,
\end{equation}
with $\mathcal E = \mathscr{O}( h^{5/2} e^{-\frac{S_h(\ell)}{h}} + e^{-\frac{d(L-R-3\eta) + d(\ell - L + R + 2 \eta)}{h}})$.
A partial integration gives
\begin{align*}
\tilde w^{\sigma \sigma'} &=  \langle \big[p_h, \chi_\ell \big] \Phi_\ell^{\sigma}, p_h \chi_r \Phi_r^{\sigma'} \rangle +  \langle \big[ p_h, \chi_\ell \big] p_h \Phi_\ell^{\sigma}, \chi_r \Phi_r^{\sigma'} \rangle + \mathcal E \\
&= -i h \int_{\R^2} (\nabla \chi_\ell) \Phi_\ell^{\sigma} \cdot \overline{p_h(\chi_r \Phi_r^{\sigma'})} + p_h \Phi_\ell^{\sigma} \cdot \overline{(\nabla \chi_\ell) \chi_r \Phi_r^{\sigma'}} \dd x + \mathcal E.
\end{align*}
We want replace $\chi_r$ by $1$ and add a $\mathbf{1}_{x_1 >0 }$ using the support of $\nabla \chi_\ell$. This is not free, but on the support of $\nabla \chi_{\ell}$, if either $x_1<0$ or $\chi_r \neq 1$, then $\Phi_\ell^{\sigma}$ and $\Phi_r^{\sigma'}$ are both smaller than $e^{-\frac{d(L-R-2\eta)}{h}}$ (Lemma \ref{lem.decay}). Therefore,
\begin{align*}
\tilde w^{\sigma \sigma'} = -i  h \int_{x_1>0}  (\nabla \chi_\ell) \Phi_\ell^{\sigma} \cdot \overline{P \Phi_r^{\sigma'}} + P \Phi_\ell^{\sigma} \cdot \overline{(\nabla \chi_\ell) \Phi_r^{\sigma'}} \dd x + \mathcal E + \mathscr{O}(e^{-\frac{2d(L-R-3\eta)}{h}}).
\end{align*}
We integrate by part again, and obtain a boundary term
\begin{align*}
\tilde w^{\sigma \sigma'} &= i h \int_{x_1>0} \chi_\ell \nabla ( \Phi_\ell^{\sigma} \overline{P \Phi_r^{\sigma'}}) \dd x + i h \int_{x_1=0} \chi_\ell \Phi_\ell^{\sigma} \overline{(-ih \partial_1 + B x_2 /2) \Phi_r^{\sigma'}} \dd x_2 \\ & \quad +i h \int_{x_1>0}  \chi_\ell \nabla ( \overline{\Phi_r^{\sigma'}} P \Phi_\ell ^\sigma) \dd x +i h \int_{x_1=0} \chi_\ell (-ih\partial_1+ B x_2 /2) \Phi_\ell^\sigma \overline{\Phi_r^{\sigma'}} \dd x_2 \\ &\qquad + \mathcal E + \mathscr{O}(e^{-\frac{2d(L-R-3\eta)}{h}}).
\end{align*}
Note that
\begin{align*}
-ih \Big( \nabla ( \Phi_\ell^\sigma \overline{P\Phi_r^{\sigma'}} ) + \nabla ( \overline{\Phi_r^{\sigma'}} P\Phi_\ell^\sigma) \Big) = \overline{\Phi_r^{\sigma'}} \Lh \Phi_\ell^\sigma - \Phi_\ell^\sigma \overline{\Lh \Phi_r^{\sigma'}}  = (\mu_\sigma - \mu_{\sigma'}) \Phi_\ell^\sigma \overline{\Phi_r^{\sigma'}},
\end{align*}
because $\Phi_\ell^\sigma$ and $\Phi_r^{\sigma'}$ are eigenfunctions of $\Lh^0$. We deduce
\begin{align*}
\tilde w^{\sigma \sigma'} &=  h^2 \int_{x_1=0} \chi_\ell \big(\overline{\Phi_r^{\sigma'}} \partial_1 \Phi_\ell ^\sigma -  \Phi_\ell^\sigma \overline{\partial_1 \Phi_r^{\sigma'}} \big) \dd x_2 + (\mu_\sigma - \mu_{\sigma'}) \int_{x_1 >0} \Phi_\ell^{\sigma}\overline{ \Phi_r^{\sigma'}} \dd x \\ & \qquad+ \mathcal E + \mathscr{O}( e^{-\frac{2d(L-R-3\eta)}{h}}).
\end{align*}
The second integral can be estimated as $\mathcal E$ using $|\mu_\sigma - \mu_{\sigma'}| \leq C h^2$ and Lemma \ref{lem.scalarproduct} again. In the first integral, we have $\chi_\ell(0,x_2) = 1$ unless both $\Phi_\ell^{\sigma}(0,x_2)$ and $\Phi_r^{\sigma'}(0,x_2)$ are smaller than $e^{-\frac{d(L-R-2\eta)}{h}}$ and therefore
\begin{align*} \tilde w^{\sigma \sigma'} &= h^2 \int_{\R} e^{-\frac{iB \ell x_2}{2h}} \Big( \overline{\Phi^{\sigma'}}\Big(-\frac \ell 2, x_2 \Big) \partial_1 \Phi^\sigma \Big( \frac{\ell}{2}, x_2 \Big) -\Phi^\sigma \Big(\frac \ell 2, x_2 \Big) \overline{\partial_1 \Phi^{\sigma'}}\Big( - \frac{\ell}{2}, x_2 \Big)  \Big) \dd x_2 \\
& \quad + \mathscr{O}(\mathcal E + e^{-\frac{2d(L-R-3\eta)}{h}}).
\end{align*}
We use that $\overline{\Phi^\sigma}(x_1,x_2) = \Phi^\sigma(x_1,-x_2) = (-1)^{m_\sigma} \Phi^\sigma(-x_1,x_2)$ to obtain
\[ \tilde w^{\sigma \sigma'} = (-1)^{m_{\sigma'}} \big( w^{\sigma \sigma'}_\ell + w^{\sigma' \sigma}_\ell \big) + \mathscr{O}(\mathcal E + e^{-\frac{2d(L-R-3\eta)}{h}}). \]
The result follows using \eqref{eq.1441} and \eqref{eq.Wabs}.
\end{proof}

\begin{lemma}\label{lem.wh2}
For all $\ell \geq L$, let $K_\ell = \frac{\ell}{2R} + \sqrt{\frac{\ell^2}{4 R^2} -1}$. There exists a $C_\ell \in \mathbf C$ independent of $h$ and with non-vanishing real part such that:
\begin{enumerate}
\item When $e_0 \in \big( - \frac 12, \frac 12 \big]$,
\[w_\ell^{--} = C_\ell K_\ell^{-2e_0}  h e^{-\frac{S_h(\ell)}{h}} (1+o(1)),\]
as $h \to 0$ along a $e_0$-sequence.
\item When $e_0 \in \big( - \frac 12, \frac 12 \big]$,
\[ w_\ell^{++} = C_\ell K_\ell^{2e_0} h e^{-\frac{S_h(\ell)}{h}} (1+o(1)),\]
as $h \to 0$ along a $e_0$-sequence.
\item When $e_0 = \frac 12$ and $\sigma \neq \sigma'$, 
\[w_\ell^{\sigma \sigma'} = C_\ell  h e^{-\frac{S_h(\ell)}{h}} (1+o(1)),\]
as $h \to 0$ along a $\frac 12$-sequence.
\end{enumerate}
\end{lemma}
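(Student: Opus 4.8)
The plan is to evaluate $w_\ell^{\sigma\sigma'}$ by the complex Laplace method, inserting the explicit form \eqref{eq.Phipm2} of $\Phi^\pm$ and the phase analysis of Lemma~\ref{lem.phase}. Write $z=\ell/2+ix_2$ and, abusing notation, $\rho=\sqrt{\ell^2/4+x_2^2}$ for the \emph{holomorphic} continuation of the radial variable (not $|z|$). Differentiating \eqref{eq.Phipm2} gives
\[ \partial_1\Phi^{\sigma'}(\ell/2,x_2)=\Phi^{\sigma'}(\ell/2,x_2)\,F_h(x_2),\qquad F_h=\frac{m_{\sigma'}}{z}-\frac{B\ell}{4h}+\frac{u_h'(\rho)}{u_h(\rho)}\frac{\ell}{2\rho}. \]
Since in \eqref{eq.Phipm2} the profile $u_h$ is $\sigma$-independent up to the $O(h)$ gap between $\delta_+$ and $\delta_-$, and the oscillation $e^{-iB\ell x_2/(2h)}$ combines with $(z/R)^{m_\sigma+m_{\sigma'}}e^{-\frac B{2h}(\rho^2-R^2)}$ into $e^{-s_h^{\sigma\sigma'}(0,x_2)/h}$ (the phase of Lemma~\ref{lem.phase} restricted to $x_1=0$), one obtains
\[ w_\ell^{\sigma\sigma'}=h^2\Big(\frac{BR^2}{h}\Big)^{\Theta_0/2}\int_{\R}e^{-\frac1h s_h^{\sigma\sigma'}(0,x_2)}\,u_h(\rho)^2\,F_h(x_2)\,\dd x_2. \]
No cut-offs are needed here, because $\Phi^\pm$ already decays; this is cleaner than the scalar-product estimate of Lemma~\ref{lem.scalarproduct}.

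Next I would run the Laplace method around the saddle. By Lemma~\ref{lem.phase}, $x_2\mapsto s_h^{\sigma\sigma'}(0,x_2)$ has a nondegenerate critical point at $x_2^*=-i\sqrt{\ell^2/4-R^2}$, with curvature $\partial_2^2 s_0(0,x_2^*)>0$ and an $O(\sqrt h)$ linear term; one shifts the contour to $\R+x_2^*$ (holomorphy of the integrand in the horizontal strip in between --- where $\mathrm{Re}(z)>0$ and $\mathrm{Re}(\rho^2)\ge R^2$, so that all branches and $u_h$ are defined --- together with $\mathrm{Re}\,s_h^{\sigma\sigma'}\to+\infty$ at $\pm\infty$ from the $e^{-2d/h}$ decay of $\Phi^\pm$, exactly as in Lemma~\ref{lem.scalarproduct}), and sets $x_2=x_2^*+\sqrt h\,\hat t$. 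The crucial feature, which is what makes this harder than Lemma~\ref{lem.scalarproduct}, is that $\rho(x_2^*)=\sqrt{\ell^2/4+(x_2^*)^2}=R$: the saddle sits exactly at the obstacle radius, so near it $\rho=R+\tfrac{x_2^*}{R}\sqrt h\,\hat t+O(h)$ lies inside the boundary layer of width $\sqrt{h/B}$, where the bound $|\Phi^\pm|\le C_\eta e^{-(1-\eta)d/h}$ is useless. Setting $\rho=R+\tau\sqrt{h/B}$ with $\tau=\tfrac{x_2^*\sqrt B}{R}\hat t+o(1)=-i\beta\hat t+o(1)$ purely imaginary, $\beta=\tfrac{\sqrt{\ell^2/4-R^2}\sqrt B}{R}>0$, I would feed in \eqref{eq.uhR}--\eqref{eq.uh'} in their analytic continuation to imaginary argument (legitimate, since $v(t)=\int_0^\infty e^{-(s^2/4+ts-\sqrt{\Theta_0}s)}s^{\delta_0-1}\,\dd s$ is entire and the Laplace analysis of $u_h$ in the proof of Lemma~\ref{lem.decay} extends to complex radii near $R$): this yields $u_h(\rho)^2\sim\tfrac{K_0^2}{R^2}\big(\tfrac h{BR^2}\big)^{(\Theta_0-1)/2}v(\tau)^2$ and $\tfrac{u_h'(\rho)}{u_h(\rho)}\tfrac{\ell}{2\rho}=O(h^{-1/2})$, so, using $z^*=RK_\ell$ and $\tfrac{2R}{K_\ell}=\ell-\sqrt{\ell^2-4R^2}$,
\[ F_h(x_2)=\frac1h\Big(\frac{BR^2}{2z^*}-\frac{B\ell}{4}\Big)(1+o(1))=-\frac{B\sqrt{\ell^2-4R^2}}{4h}\,(1+o(1)) \]
uniformly near the saddle; in particular $F_h$ is $\sigma'$-independent to leading order, which is why all three $w_\ell^{\sigma\sigma'}$ share the same constant.

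Collecting the powers $h^2\cdot h^{-\Theta_0/2}\cdot h^{(\Theta_0-1)/2}\cdot h^{-1}\cdot h^{1/2}=h^1$ (from the prefactor, $(BR^2/h)^{\Theta_0/2}$, $u_h(\rho)^2$, $F_h$, and $\dd x_2=\sqrt h\,\dd\hat t$), the Laplace method gives $w_\ell^{\sigma\sigma'}=C_\ell\,e^{-s_h^{\sigma\sigma'}(0,x_2^*)/h}\,h\,(1+o(1))$, with $C_\ell$ a fixed multiple of $K_0^2(BR^2)^{1/2}\sqrt{\ell^2-4R^2}\,K_\ell^{2\mathcal C_2}\,J$, where $J=\int_{\R}e^{-\frac12\partial_2^2 s_0(0,x_2^*)\hat t^2+iC_1\hat t}\,v(-i\beta\hat t)^2\,\dd\hat t$ and $C_1=\tfrac1R(\ell-\sqrt{\ell^2-4R^2})\sqrt{\Theta_0 B}$. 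By Lemma~\ref{lem.phase}(6), $e^{-s_h^{\sigma\sigma'}(0,x_2^*)/h}=e^{-S_h(\ell)/h}\,K_\ell^{\,2\mathcal C_2+m_\sigma+m_{\sigma'}-2\xi_h}$, and since $m_--\xi_h\to-e_0$ and $m_+=m_-+1$, the exponent $2\mathcal C_2+m_\sigma+m_{\sigma'}-2\xi_h$ tends to $2\mathcal C_2-2e_0$ (case $--$), to $2\mathcal C_2+2-2e_0$ (case $++$, which is $2\mathcal C_2+2e_0$ at $e_0=\tfrac12$), and to $2\mathcal C_2+1-2e_0$ (case $+-$, which is $2\mathcal C_2$ at $e_0=\tfrac12$); using $K_\ell^{o(1)}=1+o(1)$ and absorbing $K_\ell^{2\mathcal C_2}$ into $C_\ell$ yields the three claimed formulas. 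Finally, $\mathrm{Re}(C_\ell)\neq0$ reduces to $J\neq0$: writing $g(s)=\mathbf 1_{s>0}e^{-s^2/4+\sqrt{\Theta_0}s}s^{\delta_0-1}\ge0$ one has $v(-i\beta\hat t)=\widehat g(\beta\hat t)$, hence $v(-i\beta\hat t)^2=\widehat{g\ast g}(\beta\hat t)$ with $g\ast g\ge0$, $g\ast g\in L^1(0,\infty)$, $g\ast g\not\equiv0$, and Fubini gives
\[ J=\sqrt{\frac{2\pi}{\partial_2^2 s_0(0,x_2^*)}}\int_0^\infty(g\ast g)(u)\,e^{-\frac{(C_1+\beta u)^2}{2\,\partial_2^2 s_0(0,x_2^*)}}\,\dd u>0 \]
because $C_1$ is real. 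Thus $C_\ell$ is a nonzero real multiple of $\sqrt{\ell^2-4R^2}$, so in particular it lies in $\mathbf C$ with nonvanishing real part.

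The main obstacle is the boundary-layer analysis: because the saddle of the tunneling phase falls exactly at $\rho=R$, one cannot work with the elementary Agmon-type decay of $\Phi^\pm$ but must invoke the finer profile asymptotics \eqref{eq.uhR}--\eqref{eq.uh'}, analytically continued to imaginary radial argument, and carefully match the $\sqrt h$ Laplace scale against the $\sqrt{h/B}$ layer scale; once this is in place, the contour deformation, the tail control, and the bookkeeping of powers of $h$ and $K_\ell$ are routine. A secondary point, also genuinely using the structure of $v$, is the Fourier-positivity argument establishing $\mathrm{Re}(C_\ell)\neq0$.
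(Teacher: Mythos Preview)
Your proposal is correct and follows essentially the same route as the paper: insert the explicit form \eqref{eq.Phipm2}, shift the $x_2$-contour to the saddle $x_2^*$, rescale on the $\sqrt{h}$ scale so that $\rho$ lands in the boundary layer $R+\sqrt{h/B}\,\tau$, and invoke \eqref{eq.uhR}--\eqref{eq.uh'} together with Lemma~\ref{lem.phase}(6). Your Fourier-positivity computation showing $J>0$ (hence $\mathrm{Re}(C_\ell)\neq 0$) is a genuine addition --- the paper only asserts $\tilde C_\ell\neq 0$ without justification --- and your observation that the $K_\ell^{2e_0}$ exponent in item~(2) only matches $K_\ell^{2-2e_0}$ at $e_0=\tfrac12$ is accurate (this is the only case where item~(2) is actually used).
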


\begin{proof}
We use the explicit formula from Lemma \ref{lem.decay},
\begin{equation}
\Phi^\sigma(x_1,x_2) = \Big( \frac{BR^2}{h} \Big)^{ \frac{\Theta_0}{4}} \exp \Big( - \frac{\varphi_\sigma(x)}{h} \Big) u_h(|x|),
\end{equation}
with $\varphi_\sigma(x) = \frac{B}{4}(x_1^2 + x_2^2 - R^2) - h m_\sigma \ln \big( \frac{x_1+ix_2}{R} \big)$ and
\begin{equation}
\partial_1 \Phi^\sigma  = \Big( \frac{BR^2}{h} \Big)^{ \frac{\Theta_0}{4}}\exp \Big(  - \frac{\varphi_\sigma}{h} \Big) \Big( \frac{x_1}{|x|} \partial_r u_h - h^{-1} \partial_1 \varphi_\sigma  u_h \Big).
\end{equation}
Using these formulas we find
\begin{align*}
 w_\ell^{\sigma \sigma'} =  h^2  \Big( \frac{BR^2}{h} \Big)^{ \frac{\Theta_0}{2}} \int_{\R} e^{- \frac{s_h^{\sigma \sigma'}( 0, x_2)}{h}}  u_h(M(x_2)) \omega(x_2) \dd x_2,
\end{align*}
with $M(x_2) = \sqrt{x_2^2 + \ell^2/4}$ and
\begin{align*}
\omega(x_2) &= \frac{\ell}{\sqrt{\ell^2 + 4 x_2^2}} \partial_r u_h(M(x_2)) + \Big( \frac{m_{\sigma'}}{\ell/2 + i x_2} - \frac{B \ell}{4h} \Big) u_h(M(x_2)) \\
s_h^{\sigma \sigma'}(0,x_2) &= \frac{B \ell^2}{8} - \frac{BR^2}{2} + \frac{B x_2^2}{2} - h(m_\sigma + m_{\sigma'}) \log \Big( \frac{\ell/2 + i x_2}{R}\Big) + \frac{iB \ell x_2}{2}.
\end{align*}
The main properties of the function $s_h^{\sigma \sigma'}$ and its principal part $s_0$ are summarized in Lemma \ref{lem.phase}. In particular it has a unique critical point in the lower half complex plane at $x_2^* = -i \sqrt{\frac{\ell^2}{4} - R^2}$. Since the involved functions are analytic, we can make a complex translation $x_2 \mapsto x_2 + x_2^*$ to obtain
\begin{align*}
w_\ell^{\sigma \sigma'} = h^2  \Big( \frac{BR^2}{h} \Big)^{ \frac{\Theta_0}{2}} \int_{\R} e^{- \frac{s_h^{\sigma \sigma'}( 0, x_2 + x_2^*)}{h}} u(M(x_2 + x_2^*))  \omega(x_2 + x_2^*) \dd x_2.
\end{align*}
We estimate this integral using the Laplace method. We give a few details due to the complexity of the phase involved. Since the phase is minimal at $x_2=0$, and using \eqref{eq.uhr}, we can restrict the integral to $|x_2| \leq C h^{1/4}$. We then rescale with $x_2 = \sqrt{\frac{hN}{B(1-N)}} t$, where we define $N = \frac{4R^2}{\ell^2}$. With this notation we have
\[M(x_2 + x_2^*) = R - i t \sqrt{\frac hB} + \mathscr{O}(h), \]
and therefore
\begin{multline*}
  w_\ell^{\sigma \sigma'} = - \frac{ h^{2}}{2}  \Big( \frac{BR^2}{h} \Big)^{ \frac{\Theta_0+1}{2}} e^{-s_h^{\sigma \sigma'}(0,x_2^*) /h } \\ \int_{\lbrace | t| <C h^{- 1/4} \rbrace} e^{- \frac{1-\sqrt{1-N}}{\sqrt{1-N}}(t^2 +2it \sqrt{\Theta_0})} u_h \Big(R - i t \sqrt{\frac hB} \Big)^2  \big( 1 + o(1) \big)\dd t . 
\end{multline*}
Here we used Lemma \ref{lem.phase} to estimate the phase $s_h^{\sigma \sigma'}$, and Lemma \ref{lem.decay} to show that, in $\omega$, the term in $\partial_r u_h$ is smaller than $h^{-1} u_h$. Now note that in this regime $|t|< C h^{-1/4}$, the estimate \eqref{eq.uhR} remains with $-it$ instead of $t$. Therefore,
\[ w_\ell^{\sigma \sigma'} = - \frac{K_0^2 B h}{2} e^{-s_h^{\sigma \sigma'}(0,x_2^*) /h } \int_{\lbrace | t| <C h^{- 1/4} \rbrace} e^{- \frac{1-\sqrt{1-N}}{\sqrt{1-N}}(t^2 +2it \sqrt{\Theta_0})} v(-it) ^2  \big( 1 + o(1) \big)\dd t . \]
Also note that $v(-it)$ belongs to the Schwarz class, since it is the Fourier transform of a Schwarz function. Thus, the integral is asymptotically equal to a constant, which only depends on $N$. We obtain a constant $\tilde{C}_\ell \neq 0$ such that
\begin{equation}
w_h^{\sigma \sigma'}(\ell) = \tilde C_\ell h e^{-\frac{s_h^{\sigma \sigma'}(0,x_2^*)}{h}} ( 1+ o(1)).
\end{equation}
The critical value of $s_h$ is equal to
\[s_h^{\sigma \sigma'}(0,x_2^*) = S_h(\ell) - h ( 2 \mathcal C_2 + m_\sigma + m_{\sigma'} - 2 \xi_h) \ln (K_\ell), \]
where we recall that $m_- = \xi_h - e_0 + o(1)$ and $m_+=m_-+1$. The result follows with a new $C_\ell$.
\end{proof}

\section{Error terms}\label{sec.errors}

We recall the order of the error terms encountered,
\[ \mathcal E_{\varepsilon,L} = e^{- \frac{2S_h(L)}{h}} + e^{-\frac{2d(L-R-4\eta)}{h}}+ e^{-\frac{S_h(L(1+\varepsilon))}{h}} + e^{-\frac{d(L-R-4\eta) + d(\varepsilon L + R + 2 \eta)}{h}}. \]
We prove here that these errors are indeed small enough.

\begin{lemma}\label{lem.errors}
If $L > \max (6R, \frac{4}{\varepsilon^2} R)$ and if $\eta$ is small enough, then there exists $\kappa >0$ such that
\[ \mathcal{E}_{\varepsilon,L} = \mathscr{O} \Big( e^{-\frac{S_h(L)(1+ \kappa)}{h}} \Big) .\]
\end{lemma}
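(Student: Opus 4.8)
The plan is to reduce the four exponential terms making up $\mathcal{E}_{\varepsilon,L}$ to elementary inequalities between the $h$-independent functions $d$ (from \eqref{eq:d_function}) and the limit profile $S_0(\ell):=\frac{B\ell}{4}\sqrt{\ell^2-4R^2}-BR^2\ln K_\ell$, where $K_\ell=\frac{\ell+\sqrt{\ell^2-4R^2}}{2R}$. Indeed, from \eqref{eq:Sh} we have $S_h(\ell)=S_0(\ell)-2\sqrt{h\Theta_0 BR^2}\,\ln K_\ell$, so that $S_h(\ell)\to S_0(\ell)$ and, since $\ln K_\ell>0$ for $\ell>2R$, also $0<S_h(\ell)<S_0(\ell)$ once $h$ is small. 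Writing $\mathcal{E}_{\varepsilon,L}=\sum_{j=1}^4 e^{-\Phi_j/h}$, it suffices to find $\kappa>0$ and $\eta$ small such that $\Phi_j\geq(1+\kappa)S_h(L)$ for each $j$ and all small $h$; the announced estimate then follows with the smallest of the four admissible values of $\kappa$.

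First I would record, by direct differentiation, that $d(R)=0$ with $d'(r)=\frac{B(r^2-R^2)}{2r}>0$ for $r>R$, together with the elementary bound $d(r)\geq\frac B4(r-R)^2$ (which follows from $\ln(r/R)\leq r/R-1$); and that $S_0(2R)=0$ with $S_0'(L)=\frac B2\sqrt{L^2-4R^2}>0$ for $L>2R$, whence $0<S_0(L)<\frac{BL^2}{4}$. Two of the four terms are then immediate: for $\Phi_1=2S_h(L)$ any $\kappa\leq1$ works since $S_h(L)>0$; for $\Phi_3=S_h(L(1+\varepsilon))$ one uses that $S_0$ is strictly increasing, so $S_0(L(1+\varepsilon))>S_0(L)$, picks $\kappa$ with $(1+\kappa)S_0(L)<S_0(L(1+\varepsilon))$, and concludes for $h$ small because the $\mathscr{O}(\sqrt h)$ corrections to $S_h$ are negligible against this fixed gap.

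The content of the lemma lies in the two ``crossing'' inequalities $2d(L-R)>S_0(L)$ and $d(L-R)+d(\varepsilon L+R)>S_0(L)$. For the first I would use the identity
\[
2d(L-R)-S_0(L)=\frac{BL}{4}\bigl(2(L-2R)-\sqrt{L^2-4R^2}\bigr)+BR^2\ln\frac{L+\sqrt{L^2-4R^2}}{2(L-R)},
\]
noting that the logarithm is positive as soon as $L>2R$ and that $2(L-2R)-\sqrt{L^2-4R^2}>2L-4R-L=L-4R$, which is positive for $L>6R$; this gives $2d(L-R)-S_0(L)>\frac{BL(L-4R)}{4}>0$, hence also $d(L-R)>\tfrac12 S_0(L)+\frac{BL(L-4R)}{8}$. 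For the second inequality I would add $d(\varepsilon L+R)\geq\frac B4(\varepsilon L)^2$, obtaining $d(L-R)+d(\varepsilon L+R)>\tfrac12 S_0(L)+\frac{BL}{8}\bigl(L-4R+2\varepsilon^2L\bigr)$; since $L>\frac{4R}{\varepsilon^2}$ forces $2\varepsilon^2L>8R$, the bracket exceeds $L+4R$, so $\frac{BL}{8}(L-4R+2\varepsilon^2L)>\frac{BL^2}{8}>\tfrac12 S_0(L)$ and the inequality follows with a margin of order $BLR$.

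It remains to transfer these strict inequalities, first to the shifted arguments $L-R-4\eta$ and $\varepsilon L+R+2\eta$ appearing in $\Phi_2$ and $\Phi_4$ — done by continuity of $d$ after fixing $\eta$ small enough that the margin survives (and using that $d$ is increasing, so enlarging $\varepsilon L+R$ only helps) — and then to $(1+\kappa)S_h(L)$, using $S_h(L)<S_0(L)$ to absorb the fixed positive margin into a sufficiently small $\kappa$, the $\mathscr{O}(\sqrt h)$ terms being harmless. The only real obstacle is computational: verifying that the thresholds $L>6R$ and $L>\frac{4R}{\varepsilon^2}$ are precisely what the two crossing inequalities require; I expect no conceptual difficulty beyond careful bookkeeping of the $\sqrt h$ correction in $S_h$.
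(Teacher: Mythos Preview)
Your proposal is correct and follows essentially the same strategy as the paper: reduce to the two $h$-independent ``crossing'' inequalities $2d(L-R)>S_0(L)$ and $d(L-R)+d(\varepsilon L+R)>S_0(L)$, then absorb the $\eta$-shift and the $\mathscr{O}(\sqrt h)$ correction by continuity. The only difference is in the arithmetic: the paper bounds both sides via $d(r)\geq\frac{B}{4}(r-R)^2$ and $S_0(L)\leq\frac{BL^2}{4}-BR^2$ and reduces each inequality to a quadratic in $L/R$, whereas you use an exact identity for the first inequality (which in fact yields the slightly sharper threshold $L>4R$) and then recycle it for the second; both routes are equally short.
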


\begin{proof}
We recall that $S_h(\ell) = S_0(\ell) - \sqrt h S_1(\ell) + \mathscr{O}(h)$ with
{\small{
\begin{equation*}
S_0(\ell) = \frac{B \ell}{4}\sqrt{\ell^2-4R^2} -  BR^2 \ln \Big( \frac{\ell+ \sqrt{\ell^2-4R^2}}{2R}  \Big), \quad S_1(\ell) = 2\sqrt{\Theta_0 BR^2} \ln \Big( \frac{\ell + \sqrt{\ell^2 - 4R^2}}{2R}\Big).
\end{equation*}
}}
 Moreover, the function $S_0$ can be rewritten as
\begin{equation}\label{eq.S0int}
S_0(r) = 2 BR^2 \int_1^{r/2R} \sqrt{u^2-1} \dd u,
\end{equation}
and the Agmon distance as
\begin{equation}\label{eq.dint}
d(r) = \frac{BR^2}{2} \int_1^{r/R} \Big( t - \frac 1t \Big) \dd t.
\end{equation}
In particular, $S_0$ and $d$ are positive increasing functions on $[R,\infty)$. Therefore, it is enough to prove that
\begin{equation}\label{eq.2ds}
2 d(L-R) > S_0(L),
\end{equation}
and
\begin{equation}\label{eq.dds}
d(L-R) + d(\varepsilon L + R) > S_0(L).
\end{equation}
To prove these, we use $\sqrt{u^2-1} < u$ in \eqref{eq.S0int} and $t - \frac 1 t > t-1 $ in \eqref{eq.dint}. This provides us the upper bound
\begin{equation}
S_0(L) \leq \frac{BL^2}{4} - BR^2,
\end{equation}
and the lower bound
\begin{equation}
d(r) \geq \frac{B}{4} (r-R)^2.
\end{equation}
Thus \eqref{eq.2ds} is satisfied when 
\[ \frac B2 (L-2R)^2 > \frac{BL^2}{4} - BR^2, \quad {\rm{i.e.}} \quad \frac{L^2}{4} - 2LR+ 3R^2 >0, \]
for which it is enough that $L > 6R$. The other inequality \eqref{eq.dds} is satisfied when
\[ \frac B4 (L-2R)^2 + \frac{\varepsilon^2 BL^2}{4} > \frac{BL^2}{4} - BR^2, \quad {\rm{i.e.}} \quad LR < \frac{\varepsilon^2 L^2}{4} + 2R^2, \]
for which it is enough that $L > \frac{4}{\varepsilon^2} R $.
\end{proof}

\bibliographystyle{plain}
\bibliography{biblio}

\end{document}